\newtheorem{theorem}{Theorem}
\newtheorem{definition}{Definition}
\newtheorem{proposition}{Proposition}
\newtheorem{lemma}{Lemma}
\renewcommand{\phi}{\varphi}
\renewcommand{\epsilon}{\varepsilon}
\renewcommand{\leq}{\leqslant}
\renewcommand{\leq}{\leqslant}
\renewcommand{\exp}{e}
\newcommand{\tiling}{\mathcal{T}}
\newcommand{\tileset}{\mathbf{T}}
\newcommand{\slope}{\mathcal{E}}
\newcommand{\ea}{\mathcal{E}}
\newcommand{\e}[1]{\mathcal{E}_{#1}}
\newcommand{\R}[1]{\mathbb{R}^{#1}}
\newcommand{\circulant}[1]{\text{\rm circulant}#1}
\newcommand{\nn}{\mathbb{N}}
\newcommand{\diam}{\text{diam}}
\newcommand{\trans}{^\mathsf{T}}
\newcommand{\imag}{\mathrm{i}}
\newcommand{\tile}[2]{T_{#1,#2}}
\newcommand{\intsquare}[3]{S_{#1,#2,#3}}
\newcommand{\boundary}{\partial\sigma}
\newcommand{\fst}[1]{I(#1)}
\newcommand{\scd}[1]{J(#1)}
\newcommand{\ceil}[1]{\left\lceil #1 \right\rceil}
\newcommand{\floor}[1]{\left\lfloor #1 \right\rfloor}
\newcommand{\candidate}[1]{\sigma_{(#1)}}
\newcommand{\eigenmatrix}[1]{N_{#1}}
\newcommand{\ie}{\emph{i.e.}}
\title{Substitution discrete plane tilings with $2n$-fold rotational symmetry for odd $n$}
\author{Jarkko \textsc{Kari} $^{1}$, Victor H. \textsc{Lutfalla} $^{2\ast}$\\
  \normalsize{$^{1}$ Department of Mathematics, FI-20014 University of Turku, Finland}\\
  \normalsize{$^{2}$ Laboratoire d'Informatique de Paris Nord, UMR CNRS 7030, } \\ \normalsize{Institut Galilée - Université Paris 13, 99 avenue J.B. Clément,} \\ \normalsize{ 93430 Villetaneuse, France}\\
\normalsize{$^\ast$ Corresponding author, email: lutfalla@lipn.univ-paris13.fr, }
}
\date{\today}
\begin{document}

\maketitle

\subsection*{Abstract}
\label{sec:abstract}

We study substitution tilings that are also discrete plane tilings, that is, satisfy a relaxed version of cut-and-projection.
We prove that the Sub Rosa substitution tilings with a $2n$-fold rotational symmetry for odd $n>5$ defined by Kari and Rissanen are not discrete planes --
and therefore not cut-and-project tilings either. We then define new Planar Rosa substitution tilings  with a $2n$-fold rotational symmetry for any odd $n$, and show that
these satisfy the discrete plane condition. The tilings we consider are edge-to-edge rhombus tilings. We give an explicit construction for the 10-fold case, and provide
a construction method for the general case of any odd $n$.

 Our methods are to lift the tilings  and substitutions to $\R{n}$ using the lift operator first defined by Levitov,
 and to study the planarity of substitution tilings in $\R{n}$ using mainly linear algebra, properties of circulant matrices, and trigonometric sums.
For the construction of the Planar Rosa substitutions we additionally use the Kenyon criterion  and a result on De Bruijn multigrid dual tilings.

\paragraph{Keywords:}substitution tilings, discrete planes, cut-and-project tiling, n-fold symmetric tiling, quasiperiodic tilings, rhombus tiling

\paragraph{Acknowledgments.}
We wish to thank Thierry Monteil and Nicolas Bédaride for their help regarding billiard words, and also  Thomas Fernique for his help and proofreading.

\tableofcontents

\section{Introduction}
\label{sec:intro}

A \emph{tiling} is an exact covering of the plane by tiles,
meaning that the union of tiles is the whole plane and the tiles do not overlap,  \emph{i.e.}, they only intersect on their boundaries.
We are interested in aperiodic tilings with long-range order. These are important as models of quasicrystals \cite{baake2013}. From a physicists's point of view, long-range order is characterized by a ``sharp'' diffraction pattern, which mathematically corresponds to a discrete Fourier transform of the tiling \cite{baake2013, baake2017}.

Long-range order is present in
\emph{cut-and-project} tilings \cite{baake2013}, which are tilings that can be seen as the projection of a discrete 2D plane in some higher dimensional space.
A \emph{discrete 2D-surface} in $\R{n}$ is a collection of adjacent \emph{squares} which are the translates of the 2D-facets of the unit hypercube. Such a surface is called \emph{planar} if it approximates a plane, meaning that there exists a two-dimensional real plane of
$\R{n}$ to which the surface has a bounded distance.
Any edge-to-edge rhombus tiling with $n$ edge directions can be lifted into a discrete surface in $\R{n}$. This rhombus tiling is said to be a \emph{discrete plane} tiling if its lifted version is planar.
Let us emphasize that the plane in which the tiling is defined (before lifting it in $\R{n}$) and the 2D plane that is approximated by the discrete surface (after it is lifted in $\R{n}$) are not the same plane.
This property of planarity is less restrictive than the cut-and-project property, which puts a more strict
condition on the distance of the lifted discrete surface to a 2D plane.
In this article we focus on discrete plane tilings instead of the more restrictive cut-and-project case.
Note that the discrete plane condition is enough for so called \emph{essentially discrete} diffraction patterns. This means that from a physicist's viewpoint, both the discrete plane property and the cut-and-project property capture long-range order and are in this sense similar.

Historically, the discrete plane condition was used in \cite{levitov1988} in connection to tilings under local matching rules. In that context, the
terms ``strong'' and ``weak'' local rules
were used for matching rules that force cut-and-project and discrete plane conditions on the admitted tilings, respectively.
The discrete plane condition has also been studied under the name ``planar tilings'' and ``planarity'', see for example \cite{bedaride2015}. We have chosen to use the name ``discrete plane tilings'' or simply ``discrete planes'' to remove the possible confusion between ``planar tilings'' and ``tilings of the plane''. Remark that the name discrete planes is widely used for the discrete approximation of 2D planes in $\R{n}$.

Tilings with long-range order particularly interest us when they have forbidden symmetries, that is, symmetries that are incompatible with a lattice structure.
For example the first quasicrystal observed \cite{shechtman1984} had 5-fold rotational symmetry.
A tiling has \emph{local n-fold rotational symmetry} when the tiling and its image by the rotation of angle $\tfrac{2\pi}{n}$ have the exact same finite patterns up to translations. A tiling has \emph{global n-fold rotational symmetry} when there exist a point such that the tiling is invariant under the rotation of angle $\tfrac{2\pi}{n}$ around that point.

Sometimes a tiling can be generated by a \emph{substitution}: a local inflation-subdivision rule which allows one to replace individually each tile by a set of tiles, so-called
\emph{metatile}, that has the same shape as the initial tile but is larger by a constant scaling factor. Iterating the substitution generates a tiling with a self-similar
structure in different scales.
In addition to providing a way to generate a tiling, substitutions give methods to study the properties of the tiling or a tiling space, making them attractive for mathematical analysis \cite{frank2008}.

The famous Penrose rhombus tilings have all the properties discussed above.
They are discrete plane -- even cut-and-project -- tilings projected from $\R{5}$, but they are also generated by
 a substitution. Penrose tilings are non-periodic: they have local 10-fold rotational symmetries incompatible with periodicity. Penrose tilings
 can be also enforced by local matching rules and, as discussed below, the primitivity of the generating substitution guarantees that Penrose tilings are quasiperiodic.

A natural question arises whether tilings with all the nice properties of Penrose tilings are possible for other orders of rotational symmetries instead of order 10.
In \cite{kari2016} Kari and Rissanen presented the Sub Rosa substitution tiling family. These tilings have $2n$-fold rotational symmetry but are in general not discrete plane tilings, as stated below and proved in Section \ref{sec:subrosa}.

\begin{theorem}
  \label{th:subrosa}
  The Sub Rosa tilings for odd $n>5$ are not discrete plane tilings.
\end{theorem}

The main result of our paper (Section \ref{sec:planar-rosa}) is the construction of a new Planar Rosa family of tilings which are defined by substitutions and have $2n$-fold rotational symmetries for
odd $n\geq 5$. The important new feature is that these tilings are discrete plane tilings.

\begin{theorem}
  \label{th:main}
  For any odd $n\geq 5$, the Planar Rosa $n$ tiling is a  discrete plane substitution tiling with global $2n$-fold rotational symmetry.
\end{theorem}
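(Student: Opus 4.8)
The plan is to establish separately the three asserted properties — that Planar Rosa $n$ is generated by a primitive substitution, that it has global $2n$-fold symmetry, and that it is a discrete plane — the last being the substantial point. The common framework is Levitov's lift: since the tiling is an edge-to-edge rhombus tiling whose edges point in the $n$ directions $e^{\imag k\pi/n}$, $0\le k\le n-1$, each vertex is assigned a point of $\R n$ by reading off the edges of any path from a fixed origin, and the tiling is a discrete plane exactly when the internal component of these lifted vertices (the component orthogonal to the physical plane $\ee\subset\R n$ spanned by the real and imaginary parts of $(e^{\imag k\pi/n})_k$) stays uniformly bounded. I would record once and for all the reduction: for a self-similar tiling the lift of the inflation is a fixed linear map $M$ of $\R n$, and writing a vertex through its hierarchical address gives $\pi_{\ebot}(v)=\sum_{k\ge0}M_{\ebot}^{\,k}\,\pi_{\ebot}(g_{i_k})$ with the $g_{i_k}$ ranging over a finite set of intra-metatile offsets; hence boundedness of all internal components, i.e. planarity, holds as soon as $M$ acts as a contraction on the internal space $\ebot$.

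Second, I would carry out the construction itself. Using the de Bruijn multigrid dual construction I would produce a rhombus substitution with exactly the $n$ prescribed edge directions, and invoke the Kenyon criterion to guarantee that this combinatorial substitution is geometrically realizable with a genuine real inflation factor $\lambda>1$; choosing a seed configuration (a central rosette) invariant under the rotation of angle $\pi/n$ and a substitution rule that commutes with this rotation makes the fixed-point tiling globally invariant under the $2n$-fold rotation, which gives the symmetry claim, while primitivity of the substitution (checked on the finite tile and metatile data) yields a well-defined quasiperiodic tiling and the substitution claim.

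Third — the heart of the matter — I would verify the contraction condition by linear algebra on circulant matrices. The rotation of angle $\pi/n$ acts on $\R n$ by the skew-circulant $C$ with $e_k\mapsto e_{k+1}$ and $e_{n-1}\mapsto-e_0$, so $C^n=-I$ and the eigenvalues of $C$ are the odd powers $\zeta^{j}$ of $\zeta=e^{\imag\pi/n}$; because the substitution commutes with the rotation, its inflation map is $M=P(C)$ for a real polynomial $P$ read off from the edge-production of the substitution, and $M$ is simultaneously diagonalised with eigenvalues $\mu_j=P(\zeta^{j})$ for odd $j$. Here $\e1=\ee$ is the physical plane, carrying $\mu_1=P(\zeta)=\lambda$, and the remaining odd indices $j\in\{3,5,\dots,2n-3\}$ (paired by conjugation $j\leftrightarrow 2n-j$) span the internal space $\ebot$. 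The whole claim therefore reduces to the trigonometric-sum estimates
\[
|\mu_1|=\lambda>1,\qquad |\mu_j|=|P(\zeta^{j})|<1\ \text{ for all odd } j\neq 1,\,2n-1,
\]
which say precisely that $\lambda$ is a Pisot-type element of $\mathbb{Z}[\zeta]$ all of whose internal Galois conjugates lie strictly inside the unit circle.

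I expect these last inequalities to be the main obstacle, and the reason the Sub Rosa family fails (Theorem \ref{th:subrosa}) while Planar Rosa is designed to succeed. For the explicit $10$-fold case ($n=5$) I would simply evaluate the few relevant sums $P(\zeta),P(\zeta^3)$ and their conjugates and check the bounds directly. For general odd $n$ the difficulty is uniformity: one must bound $|P(\zeta^{j})|$ away from $1$ simultaneously for all odd $j\neq\pm1$ and all $n$, which is exactly why the construction must choose the edge-production polynomial $P$ with care. I would control these sums by exploiting their structure as partial trigonometric series — grouping conjugate terms, using the decreasing or alternating pattern of the coefficients forced by the construction, and isolating the dominant $j=1$ term from the rest — so that the single expanding direction is the physical plane and every internal direction is strictly contracted, which by the reduction of the first paragraph yields the discrete plane property and, together with the construction, completes the proof.
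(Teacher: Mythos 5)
Your skeleton matches the paper's framework in outline --- Levitov lift, circulant diagonalization, ``contraction on the internal space implies planarity'' (Proposition~\ref{prop:eigenvalue_planarity}), Kenyon tileability, a rotation-invariant star seed, primitivity --- but there is a genuine gap at exactly the point you yourself flag as ``the main obstacle'': you never construct the substitution, so the inequalities $|\mu_1|>1$ and $|\mu_j|<1$ are never verified. Saying ``the construction must choose the edge-production polynomial $P$ with care'' and that you would ``exploit the decreasing or alternating pattern of the coefficients'' is circular, because choosing those coefficients \emph{is} the theorem; worse, the heuristic you invoke points the wrong way, since the Sub Rosa edgeword has exactly such a decreasing coefficient pattern $(n-1,n-3,\dots,2)$ and by Theorem~\ref{th:subrosa} fails planarity for all odd $n>5$. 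The paper's resolution is not a term-by-term trigonometric estimate at all: it observes (Lemma~\ref{lemma:orthogonal}) that the eigenvalue matrix $\eigenmatrix{n}$ is orthogonal up to the scalar $\sqrt{n}$, so the vector $\gamma=\bigl(\cos\tfrac{(2i+1)\pi}{2n}\bigr)_i$ is mapped to a positive multiple of $(1,0,\dots,0)$; consequently (Lemma~\ref{lem:approxgamma}) \emph{any} abelianized edgeword that tracks the line $\langle\gamma\rangle$ closely enough, far enough from the origin, automatically satisfies $|\lambda_0|>1$ and $|\lambda_i|<1$ for $i\geq 1$. Integer points tracking $\langle\gamma\rangle$ arbitrarily well are then produced as prefixes of a billiard word $\omega$ via Poincar\'e recurrence (Proposition~\ref{prop:tracking_gamma}), and the palindromized prefixes give the candidate substitutions $\candidate{j}$.

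The second missing idea is the coupling between planarity and tileability. Planarity constrains only the letter \emph{frequencies} of the edgeword, while the Kenyon conditions constrain the letter \emph{order}; moreover planarity is obtained only for infinitely many prefix lengths $j$, whereas tileability must be established separately, so one needs both properties to hold for a common $j$. The paper resolves this by taking $\omega$ to be the billiard word of the line $\langle\gamma\rangle+(\tfrac12,\dots,\tfrac12)$: this exact rhombus sequence occurs along a half-line of the De Bruijn dual tiling $P_n(\tfrac12,\dots,\tfrac12)$, whose regularity (Theorem~\ref{thm:Gn_regular}) makes the infinite cone with edgeword $\omega$ tileable, and the counting-function and almost-balancedness machinery of Section~\ref{sec:tileability} transfers this to all metatiles of $\candidate{j}$ for all sufficiently large $j$ (Lemma~\ref{lem:biglemma2}). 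The theorem then follows because ``infinitely many $j$ planar'' intersects ``all large $j$ tileable''. Your proposal uses the de Bruijn construction only as a vague device to ``produce a rhombus substitution'' and gives no mechanism forcing frequency and order constraints to be met simultaneously; that mechanism --- one word $\omega$ doing double duty for planarity and tileability --- is the heart of the paper's proof and is absent from yours.
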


The article is organized as follows. In Section \ref{sec:planar} we introduce specific definitions for discrete plane substitution tilings.
To illustrate these definitions and to familiarize the reader with discrete plane substitutions we present in detail a 10-fold planar substitution tiling. This is the Sub Rosa tiling with $n=5$.
The two key elements to prove Theorems \ref{th:subrosa} and \ref{th:main} are a correspondence between the boundary of the metatiles and the planarity of the substitution, and the tileability of metatiles given their boundary.
The correspondence between the boundary of the metatiles and the planarity of the substitution is achieved using mainly linear algebra and
properties of circulant matrices as introduced in Section \ref{sec:planar} on the 10-fold example, and later proved in Sections \ref{sec:subrosa} and \ref{sec:planar-rosa}.
The tileability of the metatiles given their boundary is discussed in Section \ref{sec:tileability} where we adapt the Kenyon criterion for tileability of a polygon by parallelogram \cite{kenyon1993} to our specific case. The results of Section \ref{sec:tileability} are used in Section \ref{sec:planar-rosa} to prove the tileability of the Planar Rosa metatiles.

We point out that the relation between the substitution construction and the cut-and-project method has already been studied from other perspectives.
In \cite{harriss2004,harriss2004canonical} Harriss proved that canonical cut-and-project tilings are canonical substitution tilings only if the plane which is approximated is an eigenspace of a matrix with quadratic eigenvalues. This is very restrictive for the approximated plane of canonical cut-and-project substitution tilings. In particular, for cut-and-project substitution tilings with $n$-fold rotational symmetry for odd $n\geq 7$ the eigenvalues of a matrix that would admit the approximated plane as its eigenspace are of degree more than 2.

Another approach for cut-and-project substitution tilings are through \emph{generalized substitutions} as introduced in \cite{arnoux2001pisot}. These are rather well understood for the
case of codimension one, e.g., 2D projections of discrete planes in $\mathbb{R}^3$ \cite{fernique2006, jolivet2013}. For tilings with $n$-fold rotational symmetry we need the higher dimensional version defined in \cite{arnoux2001higher}. This formalism was used in \cite{arnoux2011} to study in detail tilings in the codimension two case. However these works were motivated by the efficient coding of dynamical systems and not by geometrical tilings so there were no considerations of rotational symmetry. Natural questions would be whether this formalism can be used to define a family of tilings with $n$-fold rotational symmetry, and if so, how this approach would compare to the method we present here. Possibly this formalism could provide substitutions that have smaller scaling factors but have more complicated shapes of metatiles.

Let us also mention the recent work on generalized self-similarities of cut-and-project sets \cite{masakova2019generalized} where the authors consider the linear maps that preserve cut-and-project sets. They face much the same linear algebra questions as we do, but do not consider the issue of tileability.

\section{Settings}
\label{sec:settings}
\paragraph{Rhombus Tiling.}
Let $\vec{v}_0,\dots,\vec{v}_{n-1}$ be $n$ pairwise non-collinear unit vectors of the Euclidean plane. We call these \emph{edge directions}.
In this work we only consider the case were $\vec{v}_0,\dots,\vec{v}_{n-1}$ are the $n$-th roots of unity, \ie, $\vec{v}_k=e^{\imag\frac{2k\pi}{n}}$.
Here, and frequently in the rest of the paper, we identify the real plane $\mathbb{R}^2$ and the complex plane $\mathbb{C}$ in the standard manner. Remark that we write all vectors
as row vectors, and we use the transpose operator if column vectors are needed.

The $n$ edge directions define $\binom{n}{2}$ rhombus \emph{prototiles} which we denote by $T_{j,k}$
for $0\leq j<k< n$. We then denote by $\mathbf{T}$ the set of the prototiles and call it a \emph{tileset}:
$$T_{j,k} := \{ \lambda \vec{v}_j + \mu \vec{v}_k\ |\ 0\leq \lambda,\mu\leq 1\}, \qquad \mathbf{T} := \{T_{j,k}\ |\ 0\leq j < k < n\}.$$
We call \emph{$\mathbf{T}$-tiling} an edge-to-edge tiling of the plane where the tiles are translates of the prototiles $T_{j,k}$.
Recall that a tiling is a covering of the plane with no overlap and that edge-to-edge means that any two tiles of the tiling either intersect on a full common edge, on a single common vertex, or not at all.
\begin{figure}[b]
  \center  \includegraphics[width=10cm]{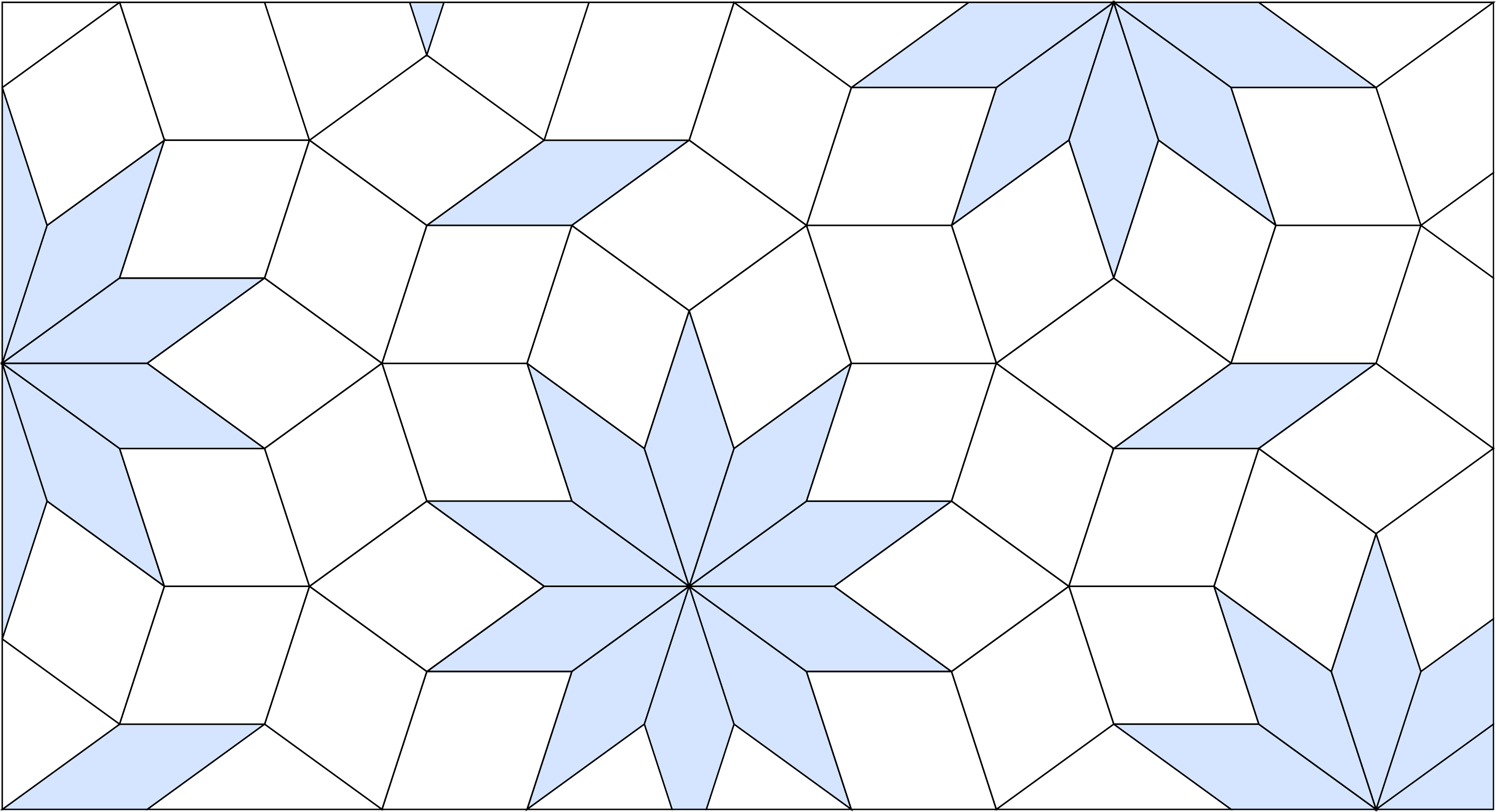}
  \caption{Example of rhombus tiling.}
  \label{fig:tilingexample}
\end{figure}

For example, take $n=5$. We have  $$\vec{v}_k:=\left( \cos\frac{2k\pi}{5}, \sin\frac{2k\pi}{5}\right) = e^{\imag\frac{2k\pi}{n}} \text{, for } k\in\{0,1,2,3,4\}.$$ We then have $\binom{5}{2} = 10$ rhombuses, but up to $2n$-fold rotations and translations we only have two rhombuses: a narrow one with angles $\frac{\pi}{5}$ and $\frac{4\pi}{5}$ (in blue/shaded in Figure \ref{fig:tilingexample}), and a wide one with angles $\frac{2\pi}{5}$ and $\frac{3\pi}{5}$ (in white in Figure \ref{fig:tilingexample}).

We call a \emph{$\mathbf{T}$-patch} a simply-connected edge-to-edge set of tiles which are translates of the prototiles of $\mathbf{T}$. We may also call this simply a \emph{patch} if the tileset $\mathbf{T}$ is known. We denote by $V(P)$ the set of vertices of a patch $P$. We call a \emph{pattern} a patch up to translations. We say that a pattern $\mathcal{P}$ appears in a tiling if a subset of the tiling is a patch in $\mathcal{P}$.

A tiling is called \emph{uniformly repetitive} or \emph{uniformly recurrent} when, for any pattern that appears in the tiling, there exists a radius $r$ such that in any disk of radius $r$ in the tiling this pattern appears.

A tiling is called \emph{periodic} if there exists a non-trivial translation for which it is invariant, and \emph{non-periodic} if there exists no such non-trivial translation. We call a tiling \emph{quasiperiodic} if it is both non-periodic and uniformly recurrent.

\begin{figure}[!b]
  \includegraphics[width=\textwidth]{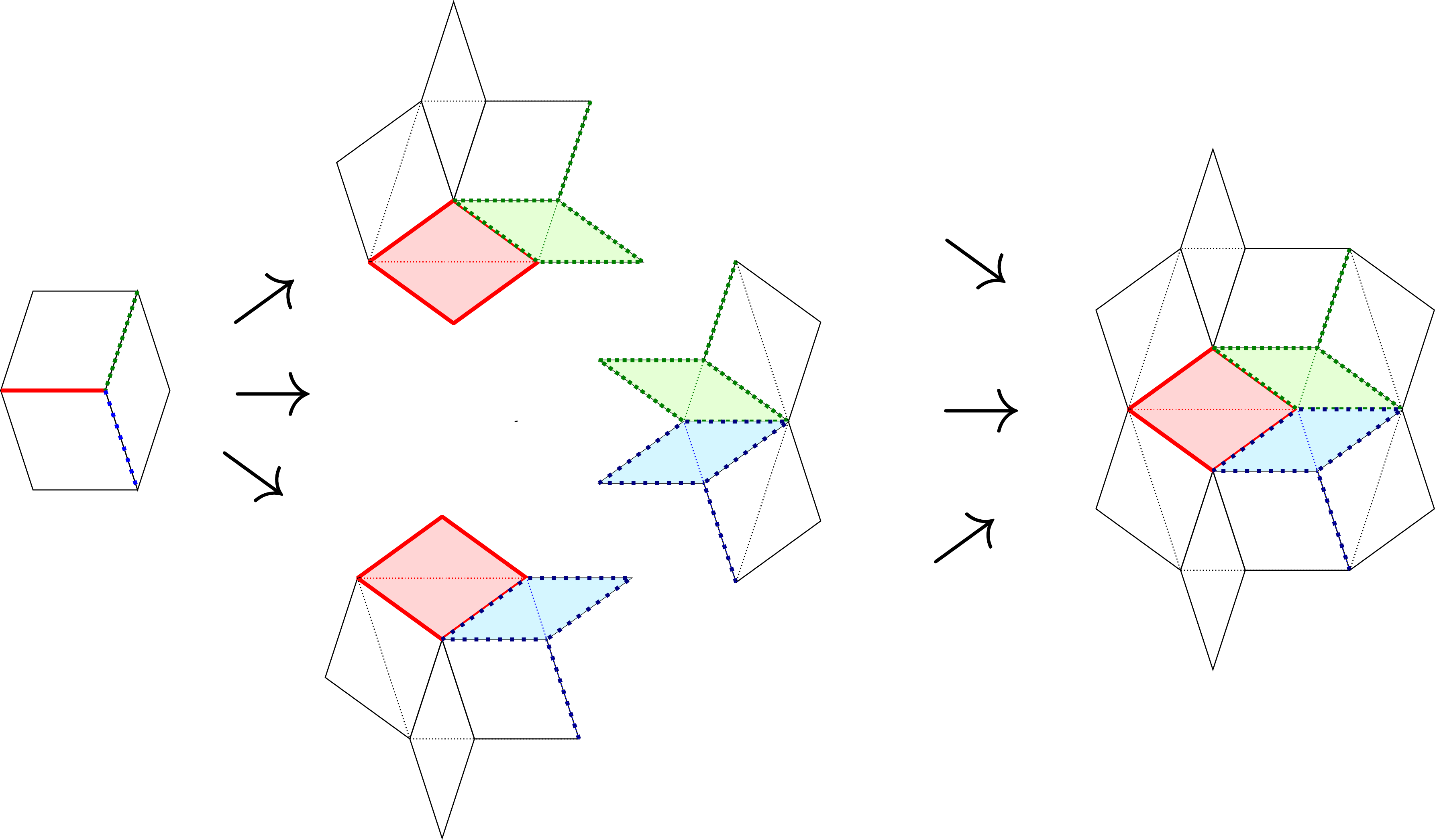}
  \caption{An example of combinatorial substitution. On the left the initial patch with the three internal edges coloured. In the middle the images of the three initial tiles with the images of the internal edges coloured. On the right the patch obtained by gluing the images of the three tiles along the images of the shared edges.}
  \label{fig:combinatorialsubstitution}
\end{figure}

\paragraph{Substitution.}
A \emph{combinatorial substitution} on a tileset $\mathbf{T}$ is a pair of functions $(\sigma,\partial\sigma)$ where $\sigma$, called a \emph{substitution}, is a function that associates to each prototile $t$ a finite patch of tiles $\sigma(t)$, and $\partial\sigma$, called the \emph{boundary of the substitution}, is a function that associates a set of external edges and/or tiles of $\sigma(t)$ to each pair $(t,e)$ where $t$ is a prototile and $e$ is an edge of $t$.

The substitution $\sigma$ is extended to a function on patches of tiles by applying the substitution separately to each tile and gluing the obtained patches in such a way that it preserves the combinatorial structure. In other words, for any two tiles $t_0$ and $t_1$ of a patch $P$ that are adjacent along an edge $e$, the sets $\partial\sigma(t_0,e)$ and $\partial\sigma(t_1,e)$ have to be equal, and in the patch $\sigma(P)$ the two patches $\sigma(t_0)$ and $\sigma(t_1)$ are glued along the set of edges and/or tiles in $\partial\sigma(t_0,e)=\partial\sigma(t_1,e)$. See Figure \ref{fig:combinatorialsubstitution} for an illustration.

We call \emph{metatiles of order $k$} of $\sigma$ the patterns $\sigma^k(t)$ for $t\in\mathbf{T}$. We simply call \emph{metatiles} the first order metatiles.
We say that a combinatorial substitution $(\sigma,\partial\sigma)$ is \emph{well-defined} when the substitution $\sigma$ can be applied on all metatiles of all orders, \ie, there is no metatile $\sigma^k(t)$ on which $(\sigma,\boundary)$ cannot be applied in a way that respects the combinatorial structure. In the following we always assume that the substitutions are well defined and we usually omit the boundary $\partial\sigma$ from the notation.

A finite pattern is said to be \emph{legal} for a substitution $\sigma$ if it appears in some $\sigma^k(t)$ with $t\in\tileset$ and $k\in\mathbb{N}$. A tiling $\tiling$ is said \emph{legal} for $\sigma$ if every finite pattern of $\tiling$ is legal for $\sigma$.

Let us now define two families of substitutions that are much easier to work with.
An \emph{edge-hierarchic substitution} (or stone substitution) is a combinatorial substitution $(\sigma,\partial\sigma)$ such that there exists an expansion $\phi$ (orientation preserving expanding similitude of the plane) such that for any tile $t$ and any edge $e$ of $t$, the metatile $\sigma(t)$ spans exactly the expanded tile $\phi(t)$ and the image $\boundary(t,e)$  of an edge $e$ is exactly the expanded edge $\phi(e)$, \ie,
$$ \bigcup\limits_{t' \in \sigma(t)} t' = \phi(t)\qquad \bigcup\limits_{e' \in \partial\sigma(t,e)} e' = \phi(e).$$
Note that this implies that $\partial\sigma(t,e)$ only contains edges for any $t$ and $e$. For an example, see the Chair substitution in Figure \ref{fig:substitutionexample}.

\begin{figure}[b]
  \center  \includegraphics[width=5cm]{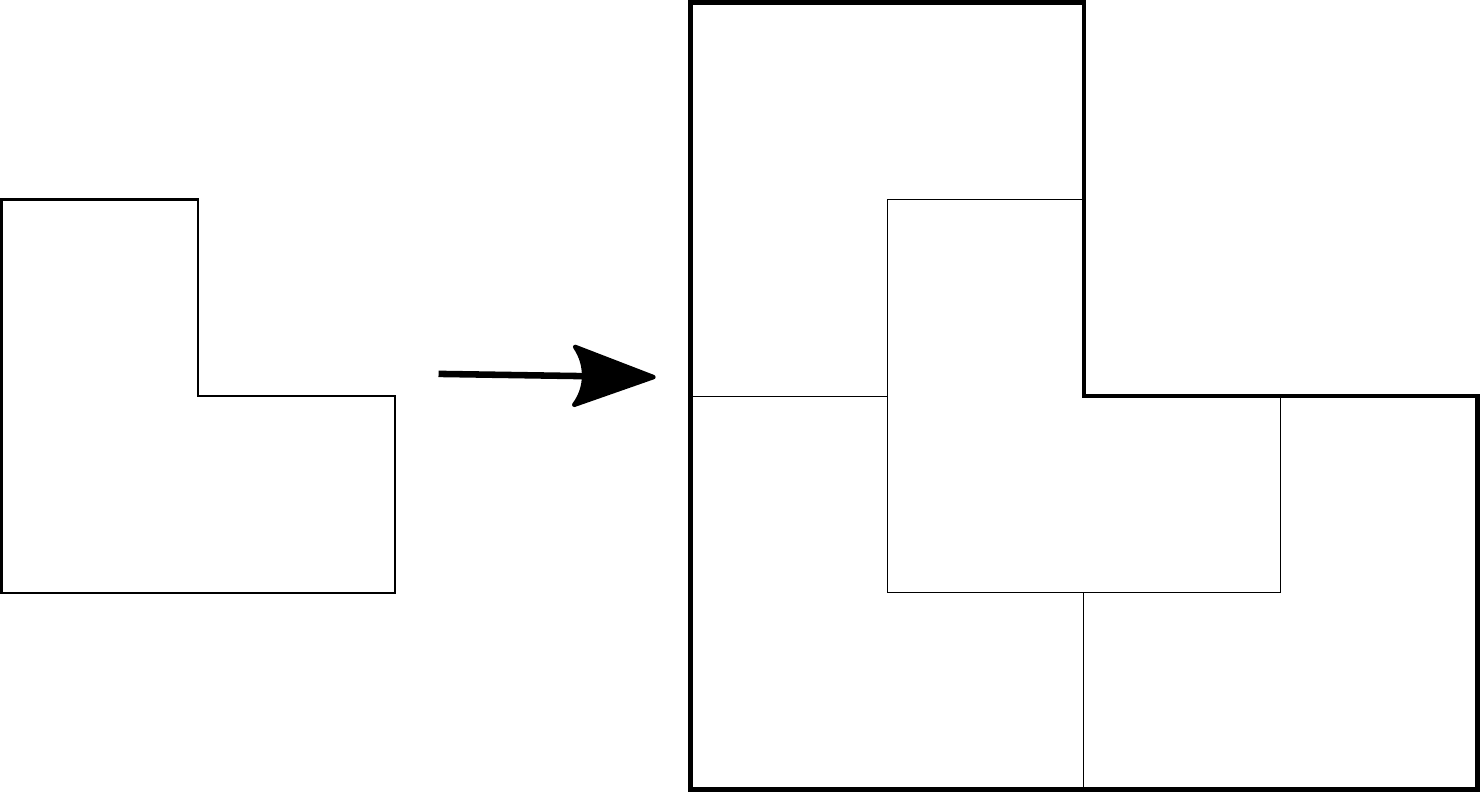}
  \caption{The Chair substitution, an example of an edge-hierarchic substitution on only one prototile up to translations and rotations.}
  \label{fig:substitutionexample}
\end{figure}

A \emph{vertex-hierarchic substitution} is a combinatorial substitution $(\sigma,\partial\sigma)$ such that there exists an expansion $\phi$ (orientation preserving expanding similitude of the plane) such that for any tile $t$ and any edge $e$ of $t$, the area spanned by $\sigma(t)$ is equal to the area of the expanded tile $\phi(t)$ and the vertices of the expanded edge $\phi(e)$ are vertices of the boundary $\boundary(t,e)$, \ie,
$$ Area\left(\bigcup\limits_{t' \in \sigma(t)} t'\right) = Area\left(\phi(t)\right),\qquad V\left(\phi(e)\right)\subset V\left(\bigcup\limits_{x \in \partial\sigma(t,e)} x\right). $$
If $\partial\sigma(t,e)$ contains tiles we take the convention that these boundary tiles count only for half in the computation of the area. See for example the Penrose substitution in Figure \ref{fig:penrosesubst} where the expanded tiles are represented in thick lines.
\begin{figure}[t]
  \center  \includegraphics[width=\textwidth]{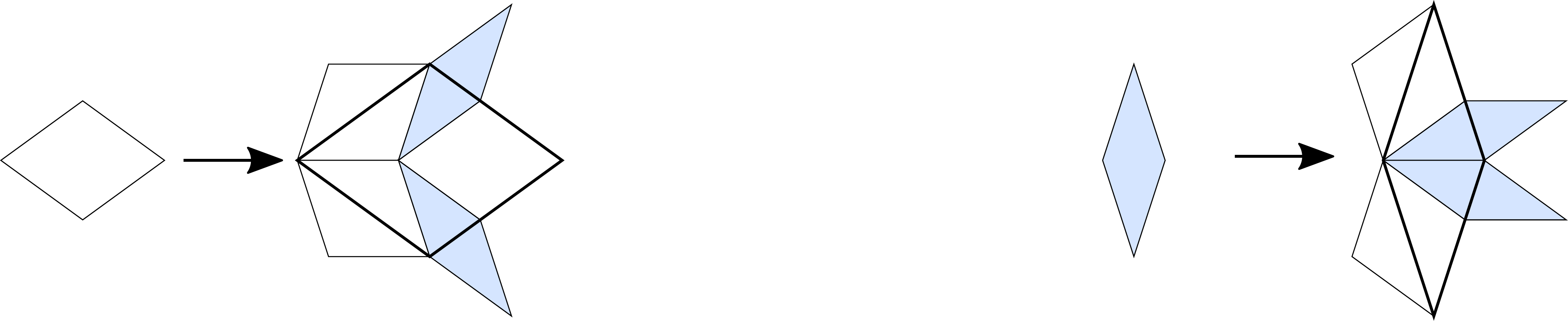}
  \caption{The Penrose substitution, an example of a vertex-hierarchic substitution on two rhombus tiles up to translations and rotations.}
  \label{fig:penrosesubst}
\end{figure}

The idea of these two families of substitutions is that they are both an inflation-subdivision process meaning that the substitution first inflates a prototile $t$ to $\phi(t)$ and then subdivides $\phi(t)$ to obtain a patch of tiles.
In the case of an edge-hierarchic substitution the subdivision is exact. In the case of a vertex-hierarchic substitution the subdivision can differ from $\phi(t)$ but the vertices of $\phi(t)$ must be boundary vertices of $\sigma(t)$, and the area of $\phi(t)$ and $\sigma(t)$ must be equal.
Note that substitution tilings are also sometimes called inflation tilings, or self-similar tilings.

\begin{figure}[b]
  \center  \includegraphics[width=10cm]{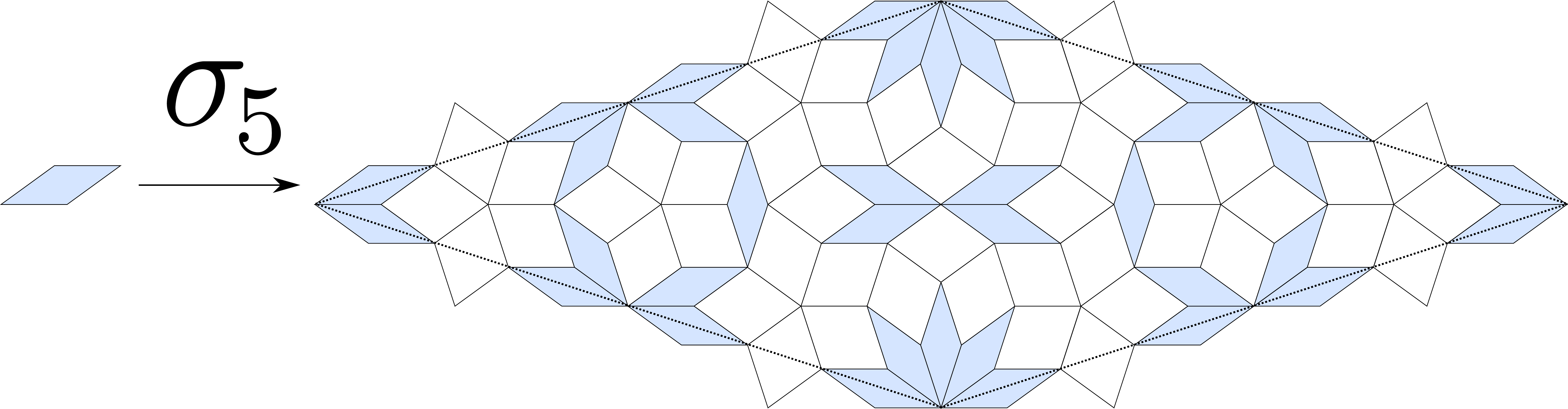}
  \caption{The Sub Rosa 5 substitution on the $\tfrac{\pi}{5}$ rhombus, an example of a vertex-hierarchic substitution where the images of all edges are identical up to rotations and translations. The edgeword is 131131 where 1 and 3  indicate the narrow and the wide rhombuses, respectively.}
  \label{fig:mysubst}
\end{figure}

In this paper we study a specific case of vertex-hierarchic substitutions where the image $\partial\sigma(t,e)$ of all edges $e$ of all tiles $t$ is the same up to rotations and translations. See for example the Sub Rosa 5 substitution in Figure \ref{fig:mysubst}. This choice reduces greatly the possibilities for the shape of the boundary. For example the well-known Penrose substitution \cite{penrose1974} and the Ammann-Beenker substitution \cite{beenker1982, grunbaum1987} are not in this class, but it simplifies conditions on the substitution, ensure the well-definedness of the substitution, and makes it easy to lift the substitution in $\mathbb{R}^n$. In this case we call the \emph{edgeword} the sequence of rhombuses and/or edges in the image of an edge up to translations and rotations. Our edgewords are always palindromes.

A substitution is called \emph{primitive} when there exists a $k$ such that for every prototile $t$, the patch $\sigma^k(t)$ contains all the different prototiles of the tileset.
\begin{proposition}
  If a tiling $\tiling$ is legal for some primitive substitution $\sigma$ then $\tiling$ is uniformly recurrent.
\end{proposition}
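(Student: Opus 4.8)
The plan is to combine legality, which lets us realize any pattern of $\tiling$ inside a single metatile, with primitivity, which lets us then find that pattern inside \emph{every} sufficiently high-order metatile, and finally with a bounded-diameter estimate that converts this into the required uniform density.

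First I would fix a pattern $\mathcal{P}$ occurring in $\tiling$ and use legality to write $\mathcal{P}$ as a sub-pattern of some metatile $\sigma^m(t_0)$ with $t_0 \in \tileset$. Primitivity supplies a $k$ such that $\sigma^k(t)$ contains a translate of $t_0$ for every prototile $t$; applying $\sigma^m$ to such a copy and using that $\sigma$ acts tile-by-tile, I obtain $\sigma^m(t_0) \subseteq \sigma^{m+k}(t)$, hence $\mathcal{P} \subseteq \sigma^{m+k}(t)$, for \emph{every} $t \in \tileset$. Writing $N_0 = m+k$, every order-$N_0$ metatile thus contains a copy of $\mathcal{P}$. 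Let $D = \max_{t\in\tileset}\diam\!\big(\sigma^{N_0}(t)\big)$, which is finite because there are finitely many prototiles.

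Next I would exploit the identity $\sigma^{N}(t) = \sigma^{N_0}\!\big(\sigma^{N-N_0}(t)\big)$ for $N \geq N_0$: it exhibits $\sigma^{N}(t)$ as an edge-to-edge union of order-$N_0$ metatiles, each of diameter at most $D$ and each containing $\mathcal{P}$, so that copies of $\mathcal{P}$ sit within distance $D$ of every point of a high-order metatile. To conclude uniform recurrence I claim that $r = 2D$ works. Given any point $y$ of $\tiling$, the finite patch $Q$ of tiles meeting $\overline{B}(y,r)$ is legal, so a translate $\tau(Q)$ lies inside some $\sigma^{N}(t_1)$ with $N \geq N_0$. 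The order-$N_0$ metatile $M$ of this decomposition that contains $\tau(y)$ is contained in $\overline{B}(\tau(y),D)$; pulling $M$ back by $\tau^{-1}$ yields a copy of $M$, and hence of $\mathcal{P}$, inside $\overline{B}(y,D)\subseteq B(y,r)$.

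The main obstacle is the bookkeeping in this last step: one must verify that the metatile $M$ sitting over the centre $\tau(y)$ does not protrude past the embedded region $\tau(Q)$. This is exactly where the slack in the choice $r = 2D > D$ is used, together with the observation that $\tau(Q)$ covers the whole closed disk of radius $r$, so that disjointness of tile interiors forces every tile meeting the open disk of radius $r$ to already belong to $\tau(Q)$; since each tile of $M$ lies within distance $D<r$ of $\tau(y)$, all of $M$ is captured. Everything else is routine once the finiteness of $D$ and the tile-by-tile action of $\sigma$ are in hand. Note in particular that I never need a global supertile decomposition of $\tiling$ itself, only the decomposition of the finitely many high-order metatiles into order-$N_0$ pieces.
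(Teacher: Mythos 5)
Your proof is correct, and it necessarily takes a different route from the paper, because the paper gives no argument at all for this proposition: it simply cites \cite[\S 5]{baake2013}, where repetitivity is obtained from the global hierarchical structure of substitution tilings (every tiling in the hull decomposes into supertiles of every order, and any large enough ball contains a complete high-order supertile). Your argument is self-contained and in fact better adapted to the paper's definition of legality: since ``$\tiling$ is legal'' only says that every finite pattern of $\tiling$ occurs in some metatile $\sigma^k(t)$, a global supertile decomposition of $\tiling$ is not directly available (it would need a separate compactness or recognizability argument), and you sidestep this by embedding one large finite patch $Q$ around an arbitrary point into a single high-order metatile and decomposing only that metatile into order-$N_0$ pieces. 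Your key containment $M\subseteq\tau(Q)$ is justified correctly: each tile of $M$ lies in $\overline{B}(\tau(y),D)$, which $\tau(Q)$ covers, and distinct tiles of the patch $\sigma^N(t_1)$ have disjoint interiors, so every tile of $M$ must coincide with a tile of $\tau(Q)$. Two cosmetic repairs are needed, neither of which changes the structure of the proof. First, the paper requires patches (hence patterns) to be simply connected, and the set of tiles meeting $\overline{B}(y,r)$ need not be; replace $Q$ by its filled-in version, adjoining all tiles lying in bounded complementary components, which is still a finite pattern of $\tiling$ covering the ball. Second, your assumption that $Q$ embeds in some $\sigma^N(t_1)$ with $N\geq N_0$ deserves its one line: if $\tau(Q)\subseteq\sigma^N(t_1)$ with $N<N_0$, primitivity places a copy of $t_1$ inside $\sigma^k(t)$ for every prototile $t$, hence $\sigma^N(t_1)$ inside $\sigma^{N+k}(t)$, so the order can be raised until it exceeds $N_0$.
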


This result is well-known and can be found in \cite[\S 5]{baake2013}. We use this result to prove that the substitution tilings we consider in Sections \ref{sec:planar} to \ref{sec:planar-rosa} are quasiperiodic.

\paragraph{Lifting to $\mathbb{R}^n$.}
Let $\vec{e}_0,\dots \vec{e}_{n-1}$ be the standard basis of $\mathbb{R}^n$.
We define the \emph{integer square} $S_{x,j,k}$ where $x\in\mathbb{Z}^n$ and $j,k\in\{0,.. n-1\}$ as  $$S_{x,j,k} := \{x+ \lambda \vec{e}_j + \mu \vec{e}_k\ |\ 0\leq \lambda, \mu \leq 1\}.$$
\begin{figure}[b]
  \center  \includegraphics[width=10cm]{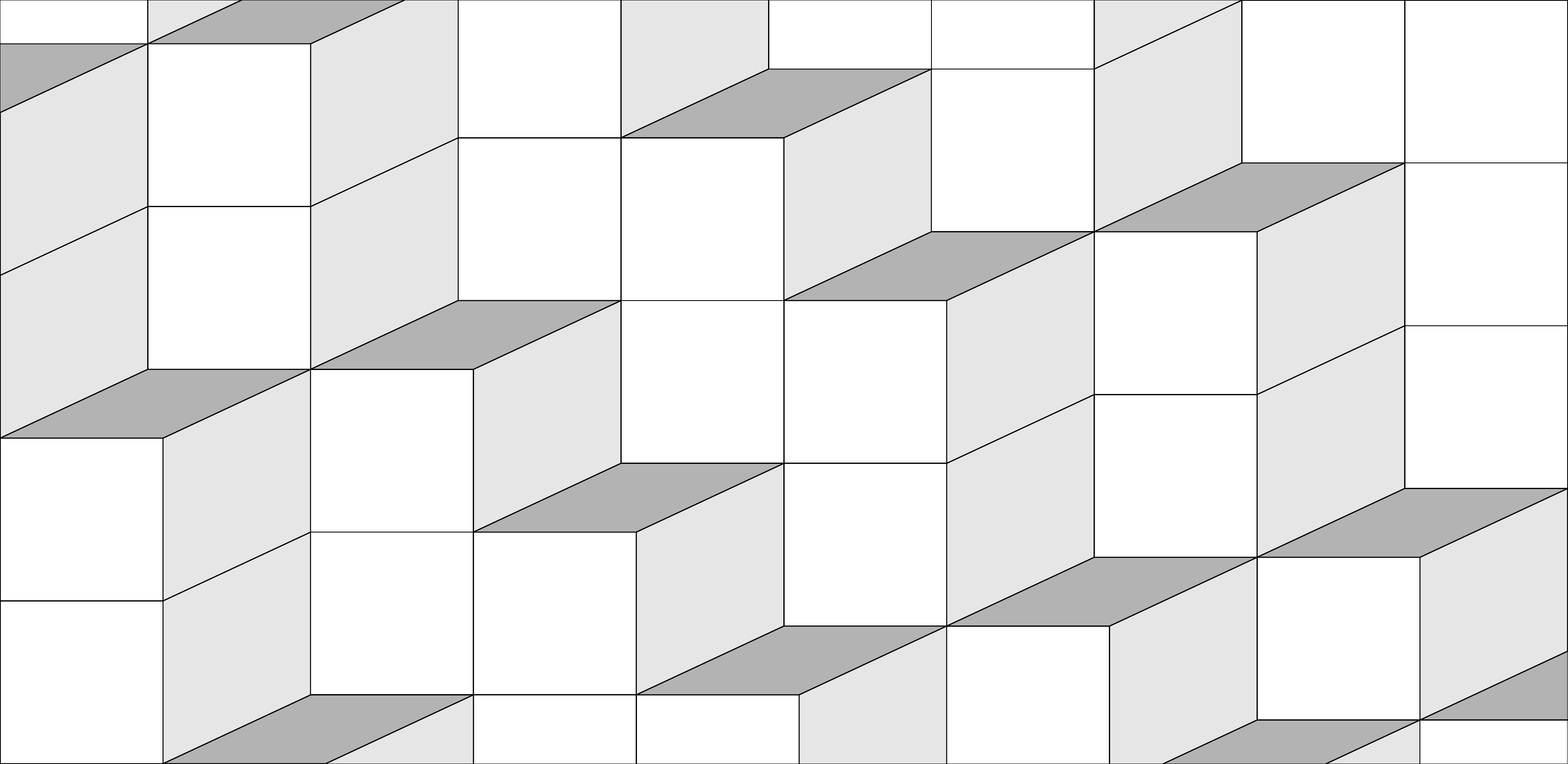}
  \caption{Example of an edge-to-edge rhombus tiling with 3 edge directions. We can intuitively see its lifted discrete surface in $\mathbb{R}^3$.}
\end{figure}

A \emph{discrete surface} is a simply-connected set of integer squares such that no more than two squares intersect on any edge. Discrete surfaces are sometimes called ``stepped surfaces''.
A $\mathbf{T}$-tiling is \emph{lifted} in $\mathbb{R}^n$ as follows:
\begin{itemize}
\item take an arbitrary \emph{origin} vertex in the tiling and map it to $0 \in \mathbb{R}^n$.
\item each tile of type $T_{j,k}$ is mapped to a some $S_{x,j,k}$ such that if two tiles share an edge $\vec{v}_j$ (resp. a vertex) then their image will share an edge $\vec{e}_j$ (resp. a vertex).
\end{itemize}
This lift operation was introduced by Levitov in \cite{levitov1988}. It lifts any edge-to-edge rhombus tiling $\mathcal{T}$ with $n$ edge directions to a discrete surface $\widehat{\mathcal{T}}$ in $\mathbb{R}^n$ that is unique up to the choice of the origin vertex. The reason why the tiling $\mathcal{T}$ can be lifted and that it is unique up to the choice of the origin vertex is that in an edge-to-edge rhombus tiling any two edge paths from the origin to a vertex $x$ are identical up to reordering and cancellation (\ie, $\vec{v}_i -\vec{v}_i = 0$). In particular, there is a unique abelianized path from the origin to $x$~\cite{levitov1988}. The lifted vertex $\hat{x}$ is characterized uniquely by
\begin{align*}
  \hat{x} &= (k_0,\dots k_{n-1}) \in \mathbb{Z}^n \Leftrightarrow x=\sum\limits_{0\leq i < n} k_i \vec{v}_i.
\end{align*}
Note that the lift of the tile $\tile{j}{k}$ at position $x$ is the square $\intsquare{\hat{x}}{j}{k}$ \emph{i.e. }
$$ \widehat{x+\tile{j}{k}}:= \intsquare{\hat{x}}{j}{k}.$$

We mostly use the same names and symbols for the objects in the Euclidean plane and their lifted counterparts in $\mathbb{R}^n$. However, when we want to emphasize the difference we denote $\hat{x}$ for the lifted version of an object $x$.

\paragraph{Discrete plane.}
We call a \emph{discrete plane tiling}, or simply a \emph{discrete plane}, an edge-to-edge rhombus tiling $\tiling$ with $n$ edge directions (or a discrete surface of $\R{n}$) such that there exists a 2D-plane $\slope$ of $\mathbb{R}^n$ called \emph{slope} such that the lifted tiling $\widehat{\tiling}$ stays within bounded distance of $\slope$ in $\mathbb{R}^n$, \ie, there exists $\delta\in\mathbb{R}^+$ such that $d(\slope, \hat{\tiling})\leq \delta$. The smallest such $\delta$ is called \emph{thickness} of the tiling.

We call the \emph{window} of a discrete plane $\tiling$, denoted by $W$, the orthogonal projection of the vertex set $V(\tiling)$ onto $\slope^\bot$
where, as usual, $\slope^\bot$ denotes the orthogonal complement of $\slope$ in $\mathbb{R}^n$. Note that for cut-and-projection there are additional conditions on the window.

Note that \emph{discrete planes} are sometimes called \emph{planar tilings} \cite{bedaride2015}, and that in the scope of tilings with local matching conditions they are called \emph{weak local rules tilings} \cite{levitov1988} or \emph{weak matching rules tiling} \cite{socolar1990}.

\paragraph{$n$-fold symmetric tilings.}
A $\mathbf{T}$-tiling has \emph{local $n$-fold rotational symmetry} if, for every pattern $P$ of the tiling, the image of $P$ by the rotation of angle $\frac{2\pi}{n}$ is also a pattern of the tiling (see the patches in Figure \ref{fig:localsym}).
A $\mathbf{T}$-tiling $\mathcal{T}$ has \emph{global $n$-fold rotational symmetry} if there exist a point $p$ such that $\mathcal{T}$ is invariant under the rotation of angle $\frac{2\pi}{n}$ around center $p$.

\begin{figure}[t]
  \center  \includegraphics[width=\textwidth]{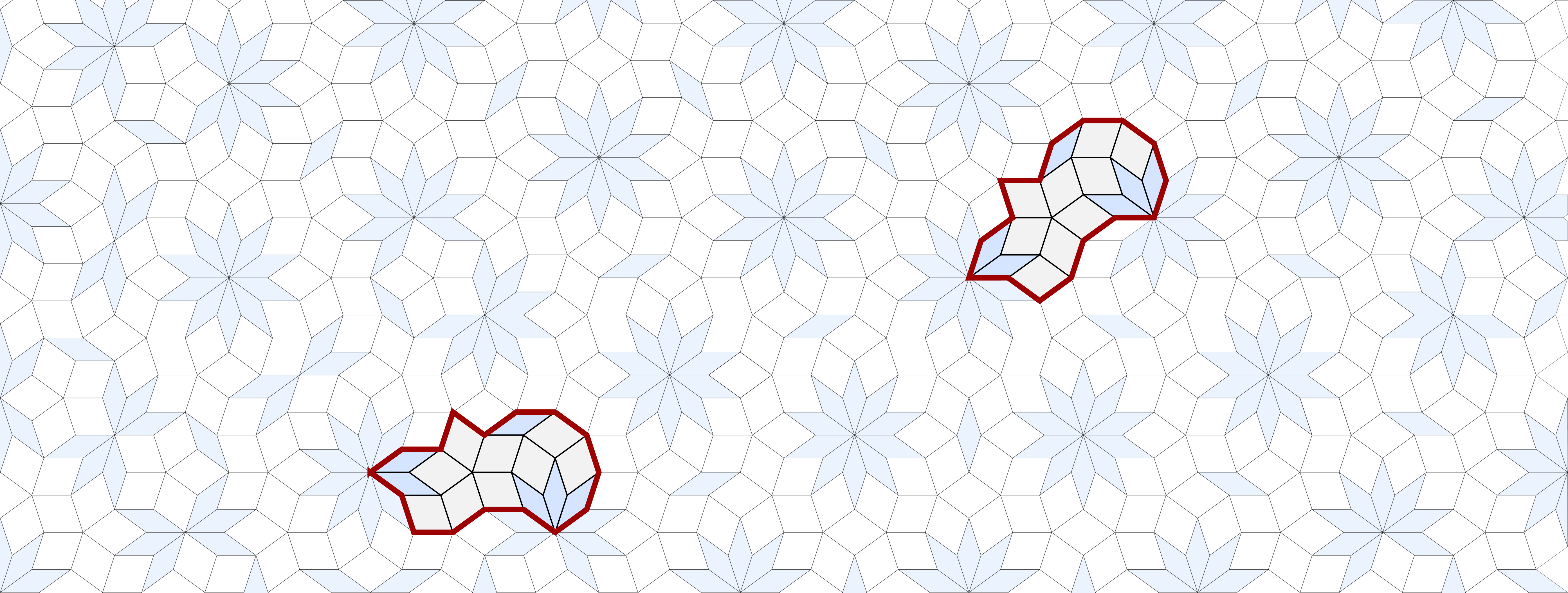}
  \caption{Example of a patch of a tiling with a 10-fold local symmetry. In the bottom left a patch has been highlighted, and in the top right a second patch appears which is the same up to a translation and a $\tfrac{2\pi}{10}$ rotation.}
  \label{fig:localsym}
\end{figure}

These two similar looking notions are, in fact, quite different. For example the canonical Penrose rhombus tilings has a local 10-fold symmetry and ``only'' a global 5-fold symmetry.
Note that in the context of quasicrystal tilings, cut-and-project or discrete planes tilings with $n$ edge directions (and usually with global $n$-fold or $2n$-fold rotational symmetry) are sometimes just called $n$-fold tilings.



\section{Substitution discrete planes}
\label{sec:planar}
In this Section we define how the substitutions we use are lifted to $\R{n}$ in order to study tilings that are both substitution tilings and discrete planes. To illustrate this we first present in detail an example of a substitution discrete plane with a 10-fold rotational symmetry.

Let us take a vertex-hierarchic substitution $\sigma$ and its expansion $\phi$ such that the image by the substitution of any two parallel edges $e$ and $e'$ is the same up to a translation, \emph{i.e.}, for any two tiles $t$ and $t'$, and for any edge $e$ of $t$ and any edge $e'$ of $t'$ we have
$$ e \parallel e' \Rightarrow \partial\sigma(t,e) \equiv \partial\sigma(t',e').$$

We define the lifted substitution $\hat{\sigma}$ from discrete surfaces (of $\R{n}$) to discrete surfaces (of $\R{n}$), and the lifted expansion $\hat{\phi}$ from $\R{n}$ to $\R{n}$ as follows.
\begin{itemize}
  \item Let us first define the lifted substitution $\hat{\sigma}$ on the prototiles as
$$\hat{\sigma}(\hat{r}):=\widehat{\sigma(r)},$$
    for any rhombus prototile $r$. Since the prototiles are the tiles $\tile{j}{k}$, this defines $\hat{\sigma}$ for the integer squares $\intsquare{0}{j}{k}$.
  \item Let us next define the lifted expansion $\hat{\phi}$ so that we can define the lifted substitution on all integer squares, at all positions.
    We define $\hat{\phi}$ as a linear function of $\R{n}$, so we only need to define it on the canonical basis of $\R{n}$.
    For $i\in{0,..,n-1}$ we define $\hat{\phi}(\vec{e}_i)$ as
    $$\hat{\phi}(\vec{e}_i) := \widehat{\phi(\vec{v}_i)},$$
    where $\widehat{\phi(\vec{v}_i)}$ is the lifted version of the abelianized path from the origin to the $\vec{v}_i$ corner vertex in the patch $\sigma(\tile{i}{j})$ for some $j\neq i$. Note that $\phi(\vec{v}_i)$ is uniquely defined due to the condition that the image by the substitution of any two parallel edges is the same up to a translation.
  \item Let us finally define the lifted substitution $\hat{\sigma}$ on any integer square as
    $$ \hat{\sigma}(\intsquare{x}{i}{j}) := \hat{\phi}(x) + \hat{\sigma}(\intsquare{0}{i}{j}) = \hat{\phi}(x) + \widehat{\sigma(\tile{i}{j})},$$ and extend it on any discrete surface by linearity.
\end{itemize}

Note that as $\hat{\phi}$ is defined as a linear function of $\R{n}$ we quite often consider the expansion of a tile that is an integer square. In that case the tile is considered as a set of points of $\R{n}$ and its image is the set of images of its points. Note also that the image by $\hat{\phi}$ of an integer square is a rhombus of $\R{n}$ with vertices in $\mathbb{Z}^n$.
Remark that the image by $\hat{\sigma}$ of any integer square $\intsquare{0}{i}{j}$ has the same shape as $\hat{\phi}(\intsquare{0}{i}{j})$ in the sense that the vertices of $\hat{\phi}(\intsquare{0}{i}{j})$ are extremal boundary vertices of $\hat{\sigma}(\intsquare{0}{i}{j})$.

\begin{figure}[h]
  \center
  \includegraphics[width=5cm]{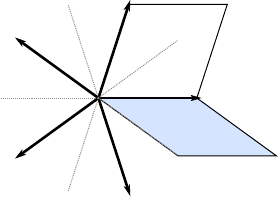}
  \caption{Vectors and rhombuses for $n=5$.}
  \label{fig:vectrhomb5}
\end{figure}

Since $\sigma$ and $\hat{\sigma}$ have exactly the same behaviour, we will write $\sigma$ for both, and the context makes it clear if we are considering the $\R{n}$ versions or the $\R{2}$ versions.

To study the behaviour of the linear function $\phi$ in $\R{n}$ we consider its matrix,
which we denote by $M_\phi$ or simply $M$. The matrix operates on column vectors from the right, \ie, $\phi(\vec{r})\trans = M_\phi\cdot \vec{r}\trans$.
We use the notation $\phi$ when we apply the expansion as a function, and the notation $M_\phi$ or $M$ when we study it as a matrix.

To ensure that a substitution tiling is planar along the plane $\mathcal{E}$, we want $\phi$ to be expanding along the plane $\mathcal{E}$ and non-expanding along $\mathcal{E}^\bot$. Indeed, $\phi$ expanding along $\mathcal{E}$ is necessary so that the substitution on the plane has a scaling factor greater than one, and if $\phi$ were expanding along $\mathcal{E}^\bot$ then iterating $\sigma$ on any initial patch of tiles would go farther and farther away from the plane $\mathcal{E}$. This comes from the fact that our substitutions are vertex-hierarchic.

 In this work we only study substitutions which are strictly expanding along $\mathcal{E}$ and strictly contracting along $\mathcal{E}^\bot$. This condition, though not necessary, is sufficient to ensure planarity, \emph{i.e.}, to ensure that the tilings legal for this substitution are discrete planes. This is stated and proved in
 Proposition \ref{prop:eigenvalue_planarity}.

To illustrate these definitions and the ideas, let us consider a specific substitution tiling with 10-fold global and local symmetry: the Sub Rosa 5 substitution. The general case for odd dimension $n$ and $2n$-fold rotational symmetry is quite similar, but the case  $n=5$ will allow for nicer pictures and easier notations while illustrating all the important ideas.

We take $n=5$ and the unit vectors $\vec{v}_0,\vec{v}_1,\vec{v}_2,\vec{v}_3,\vec{v}_4$ where
$$\vec{v}_j = \left(\cos\frac{2j\pi}{5}, \sin\frac{2j\pi}{5}\right)= e^{\imag\frac{2j\pi}{n}}.$$
We define the tiles $\tile{i}{j}$ and the tileset $\tileset$ as presented in Section \ref{sec:settings} and the $\tileset$-tilings are then naturally lifted in $\R{5}$ by the lifting operator presented in Section \ref{sec:settings}.

Since we impose that the boundary substitutions are all identical up to rotation, the effects of the substitution on the vectors of the canonical basis are also identical up to rotation. This implies that the expansion is a cyclic function, \emph{i.e.}, a linear function that commutes with a cyclic permutation of the basis vectors. So the expansion can be written as a circulant matrix \[M_\phi = \circulant{(m_0,m_1,m_2,m_3,m_4)} = \begin{pmatrix}
    m_0 & m_1 & m_2 & m_3 & m_4 \\
    m_4 & m_0 & m_1 & m_2 & m_3 \\
    m_3 & m_4 & m_0 & m_1 & m_2 \\
    m_2 & m_3 & m_4 & m_0 & m_1 \\
    m_1 & m_2 & m_3 & m_4 & m_0 \\
\end{pmatrix}\]

The eigenvalues and eigenvectors of circulant matrices are well-known. See for example \cite{Davis1979}. All the substitutions studied in the later parts
this of work also have circulant expansion matrices, so we state the following classical result in a  general form. We denote as $\circulant{(m_0,m_1,\dots, m_{n-1})}$
the circulant $n\times n$ matrix whose first row is $(m_0,m_1,\dots,m_{n-1})$.

\begin{proposition}
\label{prop:circulant}
Let $\zeta\in\mathbb{C}$ be an $n$'th root of unity, \ie, $\zeta^n=1$. For any circulant $n\times n$ matrix $M = \circulant{(m_0,m_1,\dots, m_{n-1})}$  holds
$$
M\cdot (1,\zeta,\dots,\zeta^{n-1})\trans =  \left(\sum_{j=0}^{n-1} m_j\zeta^j \right)\cdot (1,\zeta,\dots,\zeta^{n-1})\trans,
$$
and
$$
(1,\zeta,\dots,\zeta^{n-1})\cdot M =  \left(\sum_{j=0}^{n-1} m_j\zeta^{n-j} \right)\cdot (1,\zeta,\dots,\zeta^{n-1}).
$$
\end{proposition}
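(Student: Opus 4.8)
The plan is to prove both identities by direct computation, exploiting the defining index structure of a circulant matrix together with the relation $\zeta^n=1$. First I would record that for $M=\circulant{(m_0,\dots,m_{n-1})}$ the entry in row $i$ and column $j$, with indices taken in $\{0,\dots,n-1\}$, is $M_{ij}=m_{(j-i)\bmod n}$. This is exactly what the cyclic right-shift of successive rows encodes, as one checks against the displayed $5\times 5$ example: row $i$ is obtained from row $0=(m_0,\dots,m_{n-1})$ by shifting the indices down by $i$ modulo $n$. The hypothesis that $\zeta$ is an $n$-th root of unity enters precisely to make these modular reductions of exponents harmless.

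For the first identity, write $v=(1,\zeta,\dots,\zeta^{n-1})\trans$ and compute the $i$-th coordinate of $M\cdot v$, namely $(Mv)_i=\sum_{j=0}^{n-1} m_{(j-i)\bmod n}\,\zeta^j$. The key step is the change of summation index $\ell=(j-i)\bmod n$, which is a bijection of $\{0,\dots,n-1\}$ onto itself. Since $j\equiv i+\ell \pmod n$ and $\zeta^n=1$, we have $\zeta^j=\zeta^{i}\zeta^{\ell}$, so the sum factors as $\zeta^i\sum_{\ell=0}^{n-1} m_\ell\zeta^\ell$. This shows $(Mv)_i=\bigl(\sum_{\ell} m_\ell\zeta^\ell\bigr)\zeta^i$, \ie\ $v$ is an eigenvector of $M$ with eigenvalue $\sum_{\ell} m_\ell\zeta^\ell$, as claimed.

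For the second identity I would run the analogous computation for the row vector $w=(1,\zeta,\dots,\zeta^{n-1})$ acting on the right, computing the $j$-th coordinate $(wM)_j=\sum_{i=0}^{n-1}\zeta^i\, m_{(j-i)\bmod n}$. Setting again $\ell=(j-i)\bmod n$ gives $i\equiv j-\ell\pmod n$, hence $\zeta^i=\zeta^{j}\zeta^{-\ell}$. The crucial twist is that $\zeta^{-\ell}=\zeta^{n-\ell}$, again by $\zeta^n=1$, which produces the reversed exponents $\zeta^{n-\ell}$ appearing in the stated left eigenvalue. Thus $(wM)_j=\bigl(\sum_{\ell} m_\ell\zeta^{n-\ell}\bigr)\zeta^j$, giving the second formula.

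There is no genuine obstacle here, since the result is classical; the only thing to handle with care is the bookkeeping of indices under the modular reduction and the sign of the exponent, which differs between the two cases and is exactly what distinguishes $\sum_\ell m_\ell\zeta^\ell$ from $\sum_\ell m_\ell\zeta^{n-\ell}$. I would therefore make a point of stating explicitly where $\zeta^n=1$ is invoked, so that the passage from $\zeta^j$ to $\zeta^{i+\ell}$ and from $\zeta^{-\ell}$ to $\zeta^{n-\ell}$ is unambiguous.
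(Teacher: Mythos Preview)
Your proof is correct; the index bookkeeping $M_{ij}=m_{(j-i)\bmod n}$ matches the paper's convention (as one verifies from the displayed $5\times 5$ example), and the two substitutions $\ell=(j-i)\bmod n$ together with $\zeta^n=1$ yield both identities cleanly. Note that the paper itself does not prove this proposition: it is stated as a classical fact with a reference to \cite{Davis1979}, so your direct computation simply supplies the standard argument that the paper chose to omit.
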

\noindent
In particular, note that all $n\times n$ circular matrices have the
common complex eigenvectors $(1,\zeta,\dots,\zeta^{n-1})\trans$ for all $n$'th roots $\zeta$ of unity.

Returning to the specific case of $n=5$,
we define a decomposition of $\R{5}$ into a line and two planes as follows:
\begin{itemize}
\item $\Delta := \left\langle\left(1,1,1,1,1\right)\right\rangle$
\item $\e{0} := \left\langle \left(\cos\frac{2k\pi}{5}\right)_{k=0..4}, \left(\sin\frac{2k\pi}{5}\right)_{k=0..4} \right\rangle$
\item $\e{1} :=  \left\langle \left(\cos\frac{6k\pi}{5}\right)_{k=0..4}, \left(\sin\frac{6k\pi}{5}\right)_{k=0..4} \right\rangle$
\end{itemize}
Here, and elsewhere, we denote by  $\langle\cdot \rangle$ the subspace generated by given vectors.
Note that
$\Delta,\ \e{0}$ and $\e{1}$ are orthogonal and $\Delta \oplus \e{0} \oplus \e{1} = \R{5}$.
For convenience, e.g., to use Proposition~\ref{prop:circulant} we rephrase the real two-dimensional subspaces $\e{0}$ and $\e{1}$ as
one-dimensional complex subspaces
$$ \e{0} := \left\langle \left(e^{\imag\frac{2k\pi}{5}}\right)_{k=0..4}\right\rangle,\qquad \e{1} := \left\langle \left(e^{\imag\frac{6k\pi}{5}}\right)_{k=0..4}\right\rangle$$
so that the vectors of real planes are precisely the vectors formed by the real and the imaginary parts of the complex vectors.
Now we can deduce from Proposition~\ref{prop:circulant} that $\Delta$, $\e{0}$ and $\e{1}$ are one-dimensional (complex) eigenspaces of any
$5\times 5$ circulant matrix $M_\phi$,
which means that as real planes  $\e{0}$ and $\e{1}$ are invariant spaces of dimension two.

Let us define the Sub Rosa 5 substitution $\sigma_5$ by Figure \ref{fig:sigmafive}.
\begin{figure}[b]
  \center \includegraphics[width=\textwidth]{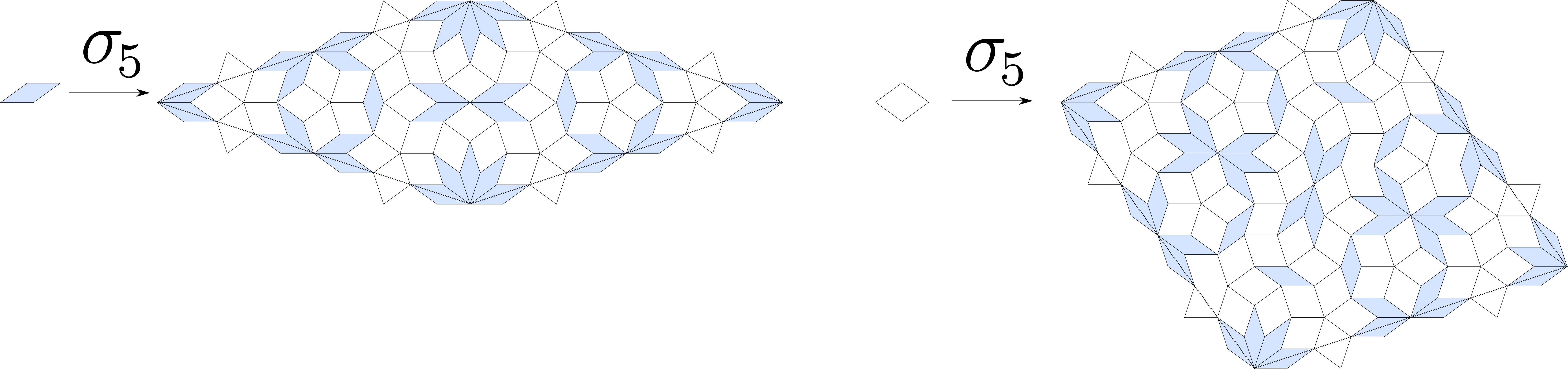}
  \caption{The Sub Rosa 5 substitution $\sigma_5$ up to translation and rotation.}
  \label{fig:sigmafive}
\end{figure}
The expansion matrix $M_5$ of $\sigma_5$ is
\[M_5= \begin{pmatrix}
  4 & 2 & -2 & -4 & 0 \\
  0 & 4 & 2 & -2 & -4 \\
  -4 & 0 & 4 & 2 & -2 \\
  -2 & -4 & 0 & 4 & 2 \\
  2 & -2 & -4 & 0 & 4 \\
  \end{pmatrix}\\
  .\]
The eigenvalues of $M_5$, corresponding to the complex eigenspaces $\Delta$, $\e{0}$ and $\e{1}$ as provided by Proposition~\ref{prop:circulant}, are
\begin{equation}
\label{eq:eigenvalues5}
\begin{aligned}
  \lambda_\Delta &= 0\\
  \lambda_0 &= \left(8\cos\frac{\pi}{10}+4\cos\frac{3\pi}{10}\right)e^{-\imag\frac{\pi}{10}}\\
  \lambda_1 &= \left( 8\cos\frac{3\pi}{10} - 4\cos\frac{\pi}{10} \right)e^{-\imag\frac{3\pi}{10}}.
\end{aligned}
\end{equation}
Note that on the plane $\e{0}$ the expansion is the multiplication by $\lambda_0$, which means that the scaling factor of the expansion on this \emph{tiling plane} is $|\lambda_0|$,
where we use the standard notation $|\cdot |$ for the modulus of a complex number.

The edgeword of $\sigma_5$ is $131131$. Recall that the edgeword is the sequence of rhombuses that appears on the image of every edge by the substitution. Here we encode the narrow rhombus which has angle $\tfrac{\pi}{5}$ by symbol $1$ and the wide rhombus which has angle $\tfrac{3\pi}{5}$ by symbol $3$.

Let us now consider the more general setting where we have a similar vertex-hierarchic substitution $\sigma$ on the same tileset and with an edgeword $u$. We assume that $u$ is a palindromic word on alphabet $\{1,3\}$ and that $\sigma$ is a substitution such that the image of any edge is the succession of rhombuses coded by $u$. This substitution is lifted in $\R{5}$ in the same way as $\sigma_5$ and its expansion is a linear function that admits $\Delta$, $\e{0}$ and $\e{1}$ as eigenspaces with some eigenvalues $\lambda_\Delta$, $\lambda_0$ and $\lambda_1$.

We define the \emph{abelianized edgeword}  $[u]$ as $$[u]:= (|u|_1, |u|_3),$$
 where $|\cdot |_x$ denotes the number of occurrences of letter $x$ in a word. Remark that the expansion $\phi$ and its matrix $M$ only depend on the abelianized edgeword. More precisely, we can decompose $M$ as a linear combination of two \emph{elementary matrices} $M_0(5)$ and $M_1(5)$ with
\[M_0(5):= \begin{pmatrix}
  1 & 0 & 0 & -1 & 0 \\
  0 & 1 & 0 & 0 & -1 \\
  -1 & 0 & 1 & 0 & 0 \\
  0 & -1 & 0 & 1 & 0 \\
  0 & 0 & -1 & 0 & 1 \\
\end{pmatrix}\  M_1(5):= \begin{pmatrix}
  0 & 1 & -1 & 0 & 0 \\
  0 & 0 & 1 & -1 & 0 \\
  0 & 0 & 0 & 1 & -1 \\
  -1 & 0 & 0 & 0 & 1 \\
  1 & -1 & 0 & 0 & 0 \\
\end{pmatrix}. \]
The matrix $M_0(5)$ is the expansion matrix of a substitution with edgeword $1$, \emph{i.e.}, with only a narrow rhombus on every side of every metatile. Similarly, $M_1(5)$ is the expansion matrix of a substitution with edgeword $3$. We have
$$ M = [u]_0M_0(5) + [u]_1M_1(5). $$
This decomposition gives us a formula to easily compute the eigenvalues of the expansion of the substitution associated to any edgeword $u$.
Note that the order of the tiles in the edgeword does not influence the expansion matrix $M$ as it is determined by the abelianized edgeword. Therefore also the eigenvalues of the
expansion are indifferent to the order of letters in $u$.

\begin{proposition}
  \label{prop:eigenvalues_5fold}
  Let $u$ be a palindromic word on alphabet $\{1,3\}$. Let $\sigma$ be a substitution with the edgeword $u$ and the corresponding expansion $\phi$.
  Let $\lambda_\Delta$, $\lambda_0$ and $\lambda_1$ be the eigenvalues of $\phi$ on eigenspaces $\Delta$, $\e{0}$ and $\e{1}$, respectively.

  We have $\lambda_\Delta = 0$ and
  \[ |\lambda|^{\trans} = \left|
  \begin{pmatrix}2\cos(\frac{\pi}{10}) & 2\cos(\frac{3\pi}{10}) \\ 2\cos(\frac{3\pi}{10}) & -2\cos(\frac{\pi}{10})\end{pmatrix}
  \cdot [u]^{\trans}
  \right|\]
  where $\lambda = (\lambda_0,\lambda_1)$ and where $|\cdot |$ is understood as taking the modulus of each element of a vector.

\end{proposition}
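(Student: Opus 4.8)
The plan is to compute the eigenvalues of $M$ on each eigenspace by feeding the decomposition $M = [u]_0\, M_0(5) + [u]_1\, M_1(5)$ into Proposition~\ref{prop:circulant}. Because $\Delta$, $\e{0}$ and $\e{1}$ are common complex eigenlines of all $5\times5$ circulant matrices, the eigenvalue of $M$ on any one of them is the matching linear combination $[u]_0\,\mu_0 + [u]_1\,\mu_1$ of the eigenvalues $\mu_0,\mu_1$ of $M_0(5),M_1(5)$. Reading off the first rows $(1,0,0,-1,0)$ of $M_0(5)$ and $(0,1,-1,0,0)$ of $M_1(5)$ and applying Proposition~\ref{prop:circulant}, the eigenvalues on the line spanned by $(1,\zeta,\zeta^2,\zeta^3,\zeta^4)\trans$ are $\mu_0(\zeta)=1-\zeta^3$ and $\mu_1(\zeta)=\zeta-\zeta^2$, where $\zeta$ is the relevant fifth root of unity: $\zeta=1$ for $\Delta$, $\zeta_0=e^{\imag\frac{2\pi}{5}}$ for $\e{0}$, and $\zeta_1=e^{\imag\frac{6\pi}{5}}$ for $\e{1}$. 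Since the statement concerns only moduli, I may freely identify $|\lambda_0|,|\lambda_1|$ with $|\mu_0(\zeta_j)[u]_0+\mu_1(\zeta_j)[u]_1|$ (the passage from this circulant eigenvalue to the scaling of $\phi$ on the real plane is at most a complex conjugation, which leaves the modulus unchanged).

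For $\Delta$ we have $\zeta=1$, so $\mu_0(1)=\mu_1(1)=0$ and hence $\lambda_\Delta=0$. The key observation for the two planes is that, for any unit $\zeta\neq1$, the two elementary eigenvalues are real multiples of one another. Factoring $1-\zeta^3=(1-\zeta)(1+\zeta+\zeta^2)$ and $\zeta-\zeta^2=\zeta(1-\zeta)$ gives
\[
  \frac{\mu_0(\zeta)}{\mu_1(\zeta)}=\frac{1+\zeta+\zeta^2}{\zeta}=\zeta^{-1}+1+\zeta=1+2\,\mathrm{Re}(\zeta),
\]
which is real because $\zeta^{-1}=\overline{\zeta}$. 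Thus the phase of $\mu_1(\zeta)$ factors out of $\lambda=\mu_1(\zeta)\bigl((1+2\,\mathrm{Re}(\zeta))[u]_0+[u]_1\bigr)$, and as $[u]_0,[u]_1\in\RR$ this collapses the two-dimensional complex combination to a one-dimensional real one:
\[
  |\lambda|=|\mu_1(\zeta)|\cdot\bigl|(1+2\,\mathrm{Re}(\zeta))[u]_0+[u]_1\bigr|.
\]

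It then remains to evaluate this at $\zeta_0$ and $\zeta_1$ and to recognise the matrix entries. Using $|1-e^{\imag\theta}|=2|\sin\tfrac{\theta}{2}|$ I get $|\mu_1(\zeta_0)|=|1-\zeta_0|=2\sin\tfrac{\pi}{5}=2\cos\tfrac{3\pi}{10}$ and $|\mu_1(\zeta_1)|=|1-\zeta_1|=2\sin\tfrac{3\pi}{5}=2\cos\tfrac{\pi}{10}$, and I then simplify the products via $2\sin A\cos B=\sin(A+B)+\sin(A-B)$. For $\e{0}$ this yields $2\sin\tfrac{\pi}{5}\,(1+2\cos\tfrac{2\pi}{5})=2\sin\tfrac{3\pi}{5}=2\cos\tfrac{\pi}{10}$, so the coefficients of $[u]_0$ and $[u]_1$ in $|\lambda_0|$ are $2\cos\tfrac{\pi}{10}$ and $2\cos\tfrac{3\pi}{10}$, which is the first row of the matrix. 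For $\e{1}$ the factor $1+2\cos\tfrac{6\pi}{5}$ is negative, and the same identity gives $2\sin\tfrac{3\pi}{5}\,(1+2\cos\tfrac{6\pi}{5})=-2\cos\tfrac{3\pi}{10}$; the modulus absorbs this sign and produces $|\lambda_1|=\bigl|2\cos\tfrac{3\pi}{10}[u]_0-2\cos\tfrac{\pi}{10}[u]_1\bigr|$, which is the second row. Assembling $\lambda_\Delta=0$ together with these two expressions gives the stated matrix identity.

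The conceptual heart is the collinearity computation of the second paragraph: once the two elementary eigenvalues on each plane are seen to be real scalar multiples of a common complex number, the modulus of their edgeword-weighted sum becomes an honest (absolute value of a) linear function of $[u]$. The only genuinely delicate part is then the trigonometric bookkeeping of the last step, where one must reduce angles modulo $2\pi$ and, crucially, keep track of the sign of $1+2\,\mathrm{Re}(\zeta)$: the plane $\e{1}$ hides a real sign change that only disappears after taking the final modulus, and mismatching it would flip a sign in the second row of the matrix.
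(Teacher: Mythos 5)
Your proof is correct and follows essentially the same route as the paper: both decompose $M_\phi = [u]_0 M_0(5) + [u]_1 M_1(5)$, invoke Proposition~\ref{prop:circulant} for the common eigenspaces, and exploit the fact that on each plane the two elementary eigenvalues are collinear (real multiples of a common complex number), so that the modulus of their $[u]$-weighted sum is the absolute value of a real linear combination of $[u]_0$ and $[u]_1$. The only difference is bookkeeping: the paper quotes the elementary eigenvalues in polar form with a shared phase, while you establish collinearity by factoring the ratio $(1-\zeta^3)/(\zeta-\zeta^2) = 1 + 2\,\mathrm{Re}(\zeta)$ and then recover the matrix entries by product-to-sum identities --- an equivalent verification.
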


\begin{proof}
  The key point of this proof is the decomposition of $M$ as $[u]_0M_0(5)+[u]_1M_1(5)$. By Proposition~\ref{prop:circulant} the matrices
  $M_0(5)$ and $M_1(5)$ have eigenspaces $\Delta$, $\e{0}$ and $\e{1}$ with eigenvalues

  \[ \begin{matrix*}[l]
     \lambda_{\Delta,0} = 0,  &\hspace*{1cm}  & \lambda_{\Delta,1} =0,  \\
     \lambda_{0,0} = 2\cos\tfrac{\pi}{10}e^{-\imag\frac{\pi}{10}},  &  & \lambda_{0,1} = 2\cos\tfrac{3\pi}{10}e^{-\imag\frac{\pi}{10}}, \\
     \lambda_{1,0} = 2\cos\tfrac{3\pi}{10}e^{-\imag\frac{3\pi}{10}},  &  & \lambda_{1,1} = -2\cos\tfrac{\pi}{10}e^{-\imag\frac{3\pi}{10}},
  \end{matrix*}\]
  where $\lambda_{i,j}$ (resp. $\lambda_{\Delta,j}$) is the eigenvalue of $M_j(5)$ on eigenspace $\e{i}$ (resp. on $\Delta$).
  Since the two elementary matrices $M_0(5)$ and $M_1(5)$ have the same eigenspaces we have
  \begin{align*}
    \lambda_\Delta &= [u]_0\lambda_{\Delta,0} + [u]_1 \lambda_{\Delta_1} =0,\\
    \lambda_0 &= [u]_0\lambda_{0,0} + [u]_1\lambda_{0,1} = [u]_0\cdot 2\cos\tfrac{\pi}{10}e^{-\imag\frac{\pi}{10}} + [u]_1\cdot 2\cos\tfrac{3\pi}{10}e^{-\imag\frac{\pi}{10}}, \\
    \lambda_1 &= [u]_0\lambda_{1,0} + [u]_1\lambda_{1,0} = [u]_0\cdot 2\cos\tfrac{3\pi}{10}e^{-\imag\frac{3\pi}{10}} + [u]_1 \cdot (-2\cos\tfrac{\pi}{10}e^{-\imag\frac{3\pi}{10}}).
  \end{align*}
  Furthermore, since on the eigenspace $\e{0}$ (resp. on $\e{1}$) the complex eigenvalues of the elementary matrices have the same argument, the modulus of $\lambda_0$ (resp. of $\lambda_1$) is a simple linear combination, \emph{i.e.},
  \begin{align*}
    |\lambda_0| &= \left|[u]_0\cdot 2\cos\tfrac{\pi}{10}+ [u]_1\cdot 2\cos\tfrac{3\pi}{10}\right|\\
    |\lambda_1| &= \left|[u]_0\cdot 2\cos\tfrac{3\pi}{10} + [u]_1 \cdot (-2\cos\tfrac{\pi}{10})\right|
  \end{align*}
  We can then reformulate it as a matrix-vector product to obtain the formula in the proposition.
\end{proof}

Note that in Sections \ref{sec:subrosa} and \ref{sec:planar-rosa} we use the term \emph{eigenvalue matrix} for the matrix $N_\lambda$ that links the abelianized edgeword $[u]$ to the eigenvalue vector $|\lambda|$. Here
$$N_\lambda := \begin{pmatrix} 2\cos(\frac{\pi}{10}) & 2\cos(\frac{3\pi}{10}) \\ 2\cos(\frac{3\pi}{10}) & -2\cos(\frac{\pi}{10})\end{pmatrix}.$$
 An idea that helps to understand this eigenvalue matrix is to note that the length of the diagonal of a unit rhombus of angle $\theta$ is $2\cos\frac{\theta}{2}$, so on the eigenspace $\e{0}$ a narrow rhombus will add $2\cos\frac{\pi}{10}$ and a wide rhombus will add $2\cos\frac{3\pi}{10}$ to the eigenvalue. On the eigenspace $\e{1}$ they will weight $2\cos\frac{3\pi}{10}$ and $-2\cos\frac{\pi}{10}$, respectively, because the projection is different and the rhombuses are deformed.

Consider again our specific example $\sigma_5$ with the edgeword $131131$. We have $M_5 = 4M_0(5) + 2M_1(5)$. The eigenvalues of $M_5$
were calculated in (\ref{eq:eigenvalues5}). From these, or from
the formulation in Proposition~\ref{prop:eigenvalues_5fold} for their moduli, we obtain that $|\lambda_0|>1$ and $|\lambda_1| < 1$. We also have $\lambda_\Delta=0$. It turns out, and will be proved in a more general setting in Proposition~\ref{prop:eigenvalue_planarity} of Section~\ref{sec:planar-rosa},
that from these bounds we can conclude that any tiling legal for $\sigma_5$ is a discrete plane of slope $\e{0}$. In fact, this is then true for any
substitution with the same abelianized edgeword $[u_5] = \begin{pmatrix} 4, & 2 \end{pmatrix}$.

Let us briefly introduce the idea that we use in Section \ref{sec:planar-rosa} to find a suitable substitution. For the substitution $\sigma$ to be planar of slope $\e{0}$, \ie, for $\sigma$ to generate discrete planes of slope $\e{0}$, we want $|\lambda _0|> 1$ and $|\lambda_1|<1$. Let us now remark that
$$ \begin{pmatrix} 2\cos(\frac{\pi}{10}) & 2\cos(\frac{3\pi}{10}) \\ 2\cos(\frac{3\pi}{10}) & -2\cos(\frac{\pi}{10})\end{pmatrix} \cdot\begin{pmatrix} \frac{2}{5}\cos\frac{\pi}{10}, & \frac{2}{5}\cos\frac{3\pi}{10}\end{pmatrix}\trans= (1,0)\trans.$$
If we take an edgeword $[u] \approx \alpha\begin{pmatrix} \cos\tfrac{\pi}{10}, & \cos\tfrac{3\pi}{10}\end{pmatrix}$ for some positive real number $\alpha$, we have $|\lambda| \approx \tfrac{5\alpha}{2}\begin{pmatrix}1, & 0 \end{pmatrix}$ so for a large enough $\alpha$ and a good enough approximation we have $|\lambda_0|>1>|\lambda_1|$.
$\mathbb{Z}^2$) approximates the real line $\left\langle \begin{pmatrix} \cos\tfrac{\pi}{10}, & \cos\tfrac{3\pi}{10}\end{pmatrix} \right\rangle$. Note that
$[u_5] = \begin{pmatrix} 4, & 2 \end{pmatrix}$ is not very far from this line.

\begin{figure}[h]
  \center
  \begin{subfigure}[c]{0.48\textwidth}
    \center
    \includegraphics[width=0.6\textwidth]{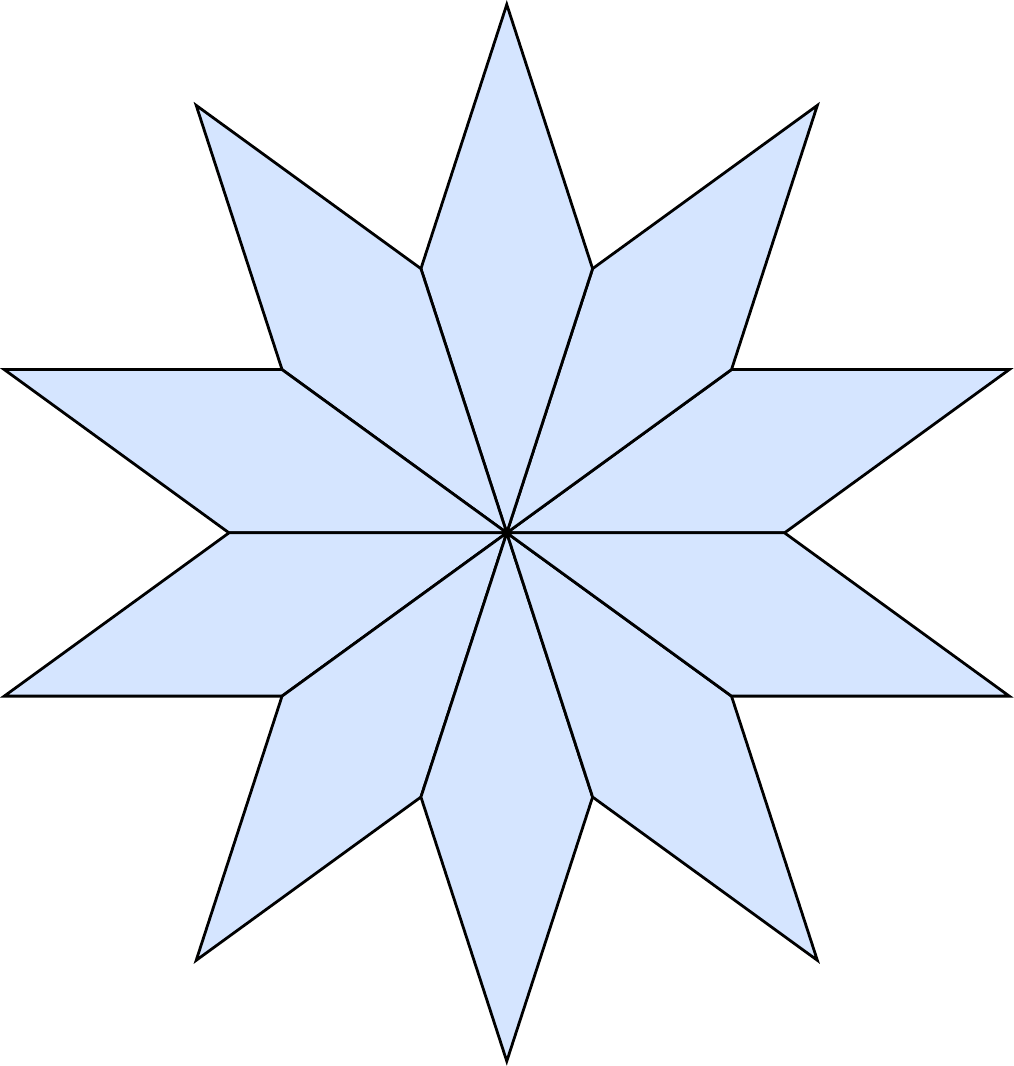}
    \caption{$S_5$}
  \end{subfigure}
  \begin{subfigure}[c]{0.48\textwidth}
    \includegraphics[width=\textwidth]{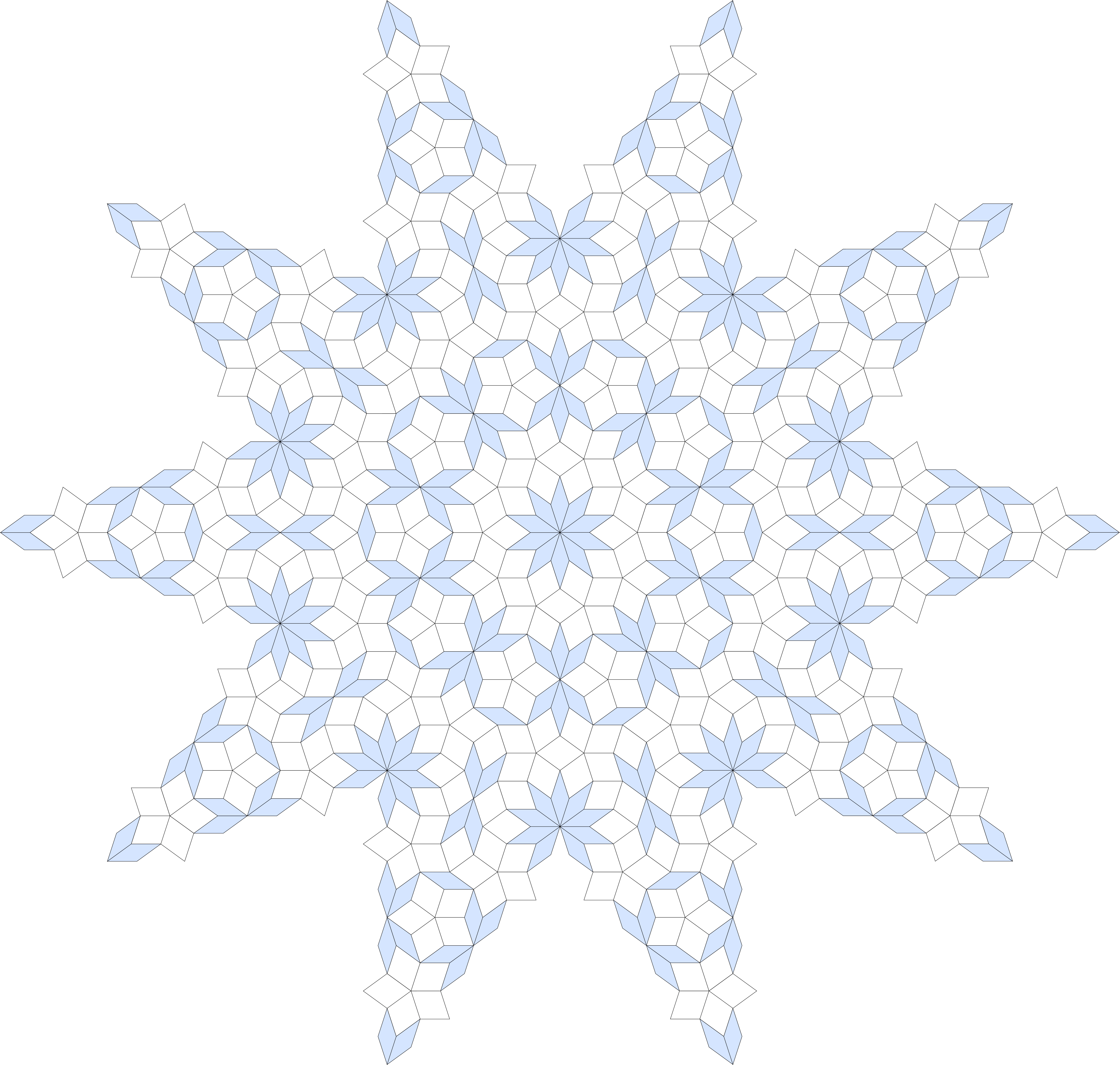}
    \caption{$\sigma_5(S_5)$}
    \end{subfigure}
  \caption{The star $S_5$ and its image by $\sigma_5$.}
  \label{fig:s5_sigma_s5}
\end{figure}

To conclude, let us define one specific tiling that is legal for $\sigma_5$.
Let the star $S_5$ be a corolla of ten narrow rhombuses around a vertex as in Figure \ref{fig:s5_sigma_s5}.
Since a portion of $S_5$ appears in the corner of every metatile (see Figure \ref{fig:sigmafive}), $S_5$ appears in $\sigma_5^2(r_0)$ so that it is a legal pattern for $\sigma_5$. Furthermore $S_5$ appears at the centre of $\sigma_5(S_5)$, and by immediate recursion $\sigma_5^n(S_5)$ appears at the centre of $\sigma_5^{n+1}(S_5)$ for any $n$.

We define the Sub Rosa 5 substitution tiling as $$\mathcal{T}^\infty:= \lim\limits_{n\to\infty} \sigma_5^n(S_5).$$ $\mathcal{T}^\infty$ is a well defined infinite tiling and it has $\sigma_5^n(S_5)$ as its central patterns (see Figure \ref{fig:Tinfty}).

\begin{figure}[h]
  \center \includegraphics[width=\textwidth]{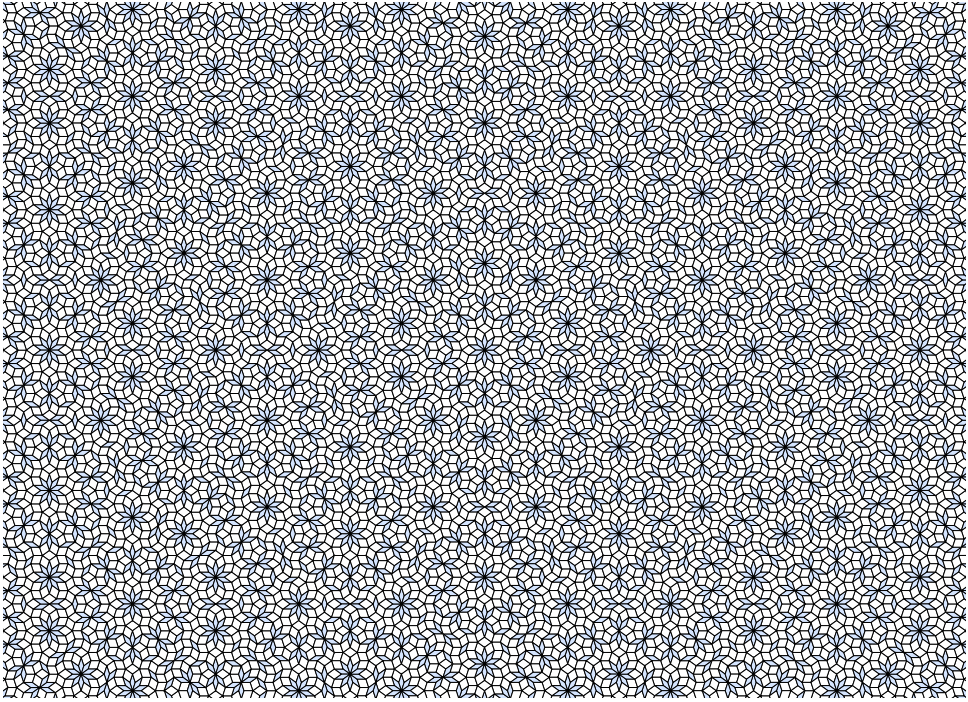}
  \caption{A central patch of $\mathcal{T}^\infty$.}
  \label{fig:Tinfty}
\end{figure}

\begin{proposition}
  $\mathcal{T}^\infty$ is a substitution discrete plane with global 10-fold rotational symmetry.
\end{proposition}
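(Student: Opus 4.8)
The plan is to establish the three properties separately: that $\mathcal{T}^\infty$ is a well-defined tiling, that it is a discrete plane of slope $\e{0}$, and that it has global 10-fold rotational symmetry. The definition as a limit $\lim_{n\to\infty}\sigma_5^n(S_5)$ already carries most of the structural work, since the text has verified that $\sigma_5^n(S_5)$ appears centrally in $\sigma_5^{n+1}(S_5)$; I would first spell out that this nesting makes the limit a genuine increasing union of patches covering larger and larger disks, so $\mathcal{T}^\infty$ is a well-defined $\tileset$-tiling of the whole plane, and that it is legal for $\sigma_5$ by construction.

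For the discrete plane property I would invoke the eigenvalue computation already carried out in the excerpt. The expansion matrix $M_5 = 4M_0(5)+2M_1(5)$ has eigenvalues $\lambda_\Delta=0$, $|\lambda_0|>1$, $|\lambda_1|<1$ on the eigenspaces $\Delta$, $\e{0}$, $\e{1}$, as given in (\ref{eq:eigenvalues5}) and confirmed via Proposition~\ref{prop:eigenvalues_5fold}. The text explicitly states that from these bounds, together with $\e{0}\oplus\e{1}\oplus\Delta=\R{5}$, one concludes planarity of slope $\e{0}$, and that this is the content of the forthcoming Proposition~\ref{prop:eigenvalue_planarity}. So here I would simply apply that proposition: since $\mathcal{T}^\infty$ is legal for $\sigma_5$, and since $\phi$ is strictly expanding on $\e{0}$ and strictly contracting on $\e{1}$ (with $\Delta$ in the kernel), the lifted tiling $\widehat{\mathcal{T}^\infty}$ stays within bounded distance of $\e{0}$, which is exactly the discrete plane condition.

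The global symmetry is where I would do the genuine geometric argument. The point is that the central star $S_5$ is itself invariant under the rotation $\rho$ of angle $\tfrac{2\pi}{10}$ about its central vertex: a corolla of ten narrow rhombuses is symmetric under rotation by $\tfrac{2\pi}{10}$. I would then argue that $\sigma_5$ commutes with $\rho$ in the appropriate sense, because the substitution and its boundary were built to be identical up to rotation on the $2n$-fold symmetric tileset. Hence each approximant satisfies $\sigma_5^n(\rho(S_5)) = \rho(\sigma_5^n(S_5))$, so $\rho(\sigma_5^n(S_5)) = \sigma_5^n(S_5)$, and passing to the limit gives $\rho(\mathcal{T}^\infty) = \mathcal{T}^\infty$, i.e.\ global 10-fold symmetry about the center. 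The main obstacle I anticipate is making the commutation $\sigma_5\circ\rho = \rho\circ\sigma_5$ precise: one must check that the chosen boundary $\partial\sigma_5$ is compatible with $\rho$ so that the rotated substitution of a tile equals the substitution of the rotated tile, rather than some other legal filling. Since the edgeword $131131$ is a palindrome and the substitution image of every edge is identical up to rotation, this compatibility should hold, and verifying it on the metatile pictured in Figure~\ref{fig:sigmafive} is the step that requires care.
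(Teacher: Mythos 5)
Your proposal is correct and follows essentially the same three-part structure as the paper's proof: legality of $\mathcal{T}^\infty$ via the nested patches $\sigma_5^n(S_5)$, planarity via the eigenvalue bounds and Proposition~\ref{prop:eigenvalue_planarity}, and global symmetry via the rotational invariance of the central patches $\sigma_5^n(S_5)$. The only difference is that you make explicit the commutation $\sigma_5\circ\rho=\rho\circ\sigma_5$ underlying the invariance of the approximants, a fact the paper asserts without spelling out; this is a welcome refinement rather than a different route.
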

\begin{proof}
 We decompose the proof in three independent parts:
  \begin{itemize}
  \item $\mathcal{T}^\infty$ is legal for the substitution. Indeed, any finite patch of $\mathcal{T}^\infty$ is in some $\sigma_5^n(S_5)\subset \sigma_5^{n+2}(r_0)$, with $r_0$ the narrow rhombus, as discussed above.
  \item $\mathcal{T}^\infty$ is a discrete plane. Indeed, the tiling $\mathcal{T}^\infty$ is legal for $\sigma_5$, which has the abelianized edgeword $[u_5]=(4,2)$.
   By Proposition \ref{prop:eigenvalue_planarity} it is a discrete plane of slope $\e{0}$ because the eigenvalues of the expansion satisfy the required
   conditions $|\lambda_0|>1$ and $|\lambda_\Delta|,|\lambda_1|<1$.

  \item $\mathcal{T}^\infty$ has global 10-fold rotational symmetry around the origin. Indeed, by construction, any patch is included in some $\sigma_5^n(S_5)$ centered at the origin. This has a 10-fold rotational symmetry around the origin. So the image of the patch under the rotation by angle $\tfrac{\pi}{5}$  around the origin
      is a patch of $\sigma_5^n(S_5)$, which means that it is also in $\mathcal{T}^\infty$.
  \end{itemize}
\end{proof}

For more details on the Sub Rosa 5 substitution tiling see \cite[\S 6]{lutfalla2021thesis}.
\section{Sub Rosa substitution tilings}
\label{sec:subrosa}

In this section we briefly present the construction for the Sub Rosa substitution tilings
for higher values of $n$, as defined in \cite{kari2016}. We then present how to lift the Sub Rosa substitutions in $\R{n}$ and we compute the eigenvalues of the Sub Rosa expansions in $\R{n}$ to prove Theorem \ref{th:subrosa}.

\paragraph{Construction} The Sub Rosa tilings form a family of substitution rhombus tilings with a global $n$-fold rotational symmetry \cite{kari2016}. We will here only consider the case for odd $n$ since the two constructions for odd and even $n$ are somewhat different.
In the Sub Rosa construction the substitution rule is given by the edgeword of the substitution, \emph{i.e.}, the sequence of rhombuses that intersects the edge of the substitution as in Section \ref{sec:planar}. The interior is then tiled using a variant of the Kenyon criterion \cite{kenyon1993}.  The edgeword $\Sigma(n)$ is given by
$$\Sigma(n)= s(n)\cdot\overline{s(3)}\cdot\overline{s(5)} \dots \overline{s(n-2)}\, | \, s(n-2)\dots s(5)\cdot s(3)\cdot \overline{s(n)}$$
where $s(n)=135\dots (n-2)$ with each odd integer $i$ representing the rhombus of angles $\tfrac{i\pi}{n}$ and $\tfrac{(n-i)\pi}{n}$. The notation
  $\overline{u}$ is for the mirror image of $u$, and $|$ denotes the middle of the palindromic word.

The abelianized edgeword $[\Sigma(n)]$ is $(n-1, n-3, n-5, \dots 2)$, that is, there are $n-1$ rhombuses of type 1, $n-3$ rhombuses of type $3$ etc.

\begin{figure}[h]
  \caption{Table of $\Sigma(n)$ for small $n$.}
  \begin{center}
    \begin{tabular}{l c}
      $\Sigma(1)$ & | \\
    $\Sigma(3)$ & 1|1 \\
    $\Sigma(5)$ & 131|131\\
    $\Sigma(7)$ & 135131|131531\\
    $\Sigma(9)$ & 1357131531|1351317531 \\
    $\Sigma(11)$ & 135791315317531|135713513197531 \\
  \end{tabular}
  \end{center}
  \end{figure}

\paragraph{Lifting in $\mathbb{R}^n$}
In the following we consider a vertex-hierarchic substitution $\sigma$ on rhombus tiles with the $n$th roots of unity
$$\vec{v}_j:=\left( \cos\frac{2j\pi}{n}, \sin\frac{2j\pi}{n}\right) = e^{\imag\frac{2j\pi}{n}} \text{, for } j\in\{0,1,\dots,n-1\}$$
as edge directions. Let $\phi$ be the associated expansion. Just as for the example with $n=5$ in Section \ref{sec:planar}, we lift the rhombus tilings to $\mathbb{R}^n$
decomposed into $\lfloor \frac{n}{2} \rfloor$ planes and a line.
We define the plane $\e{j}$ for $0\leq j <  \lfloor \tfrac{n}{2}\rfloor$ by its two generating vectors
$$ \left(\cos\frac{2(2j+1)i\pi}{n}\right)_{0\leq i < n} \qquad\text{and}\qquad \left(\sin\frac{2(2j+1)i\pi}{n}\right)_{0\leq i < n}$$
which we also write as a single complex generating vector $$\left(e^{\imag\frac{2(2j+1)i\pi}{n} } \right)_{0\leq i < n}.$$
We define the line $\Delta = \langle (1,1,\dots,1)\rangle$. Note that spaces $\Delta$ and $\e{j}$ are orthogonal and $\mathbb{R}^n$
is their direct sum.

We also assume that the image of any edge by the substitution is the same up to rotation and translation. This implies that $\phi$ is a cyclic linear function and its
matrix is a circulant matrix, just
as in the $5$-fold case in Section \ref{sec:planar}.
Let $u$ be the edgeword of the substitution, so that $u$ is a palindrome over the alphabet $\{1,3,\dots ,n-2\}$.
In particular, the Sub Rosa substitution $\sigma_n$ fits this setup. Also the Planar Rosa substitution in Section~\ref{sec:planar-rosa}
satisfies these conditions.

Let us denote by $M_{\phi}=\circulant{(m_0,\dots ,m_{n-1})}$ the (circulant) matrix of the linear function $\phi$.
Proposition~\ref{prop:circulant} implies that $\phi$ admits $\Delta$ as an eigenspace with an eigenvalue $\lambda_\Delta$ and the planes $\e{j}$ as complex eigenspaces with complex eigenvalues $\lambda_j$, for $0\leq j <\lfloor \frac{n}{2} \rfloor$.
We define the vector $|\lambda|:=\left(|\lambda_0|,\dots |\lambda_{m-1}|\right)$ and call it the \emph{eigenvalue vector}.

We want to study the eigenvalues of $\phi$. Let us translate this problem into terms of the edgeword $u$.
The $i$th coordinate $[u]_i$ of the abelinized edgeword $[u]$ is the number of rhombuses of type $2i+1$ in the sequence $u$, \ie, the number of rhombuses of angle $(2i+1)\tfrac{\pi}{n}$ in $u$.

\begin{definition}[Elementary matrices]
  \label{def:elementary_matrices}
  Let us define the \emph{elementary matrices} $M_i(n)$ for $0\leq i < \lfloor\tfrac{n}{2}\rfloor$ as the matrices that represent the expansions of substitutions with a single rhombus of angle $2i+1$ in the edgeword.
  More precisely let us define $\fst{i}$ and $\scd{i}$ by
  $$\fst{i}:= i \lceil \tfrac{n}{2}\rceil \mod n, \qquad \scd{i}:= -(i+1)\lceil\tfrac{n}{2}\rceil\mod n.$$
  Then $M_i(n)$ is the matrix with all coefficients to $0$ except for the diagonal $\fst{i}$ which has coefficients $(-1)^i$ and the diagonal $\scd{i}$ which has coefficients $(-1)^{i+1}$. In other words, the first column of the circulant matrix $M_i(n)$ has values $(-1)^i$ and $(-1)^{i+1}$ on rows $\fst{i}$ and $\scd{i}$, respectively, and values $0$ elsewhere.
\end{definition}

For examples of such matrices see the elementary matrices $M_0(5)$ and $M_1(5)$ in Section \ref{sec:planar}.

\begin{lemma}
Definition \ref{def:elementary_matrices} is correct, i.e., the matrix $M_i(n)$ defined is indeed the expansion matrix of the substitution with only one rhombus of angle $2i+1$ on the edgeword.
\end{lemma}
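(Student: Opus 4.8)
The plan is to reduce the whole statement to the first column of the expansion matrix and then to one trigonometric identification. Since the image of every edge is the same up to rotation, the substitution is cyclic, so its matrix $M_\phi$ is circulant and is completely determined by its first column. By the definition of the lifted expansion, that first column is exactly $\widehat{\phi(\vec{v}_0)}$, the lift of the image of the edge $\vec{v}_0$. Hence it suffices to prove that $\widehat{\phi(\vec{v}_0)} = (-1)^i\vec{e}_{\fst{i}} + (-1)^{i+1}\vec{e}_{\scd{i}}$: a circulant matrix with this first column is precisely $M_i(n)$ by Definition~\ref{def:elementary_matrices}.

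Next I would describe $\phi(\vec{v}_0)$ geometrically. With an edgeword consisting of the single rhombus of type $2i+1$, the image of $\vec{v}_0$ is that one rhombus, and $\phi(\vec{v}_0)$ is the diagonal emanating from the base vertex that bisects the odd angle $\frac{(2i+1)\pi}{n}$; this diagonal has length $2\cos\frac{(2i+1)\pi}{2n}$. By the construction the image edge makes the angle $-\frac{\pi}{2n}$ with $\vec{v}_0$ — the same rotation is contributed by every rhombus, which is exactly what lets the edgeword lift to a genuine similitude. Since the diagonal bisects the odd angle, the two rhombus edges emanating from the base vertex are the unit vectors whose arguments are $-\frac{\pi}{2n}\pm\frac{(2i+1)\pi}{2n}$, namely $\frac{i\pi}{n}$ and $-\frac{(i+1)\pi}{n}$, and $\phi(\vec{v}_0)$ is their sum.

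The core of the argument is then to recognize these two unit vectors as signed edge directions $\pm\vec{v}_k$ and to read off their indices. The unit vector of argument $\frac{i\pi}{n}$ equals $\vec{v}_{i/2}$ when $i$ is even and $-\vec{v}_{(i-n)/2}$ when $i$ is odd, while the one of argument $-\frac{(i+1)\pi}{n}$ equals $-\vec{v}_{-(i+1+n)/2}$ when $i$ is even and $\vec{v}_{-(i+1)/2}$ when $i$ is odd. A short computation modulo $n$, using $\ceil{n/2}=\frac{n+1}{2}$, identifies these residues with $\fst{i}=i\ceil{n/2}$ and $\scd{i}=-(i+1)\ceil{n/2}$, and shows that in both parities the accompanying sign is $(-1)^i$ on $\vec{v}_{\fst{i}}$ and $(-1)^{i+1}$ on $\vec{v}_{\scd{i}}$. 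Lifting each $\vec{v}_k$ to $\vec{e}_k$ then yields $\widehat{\phi(\vec{v}_0)} = (-1)^i\vec{e}_{\fst{i}} + (-1)^{i+1}\vec{e}_{\scd{i}}$, and circulance upgrades this first column to the full matrix $M_i(n)$.

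I expect the main obstacle to be this last paragraph: the parity-dependent identification of the two edge directions and the verification that the modular residues and signs line up exactly with $\fst{i}$, $\scd{i}$ and $(-1)^i$; the bookkeeping (especially fixing the orientation $-\frac{\pi}{2n}$ rather than $+\frac{\pi}{2n}$) is where errors would creep in. The reduction to the first column and the length/bisector claims are routine. As a sanity check I would confirm, via Proposition~\ref{prop:circulant}, that the circulant matrix just obtained has eigenvalue $2\cos\frac{(2i+1)\pi}{2n}\,\exp^{-\imag\pi/(2n)}$ on $\e{0}$ — matching both the diagonal length and the rotation angle $-\frac{\pi}{2n}$ — and that for $n=5$ it reproduces the explicit matrices $M_0(5)$ and $M_1(5)$, which pins down the orientation convention.
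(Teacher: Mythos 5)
Your proposal is correct and takes essentially the same route as the paper: reduce by circulance to the first column $\widehat{\phi(\vec{v}_0)}$, observe that for the single-rhombus edgeword the image of $\vec{v}_0$ is one rhombus of type $2i+1$ crossed along the diagonal bisecting its angle, and identify the two edge vectors emanating from the base vertex as $(-1)^i\vec{v}_{\fst{i}}$ and $(-1)^{i+1}\vec{v}_{\scd{i}}$. The only difference is bookkeeping: the paper reads this identification off the rotation ordering of the $2n$ directions $\pm\vec{v}_k$ (each angular step of $\tfrac{\pi}{n}$ shifts the index by $\ceil{\tfrac{n}{2}}$ and flips the sign), whereas you compute explicit complex arguments $\tfrac{i\pi}{n}$ and $-\tfrac{(i+1)\pi}{n}$ and carry out the parity-dependent mod-$n$ arithmetic directly --- an equivalent, if somewhat more explicit, version of the same computation.
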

\begin{proof}
  Let us look at the substitution in $\R{2}$ rather than in $\R{n}$ for this proof.
Take the set of the edge directions $\{\vec{v}_i, -\vec{v}_i, 0\leq i < n\}$ in the rotation ordering as in Figure \ref{fig:pmvi}.
In this set  $\vec{v}_i$ has neighbours $-\vec{v}_{i+\left\lceil \frac{n}{2}\right\rceil}$ in the positive orientation and $-\vec{v}_{i-\left\lceil \frac{n}{2}\right\rceil}$ in the negative or clockwise orientation. Note that since $n$ is odd we have $2\ceil{\tfrac{n}{2}} = 1 \mod n$, so we have as expected that the second neighbours of $v_i$ are $v_{i+1}$ and $v_{i-1}$.

  \begin{figure}[t]
    \center
    \includegraphics[width=8cm]{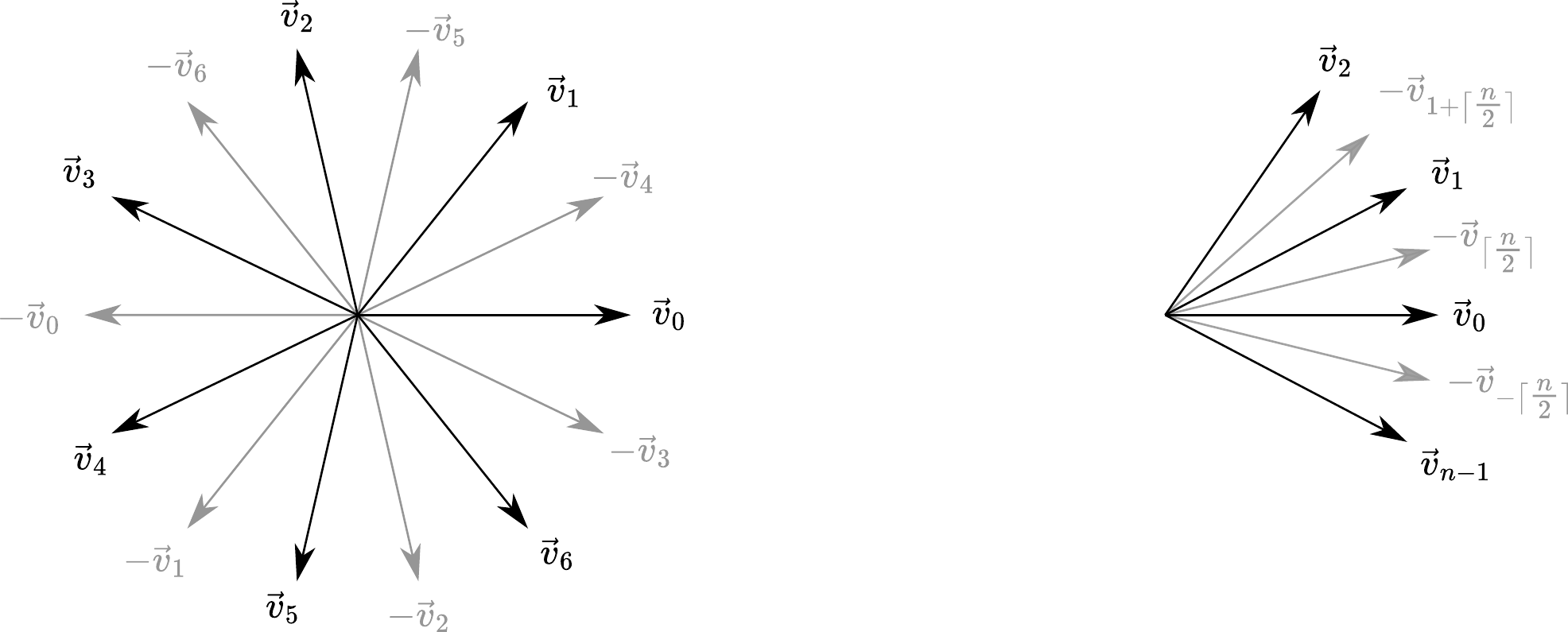}
    \caption{The vectors $\pm \vec{v}_i$ and rotation ordering. On the left for $n=7$ and on the right in the general case.}
    \label{fig:pmvi}
  \end{figure}

  To determine the first column of the matrix $M_i(n)$, consider the image of the edge $\vec{v}_0$ by the substitution with only the single rhombus of type
  $2i+1$ in the edgeword:
  \begin{itemize}
  \item In the case of $i=0$ the image is a rhombus with vectors $\vec{v}_0$ and $-\vec{v}_{ - \left\lceil \frac{n}{2}\right\rceil }$.
  \item In the case of $i=1$ the image is a rhombus with vectors $-\vec{v}_{\left\lceil \frac{n}{2}\right\rceil }$ and $\vec{v}_{-2\left\lceil \frac{n}{2}\right\rceil}$.
  \item In the general case, for a rhombus of type $2i+1$, the image is a rhombus with vectors $(-1)^i\vec{v}_{\fst{i}}$ and $(-1)^{i+1}\vec{v}_{\scd{i}}$.
  \end{itemize}
  See Figure \ref{fig:vect_metaedge} for an illustration of this. Thus, as claimed, the first column of $M_i(n)$ has
  values $(-1)^i$ and $(-1)^{i+1}$ on rows $\fst{i}$ and $\scd{i}$, respectively, and values $0$ elsewhere.

  \begin{figure}[t]
    \includegraphics[width=\textwidth]{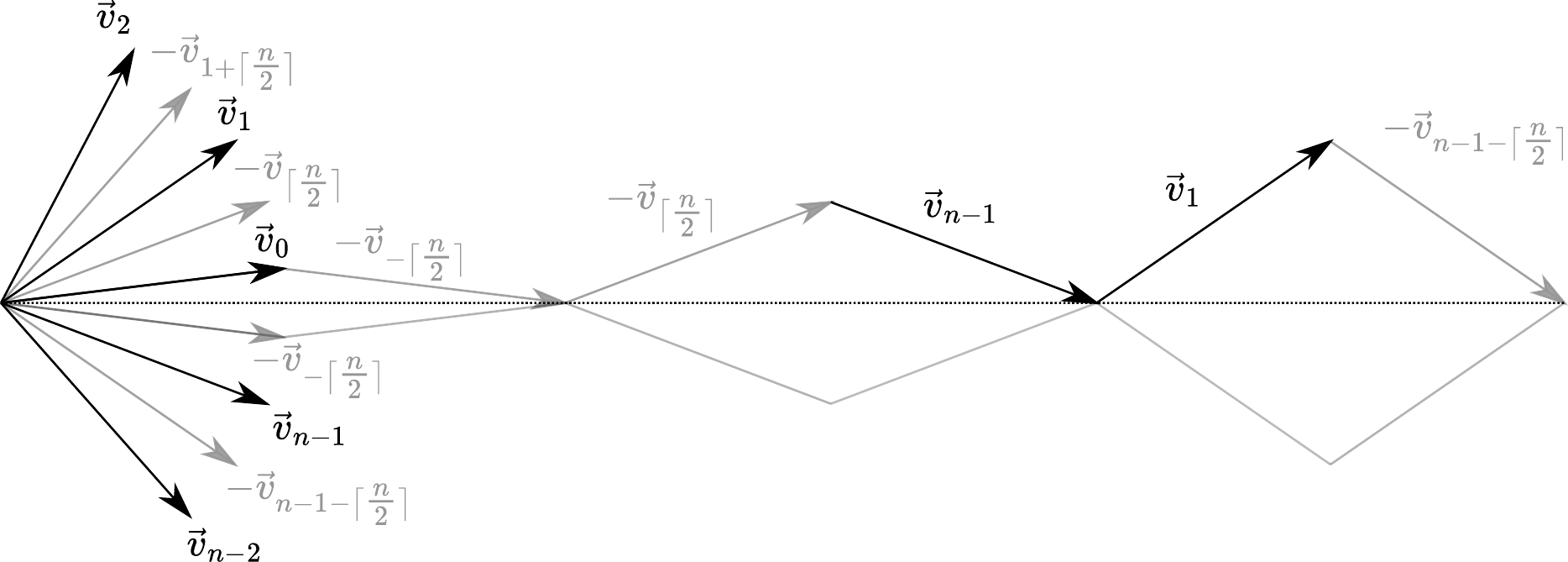}
    \caption{Vectors on the rhombuses of the edgeword.}
    \label{fig:vect_metaedge}
  \end{figure}
\end{proof}

\begin{lemma}[Decomposition as a linear combination of elementary matrices]
  The expansion matrix $M_\phi$ is a linear combination of the elementary matrices $M_i(n)$ with the coefficients from the abelianized edgeword $[u]$:
  $$ M_\phi = \sum\limits_{i=0}^{\lfloor\frac{n}{2}\rfloor -1} [u]_iM_i(n).$$

   \label{lem:coef_rhomb_subrosa}
\end{lemma}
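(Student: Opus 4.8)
The plan is to reduce the claimed matrix identity to an identity between first columns of circulant matrices, and then to obtain the first column of $M_\phi$ from the geometry of the meta-edge. Since $M_\phi$ is circulant and every $M_i(n)$ is circulant by Definition~\ref{def:elementary_matrices}, the right-hand side $\sum_i [u]_iM_i(n)$ is again circulant, so the two sides are equal as soon as their first columns agree. Recalling that the matrix acts on column vectors from the right, the first column of $M_\phi$ is $M_\phi\cdot\vec{e}_0\trans = \widehat{\phi(\vec{v}_0)}\trans$, that is, the lift of the abelianized path from the origin to the $\vec{v}_0$-corner of the meta-edge. Dually, the first column of $\sum_i [u]_iM_i(n)$ is $\sum_i [u]_i$ times the first columns of the $M_i(n)$, which by the previous lemma are the lifts of $(-1)^i\vec{v}_{\fst{i}} + (-1)^{i+1}\vec{v}_{\scd{i}}$.

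First I would describe the meta-edge, i.e.\ the image of $\vec{v}_0$ under the substitution, as a chain of rhombuses $r_1,\dots,r_m$ read off the edgeword $u$ and glued consecutively, so that the path runs from a start vertex $P_0$ through shared intermediate vertices $P_1,\dots,P_{m-1}$ to an end vertex $P_m$, each step $P_{k-1}\to P_k$ being the diagonal traversal of the single rhombus $r_k$ (see Figure~\ref{fig:vect_metaedge}). By the computation in the proof of the previous lemma, the displacement of such a single step is exactly the vector $(-1)^i\vec{v}_{\fst{i}} + (-1)^{i+1}\vec{v}_{\scd{i}}$, where $2i+1$ is the type of $r_k$.

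The key step is the additivity of the lift along this chain. Because the lift is determined by the abelianized path and is additive under concatenation of edge paths, as recalled in Section~\ref{sec:settings}, the lifted total displacement telescopes:
\begin{equation*}
  \widehat{\phi(\vec{v}_0)} = \widehat{P_m - P_0} = \sum_{k=1}^m \widehat{P_k - P_{k-1}}.
\end{equation*}
Grouping the links of the chain by type, a rhombus of type $2i+1$ occurs exactly $[u]_i$ times by the definition of the abelianized edgeword, so this sum equals $\sum_i [u]_i\, \widehat{(-1)^i\vec{v}_{\fst{i}} + (-1)^{i+1}\vec{v}_{\scd{i}}}$, which is precisely $\sum_i [u]_i$ times the first column of $M_i(n)$. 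Matching this with the first column of $M_\phi$ and invoking circulance then concludes the argument.

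I expect the main obstacle to be this telescoping/additivity step, and it splits into two points that deserve care. One must justify that consecutive rhombuses of the chain are glued vertex-to-vertex, so that each link is traversed along a diagonal and the single-rhombus computation of the previous lemma applies verbatim to each $r_k$; and one must justify that the lift of the whole meta-edge is genuinely the sum of the lifts of the individual diagonal steps. The latter is exactly the order- and cancellation-invariance of the abelianized path from Section~\ref{sec:settings}, and it is what makes the decomposition depend only on the multiset of rhombus types, i.e.\ on $[u]$ rather than on the word $u$ itself.
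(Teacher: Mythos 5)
Your proposal is correct and follows essentially the same route as the paper's own proof: both reduce the matrix identity to computing $\phi(\vec{v}_0)$ (equivalently, the first column of $M_\phi$) as the sum of the per-rhombus diagonal displacements $(-1)^i\vec{v}_{\fst{i}} + (-1)^{i+1}\vec{v}_{\scd{i}}$ from the preceding lemma, grouped by type to produce the coefficients $[u]_i$, then lift and conclude by circulance. Your write-up merely makes explicit two points the paper leaves implicit — that matching first columns suffices because both sides are circulant, and that the lift telescopes along the chain of rhombuses by the order- and cancellation-invariance of abelianized paths.
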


   \begin{proof}
     The proof is reduced to the fact that since the edgeword is $u$ then (up to reordering) $\phi(\vec{v}_0)$ is $[u]_0$ times the vectors of rhombus $1$ plus $[u]_1$ times the vectors of rhombus $3$ etc. This gives
     $$ \phi(\vec{v}_0) = \sum\limits_{i=0}^{\lfloor\frac{n}{2}\rfloor -1} [u]_i\left((-1)^i\vec{v}_{\fst{i}} +(-1)^{i+1}\vec{v}_{\scd{i}}\right),$$
     which when we consider the lifted expansion translates to
     $$ \phi(\vec{e}_0) = \sum\limits_{i=0}^{\lfloor\frac{n}{2}\rfloor -1} [u]_i\left((-1)^i\vec{e}_{\fst{i}} +(-1)^{i+1}\vec{e}_{\scd{i}}\right).$$
From this we get the expected decomposition for $M_\phi$.

\end{proof}

Let us now use this decomposition to get a nice formula to compute the eigenvalues from the edgeword.

\begin{definition}[Eigenvalue matrix]
  \label{def:eigenmatrix}
 Let us define the \emph{eigenvalue matrix} $$N_n := \Bigg( 2\cos\left(\frac{(2i+1)(2j+1)\pi}{2n}\right)\Bigg)_{0\leq i,j< \left\lfloor\frac{n}{2}\right\rfloor}.$$
\end{definition}
Note that the eigenvalue matrix $N_n$ is symmetric: $N_n\trans=N_n$. This means that it does not matter whether we multiply
it with a vector from the left or from the right.

\begin{lemma}[Eigenvalues]
\label{lemma:planarrosa_eigenvalues}
  The expansion matrix $M_\phi$ has $\e{0}, \e{1}, \dots ,\e{\lfloor \tfrac{n}{2}\rfloor -1}$ and $\Delta$ as eigenspaces.
  We denote by $\lambda_{j}$ the eigenvalue of  $M_\phi$ on the eigenspace $\e{j}$, and by $\lambda_\Delta$ on the eigenspace $\Delta$.
  We denote by $|\lambda|$ the vector of the moduli of the eigenvalues $\lambda_{j}$.
  We have $\lambda_\Delta = 0$ and  $$ |\lambda|\trans = \left| \eigenmatrix{n}\cdot [u]\trans \right|,$$
  where $\eigenmatrix{n}$ is the eigenvalue matrix and $u$ is the edgeword of $\phi$. More precisely we have
  $$\lambda_{j} = \left(\sum\limits_{i=0}^{\lfloor\tfrac{n}{2}\rfloor -1} [u]_i2\cos\left(\frac{(2j+1)(2i+1)\pi}{2n}\right)\right)e^{- \imag \frac{(2j+1)\pi}{2n}}.$$
\end{lemma}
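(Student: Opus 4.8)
The plan is to combine the two previous lemmas — the decomposition $M_\phi = \sum_i [u]_i M_i(n)$ (Lemma \ref{lem:coef_rhomb_subrosa}) and Proposition \ref{prop:circulant} on circulant eigenvectors — and then simplify the resulting trigonometric sum. First I would apply Proposition \ref{prop:circulant} to each elementary matrix $M_i(n)$ separately. Recall that $M_i(n)$ is circulant with nonzero entries $(-1)^i$ on diagonal $\fst{i} = i\lceil n/2\rceil \bmod n$ and $(-1)^{i+1}$ on diagonal $\scd{i} = -(i+1)\lceil n/2\rceil \bmod n$. Evaluating the eigenvector $(1,\zeta,\dots,\zeta^{n-1})\trans$ for the root of unity $\zeta = e^{\imag 2(2j+1)\pi/n}$ attached to the complex eigenspace $\e{j}$, Proposition \ref{prop:circulant} gives the eigenvalue of $M_i(n)$ on $\e{j}$ as
$$
(-1)^i \zeta^{\fst{i}} + (-1)^{i+1}\zeta^{\scd{i}} = (-1)^i\left(\zeta^{\fst{i}} - \zeta^{\scd{i}}\right).
$$
The line $\Delta$ corresponds to $\zeta = 1$, and since each $M_i(n)$ has one $+1$ and one $-1$ in every row, its row sums vanish, so $\lambda_\Delta = 0$ on each elementary matrix and hence on $M_\phi$.

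Next I would substitute $\fst{i}, \scd{i}$ and use $n$ odd to simplify the exponents. With $\zeta = e^{\imag 2(2j+1)\pi/n}$ and $\lceil n/2\rceil = (n+1)/2$, the key identity is $2\lceil n/2\rceil \equiv 1 \pmod n$, so $\zeta^{\lceil n/2\rceil}$ is the square root of $\zeta$ of the form $e^{\imag(2j+1)\pi/n}$ (here it is clean to carry the computation with the actual complex exponentials rather than reducing modulo $n$, since $\zeta^{k}$ depends only on $k \bmod n$). This turns $\zeta^{\fst{i}}$ into $e^{\imag i(2j+1)\pi/n}$ and $\zeta^{\scd{i}}$ into $e^{-\imag (i+1)(2j+1)\pi/n}$. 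Factoring out a common phase $e^{-\imag(2j+1)\pi/(2n)}$ from the difference $(-1)^i(e^{\imag i(2j+1)\pi/n} - e^{-\imag(i+1)(2j+1)\pi/n})$ and absorbing the sign $(-1)^i = e^{\imag i\pi}$ into the exponent, the bracket collapses via $e^{\imag\theta}-e^{-\imag\theta}$ or $e^{\imag\theta}+e^{-\imag\theta}$ into $2\cos\!\big(\tfrac{(2j+1)(2i+1)\pi}{2n}\big)$ times the fixed phase $e^{-\imag(2j+1)\pi/(2n)}$. The crucial point is that this phase $e^{-\imag(2j+1)\pi/(2n)}$ does \emph{not} depend on $i$: every elementary matrix contributes to $\lambda_j$ the same argument, so summing over $i$ with weights $[u]_i$ (by Lemma \ref{lem:coef_rhomb_subrosa}) yields
$$
\lambda_j = \left(\sum_{i=0}^{\lfloor n/2\rfloor - 1} [u]_i\, 2\cos\!\left(\frac{(2j+1)(2i+1)\pi}{2n}\right)\right) e^{-\imag\frac{(2j+1)\pi}{2n}},
$$
which is exactly the claimed formula, and taking moduli gives $|\lambda|\trans = |N_n\cdot[u]\trans|$ since the phase has modulus one.

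The main obstacle I expect is precisely the bookkeeping in this phase-factoring step: one must verify that the arguments of the two terms $\zeta^{\fst{i}}$ and $-\zeta^{\scd{i}}$ are symmetric about the same axis $-\tfrac{(2j+1)\pi}{2n}$ for \emph{all} $i$ simultaneously, so that the sign $(-1)^i$ and the alternating difference/sum structure cohere into a single cosine with a $j$-dependent but $i$-independent phase. This is where the oddness of $n$ (guaranteeing $2\lceil n/2\rceil \equiv 1$) and the specific definitions of $\fst{i}, \scd{i}$ are doing the real work; getting the signs and the half-angle $\tfrac{\pi}{2n}$ correct is the delicate part. Everything else — the vanishing of $\lambda_\Delta$, the linearity over the decomposition, and passing to moduli — is routine once this identity is in hand. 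As a sanity check I would confirm the formula reproduces the $n=5$ values of Proposition \ref{prop:eigenvalues_5fold}, where $N_5$ has entries $2\cos\tfrac{\pi}{10}, 2\cos\tfrac{3\pi}{10}, 2\cos\tfrac{3\pi}{10}, -2\cos\tfrac{\pi}{10}$ and the phases are $e^{-\imag\pi/10}, e^{-\imag 3\pi/10}$.
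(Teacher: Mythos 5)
Your plan is essentially the paper's own proof: decompose $M_\phi=\sum_i [u]_i M_i(n)$ via Lemma \ref{lem:coef_rhomb_subrosa}, apply Proposition \ref{prop:circulant} to each elementary matrix (with $\lambda_\Delta=0$ from the vanishing row sums), observe that the phase $e^{-\imag(2j+1)\pi/(2n)}$ is independent of $i$, sum linearly and take moduli — the paper defers the same exponential bookkeeping you describe to its appendix Lemma \ref{lemma:appendices_eigen_elem_mat}. One sign to fix in your sketch of that bookkeeping: since $2j+1$ is odd, $\zeta^{\lceil n/2\rceil}=e^{\imag(2j+1)(n+1)\pi/n}=-e^{\imag(2j+1)\pi/n}$, hence $\zeta^{\fst{i}}=(-1)^i e^{\imag i(2j+1)\pi/n}$ and $\zeta^{\scd{i}}=(-1)^{i+1}e^{-\imag(i+1)(2j+1)\pi/n}$; these internal signs cancel against the external $(-1)^i$, so the bracket becomes the \emph{sum} $e^{\imag i(2j+1)\pi/n}+e^{-\imag(i+1)(2j+1)\pi/n}$, which collapses to $2\cos\bigl(\tfrac{(2i+1)(2j+1)\pi}{2n}\bigr)e^{-\imag(2j+1)\pi/(2n)}$, whereas the difference $(-1)^i\bigl(e^{\imag i(2j+1)\pi/n}-e^{-\imag(i+1)(2j+1)\pi/n}\bigr)$ you wrote would instead produce $2\imag(-1)^i\sin\bigl(\tfrac{(2i+1)(2j+1)\pi}{2n}\bigr)e^{-\imag(2j+1)\pi/(2n)}$.
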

\begin{proof}
  We use the  decomposition
   \[ M_\phi = \sum\limits_{i=0}^{\lfloor\frac{n}{2}\rfloor -1} [u]_i\cdot M_i(n) \]
   of the expansion matrix and use the eigenvalues of the elementary matrices.
  The matrix $M_\phi$ and the elementary matrices $M_i(n)$ are circulant so by Proposition~\ref{prop:circulant}
  they have eigenspaces $\e{0}, \e{1}, \dots , \e{\lfloor \tfrac{n}{2}\rfloor -1}$ and $\Delta$.
  By the same Proposition, the eigenvalue of the elementary matrix
  $M_i(n)$ corresponding to the eigenspace $\Delta$ is $\lambda_{i,\Delta}=0$, and the eigenvalue $\lambda_{i,j}$ corresponding to
  the eigenspace $\e{j}$ is
   \begin{align*}\lambda_{i,j}(n) &= (-1)^i\left(e^{\imag\frac{2(2j+1)\fst{i}\pi}{n}}-e^{\imag\frac{2(2j+1)\scd{i}\pi}{n}}\right) \\
    &= 2\cos\left(\frac{(2j+1)(2i+1)\pi}{2n}\right) e^{- \imag \frac{(2j+1)\pi}{2n}}. \end{align*}
  (See Lemma \ref{lemma:appendices_eigen_elem_mat} in the Appendices for details on this manipulation of exponentials.)
  Let us remark that the argument of the complex number $\lambda_{i,j}(n)$ does not depend on $i$.

  From this we have that the required eigenvalues of $M_\phi$ are $\lambda_\Delta = 0$ and
  \begin{align*}
    \lambda_j &= \sum\limits_{i=0}^{\left\lfloor\frac{n}{2}\right\rfloor} [u]_i \lambda_{i,j}(n) \\
    &= \left(\sum\limits_{i=0}^{\lfloor\tfrac{n}{2}\rfloor -1} [u]_i2\cos\left(\frac{(2j+1)(2i+1)\pi}{2n}\right)\right)e^{- \imag \frac{(2j+1)\pi}{2n}}. \end{align*}
  This further implies that $$|\lambda|\trans = \left| \eigenmatrix{n}\cdot [u]\trans \right|.$$
\end{proof}

Let us now apply what we have learned to the Sub Rosa substitution $\sigma_n$. Recall that the edgeword is denoted by $\Sigma(n)$, and the
abelianized edgeword $[\Sigma(n)]$ is $(n-1, n-3, n-5, \dots 2)$, \ie, $[\Sigma(n)]_i=n-(2i+1)$.
Denote by $\lambda_{j}(n)$ the eigenvalue of the Sub Rosa $n$ substitution on the eigenspace $\e{j}$, and by $\lambda_\Delta$ on the eigenspace $\Delta$.

\begin{lemma}[Eigenvalues of Sub Rosa]
  For all $n\geq 7$, $|\lambda_{0}(n)|>1$ and $|\lambda_{1}(n)|>1$.
  \label{lem:weirdtrigsums}
  \end{lemma}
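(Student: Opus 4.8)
The plan is to reduce the statement, via Lemma~\ref{lemma:planarrosa_eigenvalues}, to an explicit inequality about a weighted cosine sum, and then to evaluate that sum in closed form. Since the Sub Rosa abelianized edgeword satisfies $[\Sigma(n)]_i = n-(2i+1)$, Lemma~\ref{lemma:planarrosa_eigenvalues} gives, for $j\in\{0,1\}$,
$$|\lambda_j(n)| = \left| \sum_{i=0}^{\lfloor n/2\rfloor - 1} \big(n-(2i+1)\big)\, 2\cos\!\left(\frac{(2j+1)(2i+1)\pi}{2n}\right)\right| = 2\,|S_j|,$$
where $S_j := \sum_{k\ \mathrm{odd},\ 1\leq k\leq n-2}(n-k)\cos(k\alpha_j)$ and $\alpha_j := \frac{(2j+1)\pi}{2n}$. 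So it suffices to compute $S_j$ and to bound $2|S_j|$ from below.

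First I would evaluate $S_j$ by reindexing $k=2m-1$ for $m=1,\dots,N$ with $N=\tfrac{n-1}{2}$, and splitting the coefficient as $n-k=(n+1)-2m$. This writes $S_j = (n+1)C_1 - 2C_2$, where $C_1=\sum_{m=1}^N\cos((2m-1)\alpha_j)$ and $C_2=\sum_{m=1}^N m\cos((2m-1)\alpha_j)$. The two classical Dirichlet-type identities $\sum_{m=1}^N\cos((2m-1)\alpha)=\frac{\sin(2N\alpha)}{2\sin\alpha}$ and $\sum_{m=1}^N\sin((2m-1)\alpha)=\frac{\sin^2(N\alpha)}{\sin\alpha}$ give $C_1$ at once, and differentiating the second in $\alpha$ produces $2C_2-C_1$, hence $C_2$. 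Substituting back yields the compact expression
$$S_j = nC_1 - \frac{d}{d\alpha}\frac{\sin^2(N\alpha)}{\sin\alpha}\bigg|_{\alpha=\alpha_j}.$$

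The key simplification is that $N=\tfrac{n-1}{2}$ forces $2N\alpha_j=\tfrac{(2j+1)\pi}{2}-\alpha_j$, so that $\sin(2N\alpha_j)=(-1)^j\cos\alpha_j$ and $\cos(2N\alpha_j)=(-1)^j\sin\alpha_j$. Feeding these into the expression above, the terms proportional to $\sin\alpha_j$ cancel and the computation collapses to the clean closed form $S_j=\frac{\cos\alpha_j}{2\sin^2\alpha_j}$. Since $\alpha_0=\tfrac{\pi}{2n}$ and $\alpha_1=\tfrac{3\pi}{2n}$ both lie in $(0,\tfrac{\pi}{2})$ for $n\geq 7$, their cosines are positive and we obtain
$$|\lambda_j(n)| = \frac{\cos\alpha_j}{\sin^2\alpha_j},\qquad j\in\{0,1\}.$$
I expect this closed-form evaluation of the weighted sum to be the main obstacle: it is the only genuinely computational step, and the cancellations leading to the collapse must be tracked carefully.

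With the closed form in hand the conclusion is elementary. For $j=0$, the quantity $|\lambda_0(n)|=\frac{\cos(\pi/2n)}{\sin^2(\pi/2n)}$ grows like $(2n/\pi)^2$ and is clearly larger than $1$ for all $n\geq 7$. For $j=1$, set $\beta=\alpha_1=\tfrac{3\pi}{2n}$; using $\sin^2\beta=1-\cos^2\beta$, the inequality $\frac{\cos\beta}{\sin^2\beta}>1$ is equivalent to $\cos^2\beta+\cos\beta-1>0$, \ie, to $\cos\beta>\tfrac{\sqrt5-1}{2}$. As $\beta=\tfrac{3\pi}{2n}$ decreases in $n$, it suffices to check the extreme case $n=7$, where $\beta=\tfrac{3\pi}{14}$ and $\cos\tfrac{3\pi}{14}\approx 0.78>\tfrac{\sqrt5-1}{2}\approx 0.62$, so the inequality holds for every odd $n\geq 7$. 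I would also remark that the threshold $\cos\beta=\tfrac{\sqrt5-1}{2}$ is crossed precisely between $n=5$ and $n=7$ (at $n=5$ one has $\cos\tfrac{3\pi}{10}<\tfrac{\sqrt5-1}{2}$, giving $|\lambda_1(5)|<1$), which is exactly the reason the $n=5$ Sub Rosa tiling is a discrete plane while the larger Sub Rosa tilings are not.
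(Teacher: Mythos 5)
Your proof is correct, and it reaches the paper's pivotal identity by a genuinely different computation. Both arguments hinge on the closed form $|\lambda_j(n)|=\cos\alpha_j/\sin^2\alpha_j$ with $\alpha_j=\tfrac{(2j+1)\pi}{2n}$: the paper obtains it (Lemma~\ref{lemma:appendices_weird_trig_sum}) by multiplying the weighted sum by $\sin^2\alpha_j$, expanding the product into cosines and telescoping, whereas you split the sum as $nC_1$ minus the derivative of the classical identity $\sum_{m=1}^{N}\sin((2m-1)\alpha)=\sin^2(N\alpha)/\sin\alpha$ and use $2N\alpha_j=\tfrac{(2j+1)\pi}{2}-\alpha_j$ to collapse everything; the cancellation works exactly as you claim, since $N=\tfrac{n-1}{2}$ gives $\tfrac{n}{2}-N=\tfrac12$ and the terms proportional to $\sin\alpha_j$ drop out, leaving $S_j=\cos\alpha_j/(2\sin^2\alpha_j)$. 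Where the two proofs really part ways is the endgame: the paper fixes $j$ and proves that $n\mapsto|\lambda_j(n)|$ is increasing by a calculus argument (positivity of the derivative of $h(x)=\cos(1/x)/\sin^2(1/x)$), then verifies $|\lambda_0(7)|,|\lambda_1(7)|>1$ numerically; you instead observe that $\cos\beta/\sin^2\beta>1$ is algebraically equivalent to $\cos\beta>\tfrac{\sqrt5-1}{2}$, which makes the monotonicity in $n$ trivial (cosine is decreasing on $(0,\tfrac{\pi}{2})$) and reduces the whole lemma to the single comparison $\cos\tfrac{3\pi}{14}>\tfrac{\sqrt5-1}{2}$ --- which, note, also covers your $j=0$ case more cleanly than the informal ``grows like $(2n/\pi)^2$'' remark, since $\alpha_0<\alpha_1$ implies $\cos\alpha_0>\cos\alpha_1$. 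Your route buys two things: it eliminates the calculus entirely, and the golden-ratio threshold explains precisely why $n=5$ is the exceptional case ($\cos\tfrac{3\pi}{10}<\tfrac{\sqrt5-1}{2}$, hence $|\lambda_1(5)|<1$, matching Table~\ref{table:subrosaeigenvalues} and the planarity of Sub Rosa~5). What the paper's telescoping buys is self-containedness: it needs no prior summation formulas or differentiation under the sum, only product-to-sum expansion and reindexing.
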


\begin{proof}
  We study the sequence $|\lambda_{j}(2k+1)|$ with a fixed $j$ and prove that it is increasing with $k$. Then we only have to calculate that $|\lambda_{0}(7)|$ and $|\lambda_{1}(7)|$ are both greater than 1.

  Let us take two integers $j<k$, and denote $\theta_{j,k}=\frac{(2j+1)\pi}{2(2k+1)}$ and
  $$
  C_{j,k} = \sum\limits_{i=0}^{k-1}4(k-i)\cos\left((2i+1)\theta_{j,k}\right).
  $$
  Then by Lemma~\ref{lemma:planarrosa_eigenvalues}
  $$
  |C_{j,k}| = \left| \sum\limits_{i=0}^{k-1} ((2k+1)-(2i+1))2\cos\left(\frac{(2j+1)(2i+1)\pi}{2(2k+1)}\right)\right | = |\lambda_{j}(2k+1)|.
  $$
 Let us prove that  $$C_{j,k}\cdot \sin^2(\theta_{j,k}) = \cos(\theta_{j,k})$$
  We will write $\theta$ for $\theta_{j,k}$ to avoid clutter:

  \begin{align}
    & C_{j,k} \sin^2\left(\theta\right) = \sum\limits_{i=0}^{k-1}4(k-i)\cos\left((2i+1)\theta\right)\sin^2\left(\theta\right) \label{eq:trigsum1}\\
    &= \sum\limits_{i=0}^{k-1}(k-i)\left( 2\cos((2i+1)\theta) - \cos((2i+3)\theta) - \cos((2i-1)\theta)  \right) \label{eq:trigsum2}\\
    &= \cos\theta. \label{eq:trigsum3}
  \end{align}

  From line (\ref{eq:trigsum1}) to (\ref{eq:trigsum2}) we rewrite cosines and sines as sums of exponentials. Then we expand the product and pair the exponential terms by argument and find three cosines terms.
  From line (\ref{eq:trigsum2}) to (\ref{eq:trigsum3}) we split the sum in three and reindex to have $\cos((2i+1)\theta)$ terms in each sum. Then we merge back and the terms of the sum cancel out. We end up only with boundary terms which sum up to $\cos\theta$.
  For full details of these trigonometric manipulations see Lemma \ref{lemma:appendices_weird_trig_sum} in the Appendices.

Let us remark that $\theta_{j,k} \in (0,\frac{\pi}{2})$ for $j<k$, so we have $$C_{j,k} = \frac{\cos\theta_{j,k}}{\sin^2\theta_{j,k}} > 0,$$
and therefore $C_{j,k}=|\lambda_{j}(2k+1)|$.

Let us fix $j\in\mathbb{N}$ and consider the function $$f: k \to \frac{\cos\theta_{j,k}}{\sin^2\theta_{j,k}} \qquad \text{with } k>j.$$
Note that since $j$ is fixed we can rewrite $\theta_{j,k}$ as $$ \theta_{j,k} = \frac{(2j+1)\pi}{2(2k+1)} = \frac{1}{a\cdot k + b},$$ with $a:= \tfrac{4}{(2j+1)\pi}$ and $b:= \tfrac{2}{(2j+1)\pi}$.
For simplicity we change our variable to $x := a\cdot k + b$ and we consider instead $$h: x \to \frac{\cos\tfrac{1}{x}}{\sin^2\tfrac{1}{x}} \qquad \text{with } x>\tfrac{2}{\pi}$$
We can now compute the derivative $h'$ for $x>\tfrac{2}{\pi}$ to be
$$h'(x) = \frac{\frac{1}{\tan^2\frac{1}{x}}+\frac{1}{\sin^2\frac{1}{x}}}{x^2\sin\tfrac{1}{x}}.$$  We have $h(x)>0$ and $h'(x)>0$ for $x>\tfrac{2}{\pi}$ so $h$ is a positive increasing function. Additionally, as $\lim_{x\rightarrow2/\pi^+}h(x) =0$ and $\lim_{x\rightarrow\infty}h(x) = \infty$, there exists a value $x_1:=h^{-1}(1)$. Then $x>x_1 \Leftrightarrow h(x)>1$.

Now we can translate these results on $h$ to results on $f$. We have $f(k)>0$ for $k>j$ and $f$ is an increasing function.
This means that, at fixed $j$, $|\lambda_j(n)|$ is increasing with $n$.
A direct calculation shows that $|\lambda_{0}(7)|>1$, and $|\lambda_{1}(7)|> 1$.
So for $n\geq 7$ we have $|\lambda_{0}(n)|,|\lambda_{1}(n)|>1$.
See Table \ref{table:subrosaeigenvalues} for the eigenvalues for small $n$.
\begin{table}[h]
  \caption{Approximate moduli of the Sub Rosa eigenvalues for  small $n$.}
  \center
    \begin{tabular}{l| c c c c c}
      n & $|\lambda_{0}(n)|$ & $|\lambda_{1}(n)|$ & $|\lambda_{2}(n)|$ & $|\lambda_{3}(n)|$ & $|\lambda_{4}(n)|$ \\
      \hline
      1 & - & - & - & - & - \\
      3 & 3.46 & - & - & - & - \\
      5 & 9.96 & 0.90 & - & - & - \\
      7 & 19.69 & 2.01 & 0.53 & - & - \\
      9 & 32.66 & 3.46 & 1.09 & 0.39 & - \\
      11 & 48.87 & 5.27 & 1.76 & 0.76 & 0.30 \\
      \end{tabular}
  \label{table:subrosaeigenvalues}
  \end{table}
\end{proof}
This lemma directly implies Theorem \ref{th:subrosa} since for $n\geq 7$ the Sub Rosa substitution $\sigma_n$ admits planes $\e{0}$ and $\e{1}$ as eigenspaces with complex eigenvalues of moduli strictly greater than 1, which in turn implies that tilings admissible for $\sigma_n$ are not discrete planes and not cut-and-project tilings.

For more detailed examples of Sub Rosa tilings studied lifted in $\R{n}$ see \cite[\S 6]{lutfalla2021thesis}.

\section{Tileability conditions}
\label{sec:tileability}

In Sections \ref{sec:planar} and \ref{sec:subrosa} we presented how substitutions can be lifted in $\R{n}$ and  how the boundary of the substitutions are related to planarity. We also presented the family of Sub Rosa substitution tilings which is a family of promising substitution tilings with $n$-fold rotational symmetry. However, it turned out that they are not discrete planes for odd $n\geq 7$.

Now we will consider the tileability of metatiles defined by their boundary before presenting a new construction for substitution tilings with $n$-fold rotational symmetry that are also discrete planes. The idea is that the relation between the edgeword of the substitution and the eigenvalues of the expansion gives us candidates for substitution discrete planes. But the substitutions are only defined on their boundaries so we need to prove that the interior is tileable.
The question is now: given the edges of a metatile, can the metatile be tiled with unit rhombuses?
To address this problem we first use the work of Kenyon \cite{kenyon1993} on tiling a polygon with parallelograms. In our case all edges are of unit length and all angles are multiples of $\frac{\pi}{n}$.
The main result is Proposition \ref{prop:cs_tileability} which is a sufficient condition for the tileability of the metatiles.

We define a \emph{pseudo-substitution} as a substitution that is only defined on the edges of the metatiles. A pseudo-substitution can be extended to a substitution when its metatiles are  tileable. We say that a substitution is a tiled pseudo-substitution. The interior of the metatile of a pseudo-substitution is a polygon (with unit length edges, see Figure \ref{fig:polygon}) and our goal is now to tile it with parallelograms (rhombuses in our case).
\begin{figure}[h]
  \center  \includegraphics[width=\textwidth]{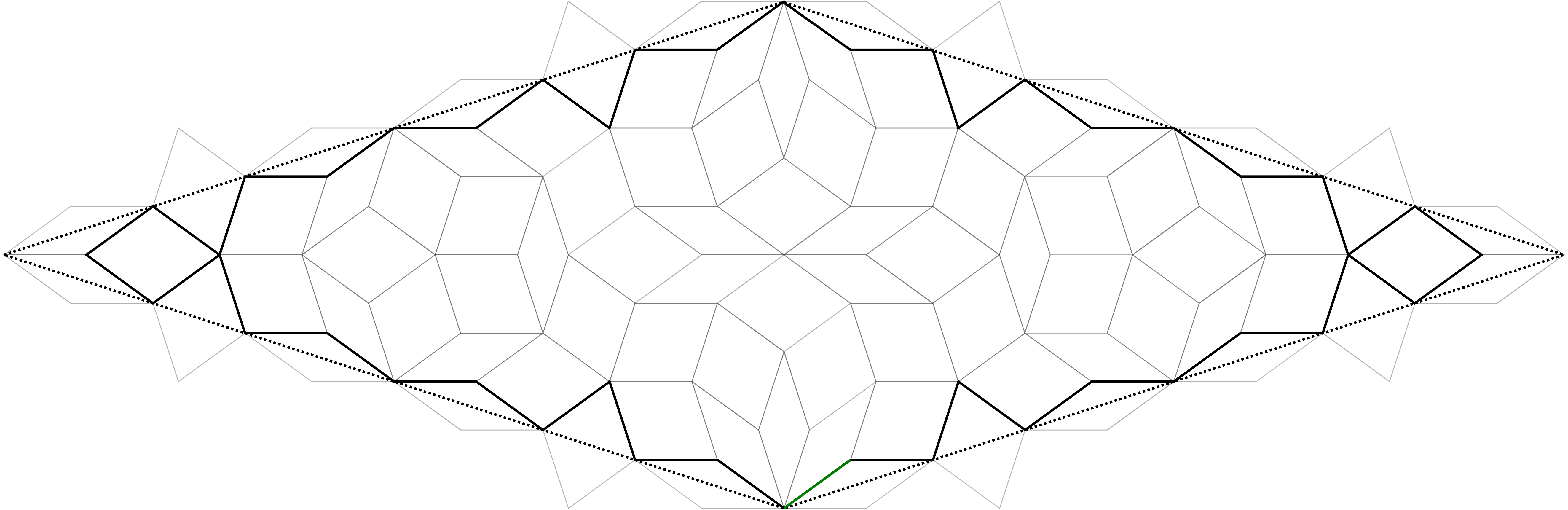}
  \caption{Metatile and parallelogram, in thick line the parallelogram to tile and in thin lines the tiled metatile.}
  \label{fig:polygon}
\end{figure}

The first step of the Kenyon method is to fix a starting vertex on the polygon to tile. Let us write $(\vec{a}_1,\dots,\vec{a}_m)$ for the sequence of oriented edges when going around the boundary of the polygon counterclockwise from the starting point and back. All $\vec{a}_j$ are in the set of directions of the tiling, \ie, $\vec{a}_j\in \{\pm \vec{v}_k\ |\ k=1,\dots ,n\}$.
We say that $\vec{v}_k$ and $-\vec{v}_k$ have the same \emph{edge type} but opposite directions.
We denote $\vec{a}_j^\bot = \imag\vec{a}_j$ for the unit vector orthogonal to $\vec{a}_j$ in the counterclockwise direction.

Suppose the interior of the polygon is actually tiled. As seen in Figure \ref{fig:chains}, from each edge of the polygon starts a chain of rhombuses (shaded in Figure \ref{fig:chains}) that share the same edge type. At the two ends of this chain there
are two edges of the polygon with the same edge types but opposite directions.
These chains define a matching of pairs of edges of the polygon with the following properties.
\begin{enumerate}
  \item[K1] Two edges that are matched have the same edge type but opposite orientations.
  \item[K2] Two matched pairs of edges of the same edge types cannot cross each other with respect to the cyclic ordering of the edges. Indeed, two chains with the same edge type cannot cross, as such a crossing would create a ``flat'' rhombus.
  \item[K3] Two matched edges must ``see'' each other in the parallelogram, in the sense that the
  connecting chain is monotonically increasing in the direction $\vec{a}_j^\bot$. Indeed, otherwise it would mean that the chain circles back on itself.
  \item[K4] The matching is peripherally monotonous: for any two matched pairs $\{\vec{a},\vec{a}'\}$ and $\{\vec{b},\vec{b}'\}$ such that in the cyclic ordering of the edges $\vec{a}<\vec{b}<\vec{a}'<\vec{b}'$, we have $\vec{a}^\bot \cdot \vec{b}\trans > 0$. That ensures that at the crossing of the two chains a real rhombus actually exists.
\end{enumerate}

We call \emph{Kenyon matching} a matching on the oriented edges that follow these properties.

 \begin{theorem}[Kenyon, 93]
   \label{thm:kenyon}
  The polygon is tileable by parallelograms if and only if a Kenyon matching exists.
 \end{theorem}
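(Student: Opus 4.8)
The plan is to prove both implications of the biconditional separately. The \emph{necessity} direction (tileable $\Rightarrow$ Kenyon matching exists) is essentially the discussion preceding the statement, so I would mainly make that sketch rigorous. Given a tiling of the polygon by rhombuses, I group the tiles into \emph{ribbons} (de Bruijn chains): for a fixed edge type $\vec{v}_k$, a ribbon is a maximal sequence of tiles in which consecutive tiles share an edge of type $k$. Each such ribbon is a strip of rhombuses that enters the interior through one boundary edge of type $k$ and exits through another, and I pair these two boundary edges. Then I verify the four properties from the geometry: K1 is the opposite-orientation property of the two ends of a ribbon; K2 holds because two ribbons of the same type meeting would force a degenerate zero-area rhombus; K3 is monotonicity of the ribbon in the direction $\vec{a}_j^\bot$, since a non-monotone ribbon would have to close up on itself inside a simply connected region; and K4 follows because at the unique crossing of two ribbons of distinct types sits a genuine rhombus whose strictly positive area is exactly the inequality $\vec{a}^\bot\cdot\vec{b}\trans>0$.

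For the \emph{sufficiency} direction (Kenyon matching exists $\Rightarrow$ tileable) I would argue by induction on the number $m$ of edges, peeling off one ribbon at a time. The base cases ($m=2$ collapsing to a point, or $m=4$ where the matching forces a single rhombus) are immediate. For the inductive step, fix an edge type $k$; by K2 the matched pairs of type $k$ form a non-crossing, hence nested, matching in the cyclic order, so there exists an \emph{innermost} pair $\{\vec{a},\vec{a}'\}$ of type $k$, meaning that on one of the two boundary arcs joining $\vec{a}$ to $\vec{a}'$ every other edge has type $\neq k$. I would realize the ribbon for this pair as a strip of rhombuses, one spanned by $\vec{v}_k$ and each edge of the short arc, cut this strip out, and obtain a polygon with $m-2$ edges whose boundary word is the old one with $\vec{a},\vec{a}'$ deleted and each crossed short-arc edge replaced by its translate by $-\vec{v}_k$. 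Here K3 guarantees the strip does not loop back and K4 guarantees each rhombus has strictly positive area, so the strip stays embedded inside the polygon. I would then check that the restriction of the matching to the surviving edges still satisfies K1--K4, so the induction hypothesis tiles the remaining region and, glued to the peeled strip, tiles the whole polygon.

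The main obstacle I expect is precisely this peeling step of the sufficiency direction: showing that the innermost ribbon embeds inside the polygon and that deleting it leaves a genuine polygon whose induced matching again satisfies K1--K4. Preservation of K4 is the delicate point, because removing the strip translates every edge it crosses, which can a priori alter which matched pairs are ``peripherally ordered''; proving that the inequality $\vec{a}^\bot\cdot\vec{b}\trans>0$ survives this surgery is the technical heart of the argument. I would isolate it as a separate lemma and prove it by tracking how the cyclic order of edges and their orthogonal projections transform under the removal of a single monotone strip, using the innermost choice to ensure no type-$k$ pair is disturbed and K3 to control the projections of the translated edges.
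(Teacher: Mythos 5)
First, note that the paper does not actually prove this theorem: it only sketches why the four conditions are necessary and cites \cite{kenyon1993} for the proof, so your proposal must stand on its own. Your necessity direction is fine and is essentially that sketch made precise. The genuine gap is in the sufficiency direction, and it is not where you locate it. Your peeling step chooses a pair $\{\vec{a},\vec{a}'\}$ that is innermost \emph{only among the pairs of type $k$}, and then builds a strip containing one rhombus spanned by $\vec{v}_k$ and \emph{each} edge of the short arc. This is wrong whenever the short arc contains a complete matched pair $\{\vec{b},\vec{b}'\}$ of some other type: such a pair is nested with, not interleaving, $\{\vec{a},\vec{a}'\}$, so in any tiling its ribbon never crosses the ribbon of $\{\vec{a},\vec{a}'\}$, whereas your strip would contain a rhombus spanned by $(\vec{v}_k,\vec{b})$ and another spanned by $(\vec{v}_k,-\vec{b})$; the strip then folds back, need not lie inside the polygon, and the strip plus the residual polygon do not even have the correct total area, so the induction collapses. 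Concrete counterexample to the construction: the hexagon with boundary word $a\,b\,d\,(-b)\,(-a)\,(-d)$, where $a=(0,1)$, $b=(1,0.2)$, $d=(0.5,-1)$, i.e.\ vertices $(0,0),(0,1),(1,1.2),(1.5,0.2),(0.5,0),(0.5,-1)$. The unique matching allowed by K1 pairs each edge with its negative; it satisfies K1--K4 (as it must, since the hexagon is tileable by exactly two parallelograms, spanned by $\{a,d\}$ and by $\{b,d\}$, of total area $0.5+1.1=1.6$). Taking $k$ to be the type of $a$, the pair $\{a,-a\}$ is innermost for its type and the arc $(b,d,-b)$ contains no type-$k$ edge, yet your strip would consist of three rhombuses, spanned by $(a,b)$, $(a,d)$, $(a,-b)$, of total area $1+0.5+1=2.5>1.6$, and two of them stick out of the hexagon; the actual ribbon of $\{a,-a\}$ in the tiling is the single rhombus spanned by $a$ and $d$.

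The repair is to peel along a pair and arc chosen so that the arc contains \emph{no complete matched pair}: for instance, minimize the arc length over all pairs and both of their arcs (if some pair had both ends inside the minimal arc, one of its own arcs would be strictly shorter). Then every edge of the chosen arc is matched across it, so every pair meeting the arc interleaves $\{\vec{a},\vec{a}'\}$; K2 automatically excludes type-$k$ edges from such an arc, and K4 applied to these interleaving pairs gives exactly the positivity needed for each rhombus of the strip and for its monotonicity. Two further remarks. Your announced ``technical heart'' -- that K4 might not survive the surgery -- is a non-issue: as formulated in the statement, K4 depends only on the cyclic order of the surviving edges and on their direction vectors, and both are untouched by translating the arc by $-\vec{v}_k$. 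What genuinely requires proof after the corrected peeling is geometric: that the strip is embedded in the polygon, i.e.\ that the inward translate of the arc does not collide with the rest of the boundary, so that the residual region is again a simple polygon on which the (geometric) condition K3 can be re-verified. That, rather than K4, is where the real work of the induction lies, and it is absent from the proposal.
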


Let us explain how the Kenyon properties are related to tileability. For a proof of the theorem see the original article \cite{kenyon1993}.
  The main idea is that for the parallelogram to be tileable, as explained above, any edge must match to an edge of same absolute type and opposite direction such that there is a chain of parallelograms between the two matched edges (see Figure \ref{fig:chains}). The existence of the chain forces the fact that the matched edges must be of the same edge type and the opposite direction, and that the matched edges must ``see'' each other in the area to tile. When two such chains cross, the crossing represents a rhombus tile whose edges have the types of the two matched pairs. This implies that matched pairs of the same edge type cannot cross and it also implies the ``peripherally monotonous'' condition.
  \begin{figure}
    \center  \includegraphics[width=\textwidth]{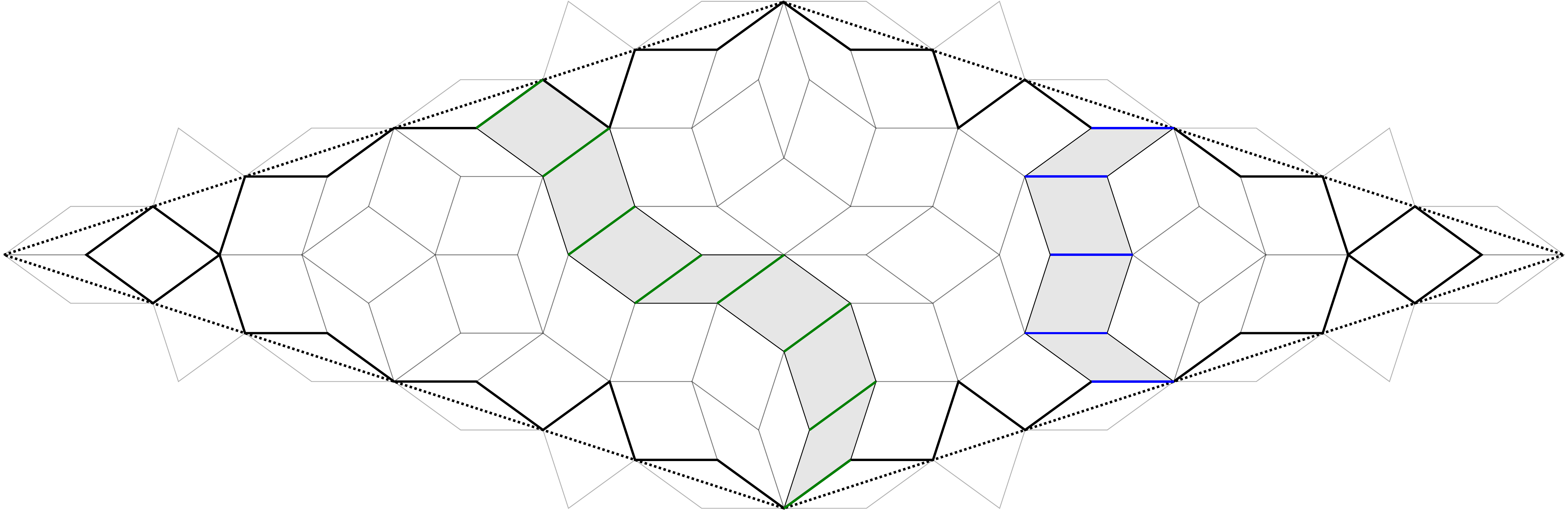}
    \caption{In shaded two chains of rhombuses.}
    \label{fig:chains}
  \end{figure}

In the case of our substitution tilings, we study the tileability of the polygons which are defined by the sequence of rhombuses on the edges of the metatiles,
given by the edgeword of the substitution. The tileability of the metatiles is determined by this sequence of rhombuses.
We denote by the letter $k$, with $k$ odd, the rhombus of angles $\frac{k\pi}{n}$ and $\frac{(n-k)\pi}{n}$. An edgeword $u$ is a word with letters in $\{1,3,5,\dots n-2\}$.

Given the properties we imposed on the edges of the substitution (see Section \ref{sec:planar}) the edgeword is always a palindrome and it is the same on all 4 sides of any meta-rhombus. That is why we only consider one word. There is an equivalence between the expansion function and the abelianization of the edgeword. So the expansion defines the edgeword up to reordering. However, for tileability the order of the letters in the edgeword is very important.

The fact that the opposite edges of the metatiles are identical implies that on the path around the boundary they
contain exactly the same edge types but in the opposite directions. Thus there always exists a matching that satisfies the first Kenyon condition K1. 
The fact that a single edge of the metatile cannot contain both orientations of any edge type means that the two directions of an edge type come 
contiguously one after the other on the cyclic  path around the boundary. This in turn implies that there exists a unique way to match the edges so that both the conditions K1 and K2 are satisfied~\cite{kari2016}. In this matching the $i$'th occurrence of $\vec{v}_k$ in its contiguous segment 
is matched with the $i$'th last occurrence of $-\vec{v}_k$ in its contiguous segment, for all $i$ and $k$.
In the following, by the matching of the edges we always mean this unique way to match the edges that satisfies K1 and K2.
To check tileability we then need to verify that this matching satisfies also the conditions K3 and K4. 

For the proofs we will now define the counting functions of an edgeword.
\begin{definition}[Counting function]
  Let $n\geq 3$ be an odd integer and let $u = u_0u_1\dots u_{m-1}$ be an edgeword of length
  $m$ with symbols from the set $\{1,3,\dots ,n-2\}$.
  For a rhombus type $j$ and a position $x\in \{0,\dots m\}$ we define
  $f_j(x) = |u_0\dots u_{x-1}|_j$, the number of letters $j$ in the prefix of length $x$ of the edgeword $u$. We also define $f_j^{-1}(y)$ as the length of shortest prefix of $u$ with $y$ letters $j$. If there is no such prefix then $f_j^{-1}(y)=+\infty$. Also, we define $f_j^{-1}(0)=0$, and for $y<0$ we set  $f_j^{-1}(y)=-\infty$.

  For $j=n$ we define $f_j(x)=0$ and for $n<j<2n$ we define $f_{j}(x) = -f_{2n-j}(x)$. For these cases we do not define $f_j^{-1}$.
\end{definition}

Let us remark a few things regarding this definition:
\begin{itemize}
\item $f_j^{-1}$ is not an inverse function of $f_j$ since $f_j$ is not bijective. However, for any rhombus type $j$ and a number $y$ such that there is at least $y$ occurrences of the letter $j$, we have $f_j(f_j^{-1}(y))=y$. For any $x$ such that $u_{x-1}=j$ we also have $f_j^{-1}(f_j(x)) = x$. But when $u_{x-1}\neq j$ we have $f_j^{-1}(f_j(x)) < x$.
\item The fact that $j$ is a rhombus type means that it is an odd number between $1$ and $n-2$, coding a unit rhombus of angle $\tfrac{j\pi}{n}$.
\item We give a definition of $f_j(x)$ for $n\leq j < 2n$ because these will appear in the proofs, and it is easier to give them a sense than to always check for the special case of $n\leq j < 2n$.
\item We fix $f_j^{-1}(y) = +\infty$ for simplicity if there are fewer than $y$ occurrences of the letter $j$ in the edgeword. This allows us not to check the number of occurrences of the letter $j$ before using $f_j^{-1}$. Remark that we could also just say that the formulas that involve $f_j^{-1}$ are understood only in the well defined (non-infinity) case. The same holds for $f_j^{-1}(y) = -\infty$ in the case that $y$ is negative.
\end{itemize}

The Kenyon conditions will translate into inequalities on the counting functions $f_j$.

\begin{definition}[Almost-balancedness ]
  We say that a word $u$ is $k$-almost-balanced when for any letters $j_1<j_2$ and any subword $v$ of $u$ we have $|v|_{j_1} - |v|_{j_2} \geq -k $.
  \label{def:letters}

\end{definition}

The idea is that the frequency of appearance of $j_1$ is basically
greater than the frequency of appearance of $j_2$ when $j_1<j_2$, so that $|v|_{j_1} - |v|_{j_2}$ should be positive for long words $v$ But for short factors $v$ we can have slightly negative values, e.g., for the length one subword $v=j_2$ of $u$. This quite unusual definition of $k$-almost-balancedness bounds the negative values that $|v|_{j_1} - |v|_{j_2}$ can take. For example, if $u$ is a binary Sturmian word with letters $j_1$ and $j_2$, with the frequency of $j_1$ larger than the frequency of $j_2$, then $u$ would be 1-almost-balanced.

The main result of this section is the following.

\begin{proposition}[Tileability]
  \label{prop:cs_tileability}
  Let us consider a pseudo-substitution with an edgeword $u$ such that $u$ is 2-almost-balanced.
   Suppose that for every odd $j_1<j_2$, and for every position $k_1$ such that the edgeword has symbol $j_1$ in position $k_1-1$,  we have
  \begin{align}
    f_{|j_2-2|}^{-1}\circ f_{j_2}(k_1)&<f_{|j_1-2|}^{-1}\circ f_{j_1}(k_1). \label{eq:k1counting}
  \end{align}
  Then all the metatiles of the pseudo-substitution are tileable, implying that the pseudo-substitution can be extended to a well defined substitution.
\end{proposition}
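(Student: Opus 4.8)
The plan is to verify the hypotheses of Kenyon's criterion (Theorem~\ref{thm:kenyon}) for the boundary polygon of each metatile, and then invoke that theorem to produce the tiling. As the discussion preceding the statement already establishes, the palindromic edgeword together with the fact that opposite sides of a meta-rhombus carry the same edge types in opposite directions forces the existence of a \emph{unique} matching satisfying the conditions K1 and K2, namely the one pairing the $i$-th occurrence of $\vec{v}_k$ with the $i$-th-last occurrence of $-\vec{v}_k$. It therefore suffices to show that this canonical matching also satisfies the monotonicity conditions K3 and K4; my reading is that the almost-balancedness hypothesis and the inequality (\ref{eq:k1counting}) are designed precisely to yield these two conditions respectively.

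First I would fix coordinates on the boundary polygon of a metatile $\phi(T_{a,b})$, traverse it counterclockwise, and express the position of each boundary edge of type $\vec{v}_k$ through the counting functions $f_j$ of the edgeword, using the decomposition of a type-$j$ rhombus into the edge vectors $(-1)^i\vec{v}_{I(i)}$ and $(-1)^{i+1}\vec{v}_{J(i)}$ recorded in the proof of Lemma~\ref{lem:coef_rhomb_subrosa}. The central bookkeeping step is to write the perpendicular coordinate $\vec{a}^\bot\cdot(\cdot)$ of the two endpoints of any chain as a signed combination of values $f_j(x)$ at the relevant positions. Because of the $\lceil n/2\rceil$-shift built into the indices $I(i),J(i)$ (recall from Definition~\ref{def:elementary_matrices} and its verification that the angular neighbours of a direction differ by a type step of $\pm 2$), the two edges bounding a type-$j$ rhombus are adjacent to $\vec{v}_k$ exactly for the neighbouring type $j\pm2$; this is the geometric source of the terms $f_{|j_1-2|}$ and $f_{|j_2-2|}$ occurring in (\ref{eq:k1counting}).

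With this dictionary in place, condition K3 (each chain is monotone in the direction $\vec{a}^\bot$) reduces to the statement that along any segment of the edgeword a letter of a given type is never overtaken by more than a bounded amount by letters of the neighbouring types; this is exactly what $2$-almost-balancedness (Definition~\ref{def:letters}) supplies, with the constant $2$ absorbing the contributions of the two chain endpoints and of the corner transitions. Condition K4 (two crossing chains of types $j_1<j_2$ bound a genuine, positively oriented rhombus $T_{j_1,j_2}$) translates, after computing the sign of $\vec{a}^\bot\cdot\vec{b}\trans$, into a comparison of the positions at which the two chains reach the side they cross; this comparison is precisely $f_{|j_2-2|}^{-1}\circ f_{j_2}(k_1)<f_{|j_1-2|}^{-1}\circ f_{j_1}(k_1)$, evaluated at every position $k_1$ carrying the letter $j_1$, which is inequality (\ref{eq:k1counting}).

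The main obstacle I expect is establishing this geometry-to-combinatorics dictionary rigorously and uniformly: one must track how a chain crosses from one side of the meta-rhombus to another and handle the four corners, where the edgewords of consecutive sides meet and where the naive monotonicity count can fail by a bounded amount. Getting the orientation signs right, so that $\vec{a}^\bot\cdot\vec{b}\trans>0$ corresponds to the \emph{strict} inequality (\ref{eq:k1counting}) rather than to a non-strict or reversed version, is the delicate point, and it is exactly where the particular composition $f_{|j-2|}^{-1}\circ f_j$ and the value $2$ in the almost-balancedness condition get pinned down. Once K3 and K4 are confirmed for the canonical matching, Kenyon's theorem tiles every metatile, so the pseudo-substitution extends to a well-defined substitution, as claimed.
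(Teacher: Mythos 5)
Your overall strategy --- reduce to Kenyon's criterion (Theorem \ref{thm:kenyon}), note that K1 and K2 pin down the unique canonical matching, then verify K3 and K4 for it --- is the same as the paper's. But there is a genuine gap: you treat the hypothesis (\ref{eq:k1counting}) as if it directly encodes condition K4 for every crossing, and this is only true at corners of angle $\tfrac{\pi}{n}$. The hypothesis compares $f_{j}$ with $f_{|j-2|}$, i.e.\ it governs chains turning a corner of angle $\tfrac{\pi}{n}$, where a type-$j$ rhombus on one side is matched to a type-$|j-2|$ rhombus on the adjacent side. However, the metatiles are rhombuses with angles $\tfrac{k\pi}{n}$ and $\tfrac{(n-k)\pi}{n}$ for \emph{every} odd $k$; at a corner of angle $\tfrac{k\pi}{n}$ a chain leaving a type-$j$ rhombus connects to a type-$|j-2k|$ rhombus on the adjacent side (Proposition \ref{prop:tileability01}), and a chain can instead connect to the \emph{opposite} side of the metatile, which requires a different inequality with the corrective term $f_{|j-2k|}(m)$ (Proposition \ref{prop:tileability02}). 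Your ``dictionary'' paragraph only invokes the neighbouring types $j\pm 2$, so it never produces the inequalities actually needed at wide corners or across opposite sides --- and those do not follow formally or trivially from (\ref{eq:k1counting}); bridging exactly that gap is the bulk of the paper's proof.

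Concretely, the paper needs three lemmas for which your sketch has no counterpart: Lemma \ref{lemma:f1geqf3}, which uses the palindromicity of the edgeword to deduce $f_1 \geq f_3 \geq \dots \geq f_{n-2}$ from (\ref{eq:k1counting}); Lemma \ref{lemma:k1allk}, an induction on $k$ (with a four-case analysis on how $j_1,j_2$ compare to $2k_0$) showing that the $k=1$ inequality propagates to the adjacent-edge inequalities (\ref{eq:counting_adjacent1}) and (\ref{eq:counting_adjacent2}) for all $k$; and Lemma \ref{lemma:counting_opposite}, where the $2$-almost-balancedness is used --- not to establish chain monotonicity K3, as you propose, but to prove the opposite-side inequalities (\ref{eq:counting_opposite1}) and (\ref{eq:counting_opposite2}). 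Your assignment ``balancedness $\Rightarrow$ K3, inequality (\ref{eq:k1counting}) $\Rightarrow$ K4'' therefore misdescribes the role of both hypotheses, and the step where (\ref{eq:k1counting}) is claimed to settle K4 for an arbitrary crossing would fail as soon as the crossing sits at a corner with $k>1$ or joins opposite sides. Finally, these counting inequalities must still be fed into a case analysis of the ten possible crossing types (Proposition \ref{prop:cornersandtileability}, Figure \ref{fig:atlascrossings}), several of which (types 4--10) require separate geometric arguments rather than the inequalities alone; your proposal does not account for this enumeration either.
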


This proposition gives us a sufficient condition for the tileability of the metatiles and the well-definedness of the substitutions in our constructions. This proposition was formulated as an implication because that is how we use it, but it is actually an equivalence.

We will now present the proof of this proposition with several lemmas and propositions to cut the proof in smaller pieces and to make it more understandable. We first show the consequences
of tileability of the metatiles  on the counting functions $f_j$. After this we consider the converse direction.

\begin{proposition}
  \label{prop:tileability01}
  If the metatile with angle $\frac{k\pi}{n}$ is tileable then for any $j_1<j_2$, and for any position $k_1$ such that
  the edgeword has symbol $j_1$ in position $k_1-1$, we have
  \begin{align} \label{eq:adjacent_edge}  f_{|j_2-2k|}^{-1}\circ f_{j_2}(k_1)<f_{|j_1-2k|}^{-1}\circ f_{j_1}(k_1).
    \end{align}
\end{proposition}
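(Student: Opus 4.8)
The plan is to deduce the inequality directly from Kenyon's criterion (Theorem~\ref{thm:kenyon}) by reading it off the unique admissible matching of the boundary edges. First I would recall the observation made just before Definition~\ref{def:letters}: because opposite sides of a metatile carry the same edgeword with opposite orientations, there is a \emph{unique} matching of the boundary edges satisfying K1 and K2, in which the $i$'th occurrence of a direction is matched with the $i$'th-to-last occurrence of its opposite. Since the metatile of angle $\frac{k\pi}{n}$ is assumed tileable, Theorem~\ref{thm:kenyon} furnishes a Kenyon matching, and by this uniqueness it must be the canonical one. Consequently the canonical matching also satisfies the monotonicity conditions K3 and K4, and the whole proof reduces to rewriting these two geometric conditions as inequalities between the counting functions $f_j$.

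The heart of the argument is a dictionary between boundary positions and edge types. For a rhombus of type $j$ located at position $k_1-1$ of one side, I would first pin down exactly which two unit boundary edges it contributes and in which orientations, using the same vector computation as in the proof of Lemma~\ref{lem:coef_rhomb_subrosa} (a type-$j$ rhombus carries the directions $(-1)^i\vec{v}_{\fst{i}}$ and $(-1)^{i+1}\vec{v}_{\scd{i}}$ for $j=2i+1$). The decisive geometric fact is how an edge type is transported to the neighbouring side meeting it at the corner of angle $\frac{k\pi}{n}$: reflecting a direction that makes angle $\frac{j\pi}{n}$ with one side across that corner produces a direction making angle $\frac{|j-2k|\pi}{n}$ with the neighbouring side, the factor $2k$ arising because the corner contributes its angular width once to each incident side. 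Tracking where the chain issued by our type-$j$ rhombus must terminate, this reflection turns ``the $m$'th $j$-rhombus on one side'' into ``the $m$'th $|j-2k|$-rhombus on the other side''; since $f_j(k_1)$ is precisely the index $m$ of the chosen rhombus (the letter at position $k_1-1$ is $j$, so $f_j^{-1}(f_j(k_1))=k_1$) and $f_{|j-2k|}^{-1}(m)$ is the matching terminal position, this terminal position is exactly the composite $f_{|j-2k|}^{-1}\circ f_j(k_1)$.

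With this dictionary fixed, the inequality is obtained by specialising the conditions K3 and K4 to the chains passing through the corner region. Condition K3 guarantees that each chain actually reaches the neighbouring side, i.e. that the relevant values of $f^{-1}$ are finite. Condition K4 forces the chains issued from a $j_1$-rhombus at position $k_1$ and from the preceding $j_2$-rhombi with $j_2>j_1$ to be correctly nested, crossing only at genuine, non-degenerate rhombi; translated through the dictionary this becomes $f_{|j_2-2k|}^{-1}\circ f_{j_2}(k_1)<f_{|j_1-2k|}^{-1}\circ f_{j_1}(k_1)$, which is \eqref{eq:adjacent_edge}. Strictness reflects the prohibition of ``flat'' rhombi already invoked for K2. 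I would note that the instance $k=1$ recovers exactly the hypothesis of Proposition~\ref{prop:cs_tileability}.

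The step I expect to be the main obstacle is the geometric bookkeeping in this dictionary: getting the edge-type arithmetic right modulo $2n$, correctly handling the orientation signs $(-1)^i$ coming from the elementary rhombi, and rigorously justifying the corner reflection that produces the shift $j\mapsto|j-2k|$ together with the absolute value needed when $j-2k<0$. One must also keep track of which of the two incident sides and which corner (the angle $\frac{k\pi}{n}$ versus $\frac{(n-k)\pi}{n}$) yields the ordering $j_1<j_2$. Once the index map is established and identified with $f_{|j-2k|}^{-1}\circ f_j$, condition K4 converts into the stated strict inequality essentially mechanically.
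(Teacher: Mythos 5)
Correct, and essentially the paper's own argument: both proofs use tileability to obtain a Kenyon matching (necessarily the unique matching compatible with K1--K2, i.e.\ the occurrence-counting one), the corner reflection $j\mapsto|j-2k|$ together with the counting functions to locate the terminus of each chain at $f_{|j-2k|}^{-1}\circ f_j(k_1)$, and the K4-type non-crossing constraint for $j_2>j_1$ (which the paper phrases as $\vec{a}_2^\bot\cdot\vec{a}_1\trans\leq 0$) to force the strict nesting inequality. The only details the paper adds are that it compares against the chain of the \emph{last} $j_2$-rhombus before position $k_1-1$, so that $f_{j_2}(k_2)=f_{j_2}(k_1)$ and the inequality comes out in terms of $f_{j_2}(k_1)$; also note your harmless half-angle slip (edge directions make angle $j\pi/(2n)$, not $j\pi/n$, with the side), which does not affect the conclusion $j\mapsto|j-2k|$.
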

\begin{proof}
  Let us assume the metatile is tileable so that there is a valid Kenyon matching. Let us see how the chains of parallelograms behave in this case.
   Take a chain of edge type $\vec{a}_1$ that links a rhombus of type $j_1$ on the side 1 of the metatile to a rhombus of type $j_1'$ on the adjacent side 2, as in Figure \ref{fig:one-chain}.
  \begin{figure}
    \center
    \includegraphics[width=0.8\textwidth]{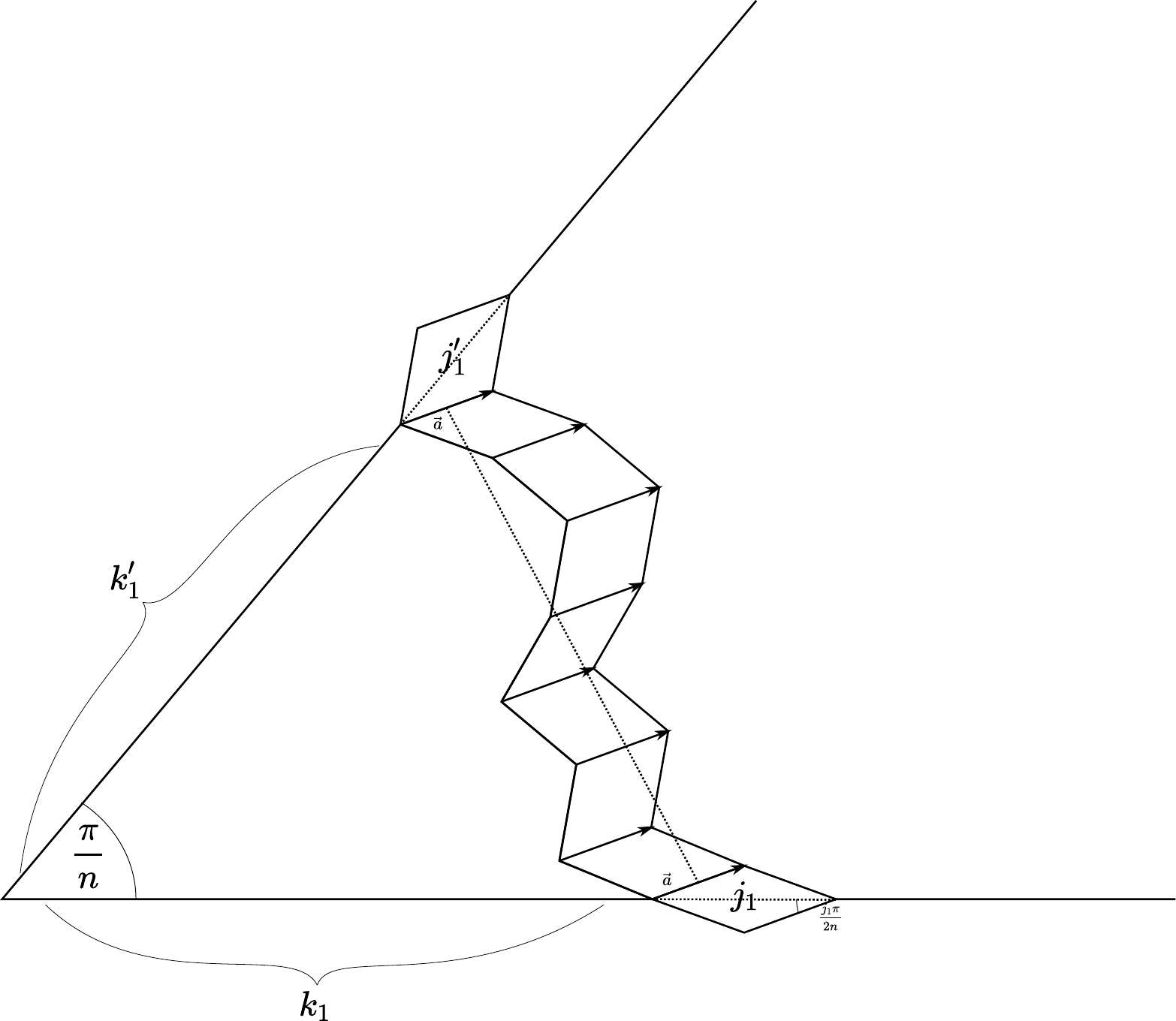}
  \caption{A chain of rhombuses around the narrow corner of the metatile.}
  \label{fig:one-chain}
\end{figure}

On side 1, the direction $\vec{a}_1$ can only be found on the rhombuses of type $j_1$, because the direction of the edges of a rhombus is determined by the general direction of the side and the angle of the rhombus.
Call $k_1-1$ the index of this rhombus on side 1 (starting the indexing from the corner with side 2). There are $f_{j_1}(k_1)$ rhombuses of type $j_1$ between the corner and the chain on side 1.
On side 2, for the same reason the direction $-\vec{a}_1$ only appears on the edges of rhombuses $j_1'$, and there are $f_{j_1'}(k_1')$ such rhombuses between the corner and the chain.
Since the matching is valid, two chains of identical edge types cannot cross each other (condition K2), so every rhombus $j_1$ of side 1 between the corner and the chain is matched to a rhombus of type $j_1'$ of side 2, also between the corner and the chain.
So we have $f_{j_1}(k_1) = f_{j_1'}(k_1')$, which we can reformulate as $f_{j_1'}^{-1}\left(f_{j_1}(k_1)\right)=k_1'$.

The angle of the corner of the metatile is $\frac{k\pi}{n}$, the half-angles of the rhombuses are $\frac{j_1\pi}{2n}$ and $\frac{j_1'\pi}{2n}$, and their sides are parallel.
 From these we can deduce that $j_1'=|j_1-2k|$.

Now let us consider a second chain in this setting.
\begin{figure}
  \center
    \includegraphics[width=0.8\textwidth]{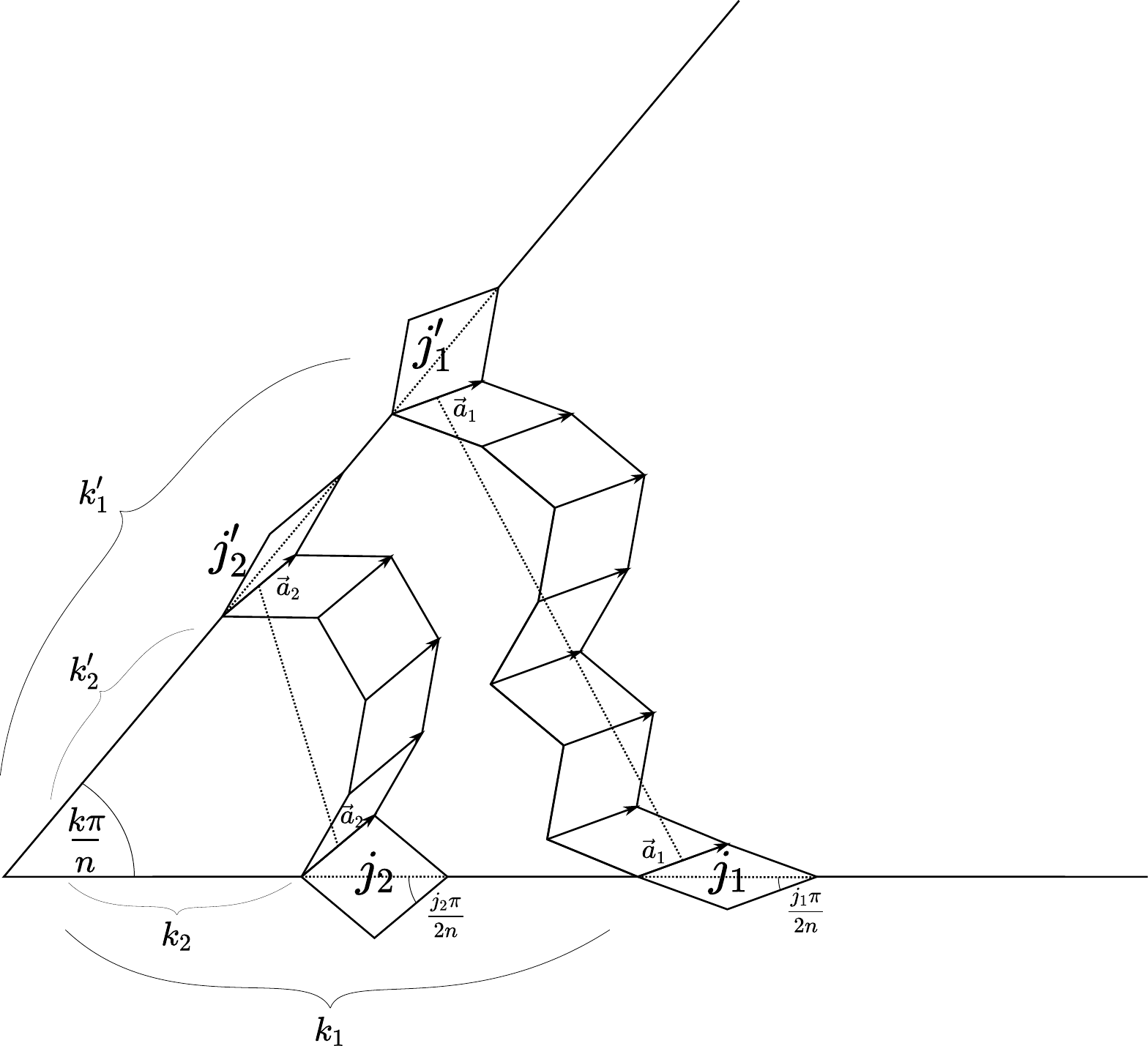}
  \caption{A second chain of rhombuses around the narrow corner.}
  \label{fig:two-chains}
\end{figure}
Take a rhombus type $j_2>j_1$ and denote by $k_2-1$ the last position at which a rhombus $j_2$ appears between the corner and the chain of direction $\vec{a}_1$, as in Figure \ref{fig:two-chains}. Denote by $\vec{a}_2$ the edge of the rhombus facing the corner, and consider the chain that starts at this edge.
As for the first chain, this one is linked to some rhombus of type $j_2'=|j_2-2k|$ on side 2 which is at position $k_2'$ satisfying $f_{j_2'}^{-1}\left(f_{j_2}(k_2)\right) = k_2'$.
But since there is no rhombus of type $j_2$ between $k_2$ and $k_1-1$ we have $f_{j_2}(k_2)=f_{j_2}(k_1)$.

Since $j_2>j_1$ we have $\vec{a}_2^\bot\cdot \vec{a}_1\trans \leq 0$ so the two chains cannot cross (see Figure \ref{fig:two-chains}). This means that $k_2'<k_1'$.
So overall $$ f_{j_2'}^{-1}\circ f_{j_2}(k_1) < f_{j_1'}^{-1}\circ f_{j_1}(k_1).$$
\end{proof}

\begin{figure}
  \center
    \includegraphics[width=0.8\textwidth]{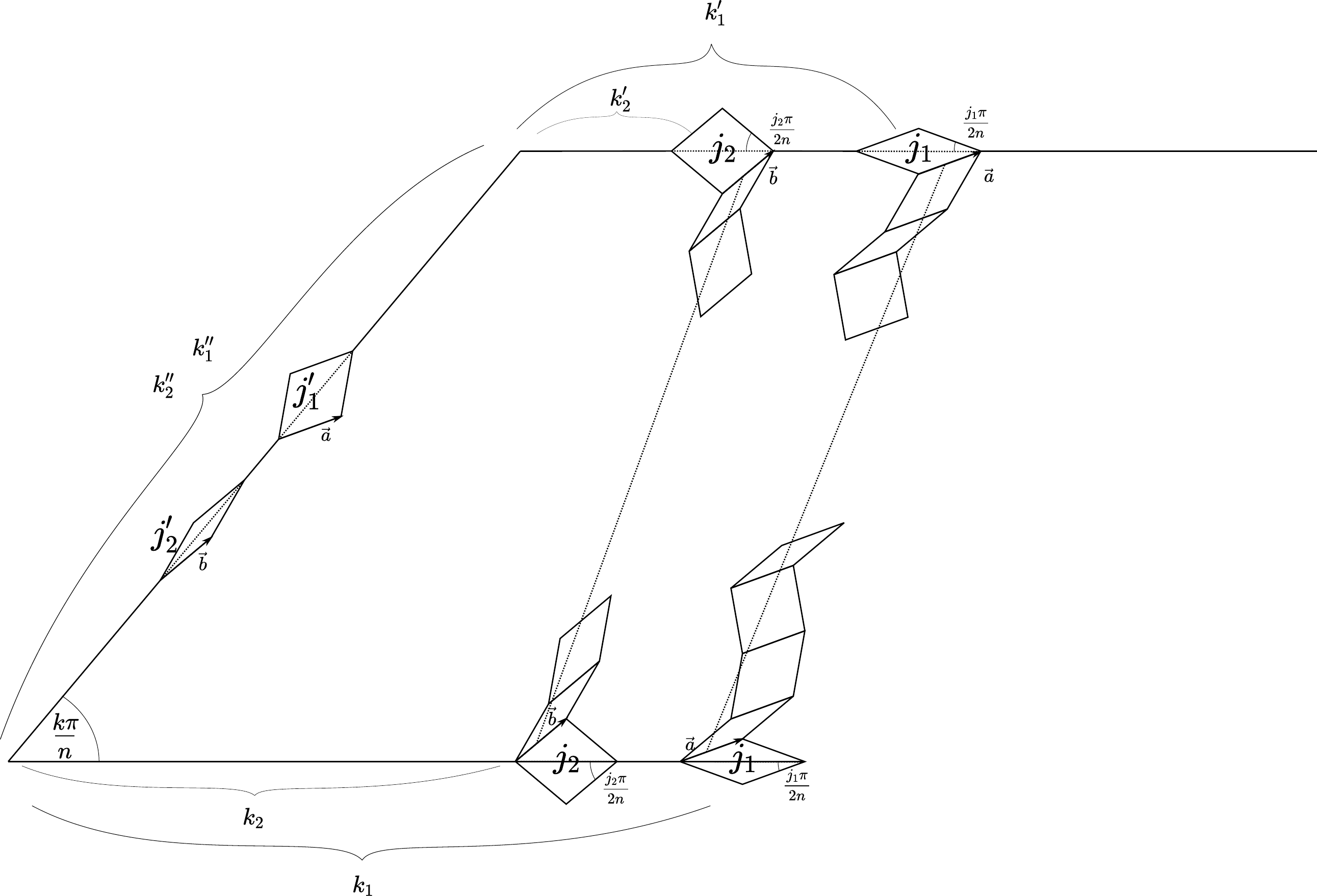}
  \caption{The two chains linking opposite sides.}
  \label{fig:two-chains-opposite}
\end{figure}

Now we modify this situation a little. Suppose that the chains connect to the opposite side as in Figure \ref{fig:two-chains-opposite}. Then we have the following proposition.

\begin{proposition}
  \label{prop:tileability02}
  If the metatile with angle $\frac{k\pi}{n}$ is tileable then for any $j_1<j_2$, and for any position $k_1$ such that
  the edgeword has symbol $j_1$ in position $k_1-1$, we have
  \begin{align} \label{eq:opposite_edge} f_{j_2}^{-1}\Big( f_{j_2}(k_1)-f_{|j_2-2k|}(m)\Big) < f_{j_1}^{-1}\Big( f_{j_1}(k_1)-f_{|j_1-2k|}(m)\Big). \end{align}
\end{proposition}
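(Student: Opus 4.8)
The plan is to mirror the argument of Proposition~\ref{prop:tileability01}, replacing the analysis of a chain that turns around a corner of the metatile by that of a chain that crosses the metatile from one side to the parallel opposite side. As before I assume that the metatile of angle $\frac{k\pi}{n}$ is tiled, so that a valid Kenyon matching exists (Theorem~\ref{thm:kenyon}), and I track a single chain of some edge type $\vec{a}_1$ together with the counting functions $f_j$.

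First I would fix the matching rhombus type on the opposite side. Consider the chain of edge type $\vec{a}_1$ emanating from the rhombus of type $j_1$ at position $k_1-1$ on side~1 and assume, as in Figure~\ref{fig:two-chains-opposite}, that it connects to the opposite side (call it side~3). Since opposite sides of the meta-rhombus are parallel and carry the same palindromic edgeword, the edge direction $-\vec{a}_1$ on side~3 occurs only on rhombuses of the \emph{same} type $j_1$: there is no rotation by the corner angle here, in contrast to the adjacent case of Proposition~\ref{prop:tileability01} where the type became $|j_1-2k|$. This is what justifies the symbols $f_{j_1}^{-1}$ and $f_{j_2}^{-1}$ appearing in the statement rather than $f_{|j_1-2k|}^{-1}$ and $f_{|j_2-2k|}^{-1}$.

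Next comes the counting, which is the heart of the proof. On side~1 there are $f_{j_1}(k_1)$ edges of direction $\vec{a}_1$ up to our chain, but not all of them reach side~3: the ones nearest the corner of angle $\frac{k\pi}{n}$ wrap around it and, by the corner analysis of Proposition~\ref{prop:tileability01}, match the type-$|j_1-2k|$ rhombuses of the adjacent side. As the adjacent side carries the full edgeword of length $m$, there are exactly $f_{|j_1-2k|}(m)$ such rhombuses, so exactly $f_{|j_1-2k|}(m)$ of the $\vec{a}_1$-edges are consumed around the corner and the remaining ones cross to side~3. Using condition K2 to preserve the order of same-type chains, the matched rhombus on side~3 is therefore the one accounting for $f_{j_1}(k_1)-f_{|j_1-2k|}(m)$ type-$j_1$ rhombuses, i.e. it sits at position $f_{j_1}^{-1}\big(f_{j_1}(k_1)-f_{|j_1-2k|}(m)\big)$. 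Symmetrically for $j_2$, after replacing $k_1$ by the last position $k_2\leq k_1$ carrying a type-$j_2$ symbol (so that $f_{j_2}(k_2)=f_{j_2}(k_1)$, exactly as in the earlier proof), the matched rhombus on side~3 sits at $f_{j_2}^{-1}\big(f_{j_2}(k_1)-f_{|j_2-2k|}(m)\big)$.

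Finally I would compare the two crossing chains. Since $j_2>j_1$ we again have $\vec{a}_2^\bot\cdot\vec{a}_1\trans\leq 0$, so the two chains cannot cross (conditions K2 and K4), which forces the opposite-side position of the $j_2$-chain to be strictly smaller than that of the $j_1$-chain, giving
$$ f_{j_2}^{-1}\big(f_{j_2}(k_1)-f_{|j_2-2k|}(m)\big) < f_{j_1}^{-1}\big(f_{j_1}(k_1)-f_{|j_1-2k|}(m)\big), $$
as claimed. The main obstacle I expect is the precise index bookkeeping in the counting step: because side~3 is traversed in the reverse orientation, and because edges of \emph{both} sides get diverted around corners, one must check carefully that the number of $\vec{a}_1$-edges diverted is \emph{exactly} $f_{|j_1-2k|}(m)$ (and not merely an upper bound), and that this shift enters with the correct sign, so that the matched position is genuinely $f_{j_1}^{-1}\big(f_{j_1}(k_1)-f_{|j_1-2k|}(m)\big)$ rather than an off-by-one or oppositely-shifted variant.
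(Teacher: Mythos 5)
Your proposal is correct and follows essentially the same route as the paper's own (much terser) proof: you identify the $f_{|j-2k|}(m)$ type-$j$ rhombuses consumed by chains wrapping around the corner onto the adjacent side, subtract them to locate the matched rhombus on the opposite side via $f_j^{-1}$, and conclude with the same non-crossing argument ($\vec{a}_2^\bot\cdot\vec{a}_1\trans\leq 0$) used in Proposition~\ref{prop:tileability01}. The bookkeeping you flag as a possible obstacle is in fact handled correctly in your counting step, and the paper itself asserts it with even less justification.
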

\begin{proof}
  The quantity $f_{|j_2-2k|}(m)$ is the total number of rhombuses of type $|j_2-2k|$ on the edge adjacent to the angle (see Figure \ref{fig:two-chains-opposite}). So it is the number of rhombuses of type $j_2$ that will be matched to rhombuses of type $|j_2-2k|$ on the adjacent edge. Once we remove those rhombuses, rhombuses of type $j_2$ match to rhombuses of type $j_2$ on the opposite edge, hence (\ref{eq:opposite_edge}).
\end  {proof}

Overall, if the metatile with angles $\tfrac{k\pi}{n}$ and $\tfrac{(n-k)\pi}{n}$ is tileable then for any $j_1<j_2$, and for any position $k_1$ such that the edgeword has symbol $j_1$ in position $k_1-1$, we have (\ref{eq:adjacent_edge}) and (\ref{eq:opposite_edge}) for the both angles. So, in total, four inequalities hold.

Let us see how the converse goes.
\begin{proposition}
  \label{prop:cornersandtileability}
  Let us consider the metatile of angles $\frac{k\pi}{n}$ and $\frac{(n-k)\pi}{n}$ with $k$ odd.
  Assume that $f_1\geq f_3 \geq \dots \geq f_{n-2}$, and assume that for any odd $j_1<j_2$, and for any position $k_1$ such that the edgeword has symbol $j_1$ in position $k_1-1$, we have
  \begin{align}
    f_{|j_2-2k|}^{-1}\circ f_{j_2}(k_1)&<f_{|j_1-2k|}^{-1}\circ f_{j_1}(k_1), \label{eq:counting_adjacent1}\\
    f_{|j_2-2(n-k)|}^{-1}\circ f_{j_2}(k_1)&<f_{|j_1-2(n-k)|}^{-1}\circ f_{j_1}(k_1), \label{eq:counting_adjacent2}\\
    f_{j_2}^{-1}\left(f_{j_2}(k_1)-f_{|j_2-2k|}(m)\right) &< f_{j_1}^{-1}\left(f_{j_1}(k_1)-f_{|j_1-2k|}(m)\right),\label{eq:counting_opposite1}\\
    f_{j_2}^{-1}\left(f_{j_2}(k_1)-f_{|j_2-2(n-k)|}(m)\right) &< f_{j_1}^{-1}\left(f_{j_1}(k_1)-f_{|j_1-2(n-k)|}(m)\right).\label{eq:counting_opposite2}
  \end{align}
  Then the metatile is tileable.
\end{proposition}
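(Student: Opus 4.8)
The strategy is to establish the converse of Propositions~\ref{prop:tileability01} and~\ref{prop:tileability02} by exhibiting an explicit Kenyon matching and verifying directly that it satisfies all four Kenyon conditions K1--K4. As noted in the discussion preceding the counting-function definitions, the opposite-edge structure of the metatiles together with the monotonicity assumption $f_1 \geq f_3 \geq \dots \geq f_{n-2}$ forces a \emph{unique} candidate matching satisfying K1 and K2: within each contiguous segment of a given edge type $\vec{v}_k$, the $i$'th occurrence matches the $i$'th-last occurrence of $-\vec{v}_k$. So the conditions K1 and K2 come for free from the palindromic, opposite-edge-identical structure of the metatiles. The real content is to show that the four inequalities~(\ref{eq:counting_adjacent1})--(\ref{eq:counting_opposite2}) guarantee K3 and K4 for this canonical matching.

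First I would set up the geometry precisely. Going counterclockwise around the metatile of angles $\frac{k\pi}{n}$ and $\frac{(n-k)\pi}{n}$, the boundary consists of four sides, each carrying the edgeword $u$; the two pairs of adjacent sides meet at corners of angle $\frac{k\pi}{n}$ and $\frac{(n-k)\pi}{n}$. Given an edge type appearing on a rhombus of type $j_1$ on one side, I would compute, as in the proof of Proposition~\ref{prop:tileability01}, that its matched partner sits on a rhombus of type $|j_1-2k|$ (for an adjacent side across the $\frac{k\pi}{n}$ corner), of type $|j_1-2(n-k)|$ (across the other corner), or of type $j_1$ again (on the opposite side). The positions of the matched edges are then read off via the counting functions exactly as in Propositions~\ref{prop:tileability01} and~\ref{prop:tileability02}. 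This identifies which of the four inequalities governs each pairing of chains.

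Next I would verify K3 (each matched pair ``sees'' each other, i.e.\ the chain is monotone in the direction $\vec{a}_j^\bot$) and K4 (peripheral monotonicity at chain crossings). The key observation is that conditions~(\ref{eq:counting_adjacent1})--(\ref{eq:counting_opposite2}) are precisely the ``non-crossing'' statements $k_2' < k_1'$ derived geometrically in the forward direction: for any two chains of edge types on rhombuses $j_1 < j_2$ emanating toward the same corner, the inequalities assert that the inner chain (the one from the larger-angle rhombus $j_2$, whose orthogonal direction satisfies $\vec{a}_2^\bot \cdot \vec{a}_1^{\mathsf{T}} \leq 0$) terminates strictly before the outer one. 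Running the argument of Proposition~\ref{prop:tileability01} in reverse, I would show that whenever two chains threaten to cross in a way that would violate K4, the relevant inequality fails; contrapositively, the four inequalities rule out all crossings, and the $2$-almost-balancedness ensures that no ``flat'' (degenerate) rhombus is forced where two chains of the \emph{same} edge type would otherwise meet. Since every corner is either a $\frac{k\pi}{n}$ or a $\frac{(n-k)\pi}{n}$ corner, and chains connect either to an adjacent or to the opposite side, the four inequalities cover exactly the four geometric cases, so K3 and K4 hold throughout.

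I expect the \textbf{main obstacle} to be the bookkeeping needed to translate each geometric crossing condition into the correct counting-function inequality while tracking the direction of the inequality and the correct rhombus type $|j-2k|$ versus $|j-2(n-k)|$ at each corner; in particular one must handle boundary cases where $f_j^{-1}$ returns $\pm\infty$ (a chain that ``wraps around'' past a corner onto the opposite side) and confirm that the sign conventions for $f_j$ with $n \leq j < 2n$ make the opposite-side inequalities~(\ref{eq:counting_opposite1})--(\ref{eq:counting_opposite2}) come out correctly. A secondary subtlety is verifying that the four listed inequalities are genuinely \emph{sufficient} rather than merely necessary: one must argue that no additional crossing configuration arises beyond the adjacent-corner and opposite-side cases, which is where the monotonicity hypothesis $f_1 \geq f_3 \geq \dots \geq f_{n-2}$ and $2$-almost-balancedness do the essential work of constraining the global combinatorial structure of the matching.
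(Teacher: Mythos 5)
Your overall strategy --- invoke Theorem~\ref{thm:kenyon} and argue contrapositively that any crossing violating the Kenyon conditions breaks one of the four assumed inequalities --- is the same as the paper's, but there is a genuine gap at the step where you claim that ``the four inequalities cover exactly the four geometric cases, so K3 and K4 hold throughout.'' That claim is precisely what is not automatic. Two chains can cross in many configurations other than the two analysed in Propositions~\ref{prop:tileability01} and~\ref{prop:tileability02}: both chains may start from the same side of the metatile, one may connect to an adjacent side while the other connects to the opposite side, or the two chains may emanate from the two sides of one and the same rhombus. The paper's proof proceeds by an explicit atlas of ten crossing types (Figure~\ref{fig:atlascrossings}); only types 1--3 (and their mirror versions at the wide corner) reduce to the four assumed inequalities, types 4--6 are valid by construction, and types 7--10 need separate counting arguments. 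For instance, for type 8 one shows that an invalid crossing forces, say, $i>k$, hence the chain from the type-$i$ rhombus is matched on the adjacent side to rhombuses of type $|2k-i|<i$; since $f_{|2k-i|}\geq f_i$ (this is exactly where the hypothesis $f_1\geq f_3\geq\dots\geq f_{n-2}$ enters) that chain can never reach the opposite side, so the crossing cannot be invalid. Type 7 alone requires a delicate two-case argument. You explicitly defer this sufficiency question as a ``secondary subtlety,'' but it is the bulk of the proof: without the case analysis, your argument establishes only the (already proved) necessity direction.

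A second, smaller error: you invoke $2$-almost-balancedness, which is not a hypothesis of this proposition. In the paper that assumption appears only in Proposition~\ref{prop:cs_tileability} and Lemma~\ref{lemma:counting_opposite}, where it serves to \emph{derive} the opposite-side inequalities (\ref{eq:counting_opposite1})--(\ref{eq:counting_opposite2}); here those inequalities are assumed outright, and the residual crossing types must be ruled out using only the monotonicity $f_1\geq f_3\geq\dots\geq f_{n-2}$ and the four inequalities themselves. A proof that leans on balancedness would establish a weaker statement than the one asserted.
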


\begin{proof}[Proof of Proposition \ref{prop:cornersandtileability}]
  By contradiction suppose the metatile is not tileable. We will prove that one of the inequalities is broken.
  If the metatile is not tileable, then by Theorem \ref{thm:kenyon} there exists an invalid crossing of chains. Any two chains that cross are of one of the types described in Figure \ref{fig:atlascrossings}.
  \begin{figure}
    \center
    \includegraphics[width=\textwidth]{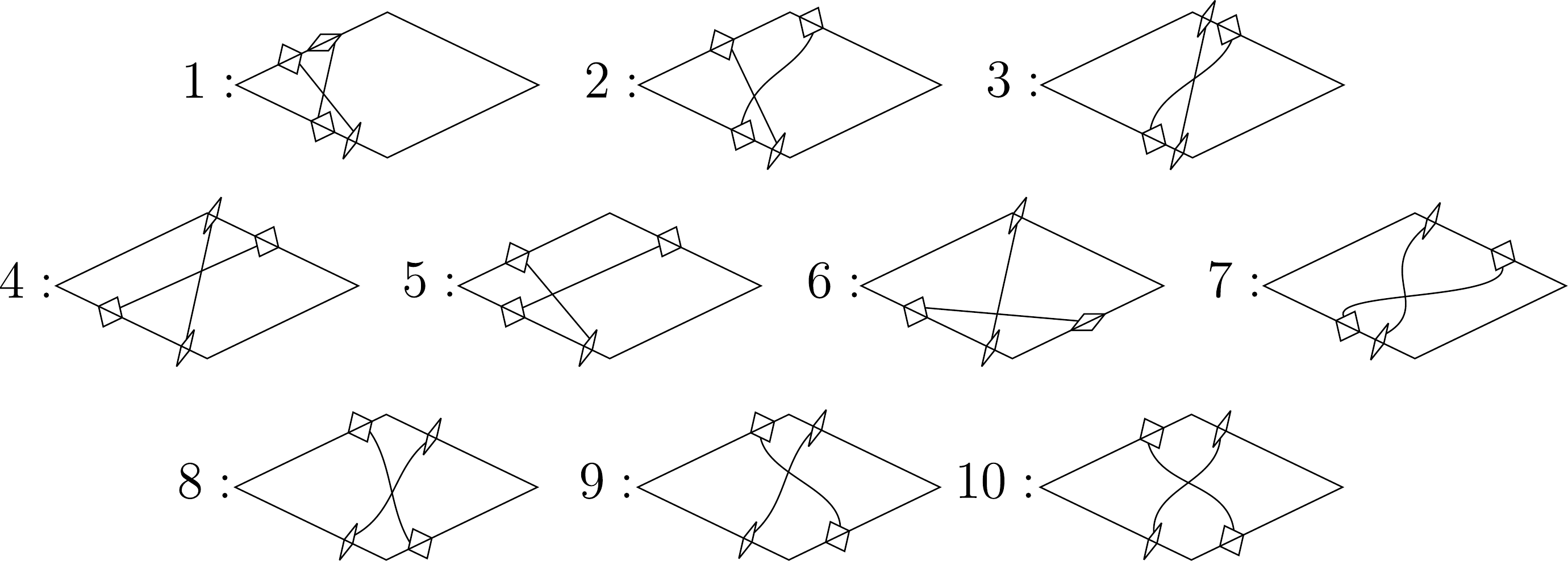}
    \caption{Atlas of possible (but not necessarily valid) crossing types.}
    \label{fig:atlascrossings}
  \end{figure}
  The first case is the one described in the proof of Proposition \ref{prop:tileability01} and in Figure \ref{fig:two-chains}. In particular, as explained in the proof of Proposition \ref{prop:tileability01}, the existence of an invalid crossing of type 1 is equivalent with a broken inequality (\ref{eq:counting_adjacent1}). Case 2 actually also reduces to the same broken inequality. So if the invalid crossing is of type 1 or type 2 then (\ref{eq:counting_adjacent1}) is broken which contradicts the hypothesis.\\
  Type 3 is the case discussed in the proof of Proposition \ref{prop:tileability02} and corresponds to (\ref{eq:counting_opposite1}). So an invalid crossing of type 3 implies a broken inequality (\ref{eq:counting_opposite1}). \\
  Types 4, 5 and 6 are never invalid. Indeed, by their definition it is a crossing between chains that start from faces of rhombuses that face each other so the crossing is always valid.\\
  Type 7: Let us first consider the sub-case where the two pairs of rhombuses whose chains cross are of the same type. This means that we have a situation where the two chains leaving from the two sides of a rhombus are crossing. Let us show it is never the case. For simplicity we assume $k<n-k$, and let $\tfrac{i\pi}{n}$ be the (odd) angle of the rhombus in question. If $n-k-2i < n/2$ then the two directions of the sides of the rhombus have matching rhombuses on the adjacent edges of the metatile. If we call $x$ the position of the rhombus, $x'$ the position of the rhombus to which the left side matches and $x''$ the position of the rhombus to which the right side matches, then $x'<x<x''$.
  But if we have $n-k-2i>n/2$, assuming that the wide angle of the metatile is to the right of the rhombus, we have $x''<x$ because on the adjacent edge of the metatile there are rhombuses of type $2k+i$ that have the exact same edge direction as our rhombus. So the rhombus $x$ matches to a rhombus $x''$ which is more to the left. But the left side of the rhombus matches to $x'<x$, and even $x'<x''<x$, because there are strictly more rhombuses of type $|i-2k|$ (to which the left direction matches) on the left edge of the metatile than there are rhombuses of type $i+2k$ on the right edge of the metatile, so that $x'$ is offset by strictly more positions of rhombuses of type $i$ to the left of $x$ than $x''$. So the two sides of a same rhombus cannot spawn crossing worms.

  Now consider that we have a crossing of type 7 with two different rhombuses. The worm leaving the right side of the left rhombus crosses the worm leaving the right side of the right rhombus (otherwise the two worms leaving the left rhombus would cross). But also the worm leaving the left side of the right rhombus crosses the worm leaving the left side of the left rhombus (for the same reason), and at least one of these two crossings is invalid and of Type 2 or 3. So by the previous cases a crossing of type 7 is impossible.

  For the following cases, consider that the left rhombus is of type $i$, the right rhombus is of type $j$, and the edges of interest are $\vec{a}$ for the left rhombus and $\vec{b}$
  for the right one.\\
  Type 8: Such a crossing is invalid only if $ \vec{a}^\bot\cdot\vec{b}\trans < 0$. This is only possible if $i>k$ or $j>k$, if we consider that the angle on the metatile between the two rhombuses is $\tfrac{k\pi}{n}$. For simplicity, assume $i>k$. In this case the edge direction $a$ is matched on the adjacent edge of the metatile to rhombuses of type $|2k-i|<i$. There are more vectors of type $|2k-i|$ than of type $i$, so the rhombus cannot be matched to a rhombus of the opposite edge of the metatile. So such a crossing cannot be invalid.\\
  Type 9: Such a crossing is invalid only if $\vec{a}^\bot\cdot\vec{b}\trans < 0$. This is only possible if $j>k$, if we consider that the angle on the metatile between the two rhombuses is $\tfrac{(n-k)\pi}{n}$. In that case the edge direction $b$ is matched on the adjacent edge of the metatile to rhombuses of type $|2k-j|<j$. There are more vectors of type $|2k-j|$ than of type $j$, so the rhombus cannot be matched to a rhombus of the opposite edge of the metatile. So such a crossing cannot be invalid.\\
  Type 10: Such a crossing is invalid only if $ \vec{a}^\bot\cdot\vec{b}\trans < 0$. This is only possible if $i>k$ or $j>k$, if we consider that the angle on the metatile between the two rhombuses is $\tfrac{(n-k)\pi}{n}$. For simplicity, let us assume $i>k$. In this case the edge direction $a$ is matched on the adjacent edge of the metatile to rhombuses of type $|2k-i|<i$. There are more vectors of type $|2k-i|$ than of type $i$ so the rhombus cannot be matched to a rhombus of the opposite edge of the metatile. So such a crossing cannot be invalid.

\end{proof}

We can actually do better than this: under the almost-balancedness assumption we only need to prove (\ref{eq:counting_adjacent1}) with $k=1$.
First let us observe that the (\ref{eq:counting_adjacent1}) with $k=1$ implies that there are more rhombuses $j_1$ than rhombuses $j_2$ whenever $j_1<j_2$.

\begin{lemma}
  \label{lemma:f1geqf3}
  Assume that for any odd $j_1<j_2$, and for any position $k_1$ such that the edgeword has symbol $j_1$ in position $k_1-1$,
  \begin{align}
    f_{|j_2-2|}^{-1}\circ f_{j_2}(k_1)&<f_{|j_1-2|}^{-1}\circ f_{j_1}(k_1). \label{eq:kequals1}
  \end{align}
  Then we have $f_1\geq f_3 \geq \dots \geq f_{n-2}$.
\end{lemma}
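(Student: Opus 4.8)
The plan is to prove the chain $f_1 \ge f_3 \ge \dots \ge f_{n-2}$ by establishing each \emph{consecutive} comparison $f_j \ge f_{j+2}$ (as non-decreasing functions on $\{0,\dots,m\}$) and then invoking transitivity, since the full chain is exactly the conjunction of the consecutive inequalities. I would prove these by induction on the odd index $j$, threading the previously established inequality into the next step.

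The main computation is the specialization of the hypothesis (\ref{eq:kequals1}) to $j_1 = j$ and $j_2 = j+2$. Here $|j_2-2| = j$ and $|j_1-2| = j-2$ for $j \ge 3$ (and $|1-2| = 1$ in the base case $j=1$), so (\ref{eq:kequals1}) reads $f_j^{-1}\circ f_{j+2}(k_1) < f_{j-2}^{-1}\circ f_j(k_1)$ for every position $k_1$ with $u_{k_1-1} = j$. The inductive hypothesis $f_{j-2} \ge f_j$ is equivalent to $f_{j-2}^{-1}(y) \le f_j^{-1}(y)$ for all $y$ (the copies of $j-2$ reach any given count no later than those of $j$), so applying it at $y = f_j(k_1)$ together with $f_j^{-1}(f_j(k_1)) = k_1$ (valid because $u_{k_1-1}=j$) yields $f_j^{-1}\circ f_{j+2}(k_1) < k_1$. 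This forces $f_{j+2}(k_1) \le f_j(k_1) - 1$ at every position $k_1$ immediately following a letter $j$. In the base case $j=1$ the right-hand side of (\ref{eq:kequals1}) is already $f_1^{-1}\circ f_1(k_1) = k_1$, so no inductive input is needed.

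Next I would upgrade this from positions following a letter $j$ to \emph{all} positions. For any $x$ such that a letter $j$ occurs at or after $x$, let $k_1$ be the position of the next letter $j$; then $f_j(k_1) = f_j(x)+1$ and, as $f_{j+2}$ is non-decreasing, $f_{j+2}(x) \le f_{j+2}(k_1) \le f_j(k_1)-1 = f_j(x)$. In particular this already shows that no letter $j+2$ can occur before the first letter $j$ (otherwise $f_{j+2}$ would exceed $f_j = 0$ there). The remaining \emph{tail} after the last letter $j$ is handled using that the edgeword $u$ is a palindrome: by the palindrome symmetry, the absence of $j+2$ before the first $j$ gives the absence of $j+2$ after the last $j$, so on the tail $f_{j+2}$ is constant and strictly smaller than the (also constant) value of $f_j$ there. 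Combining the three regimes gives $f_{j+2} \le f_j$ everywhere, completing the induction.

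The step I expect to be the main obstacle is the inductive conversion of $f_{j-2}^{-1}$ into $f_j^{-1}$: the hypothesis naturally compares a type $j$ to its neighbour $j-2$ on the adjacent edge, and only the already-proven inequality $f_{j-2} \ge f_j$ lets us replace $f_{j-2}^{-1}$ by the cleaner $f_j^{-1}$ so that the identity $f_j^{-1}(f_j(k_1)) = k_1$ can be exploited. The second delicate point is the extension to all positions, in particular controlling the tail, where the palindromicity of the edgeword is essential; without it the pointwise conclusion can fail even when the hypothesis holds (for instance a prefix such as $1\,3\,3$ satisfies the only applicable instance of (\ref{eq:kequals1}) yet has $f_3 > f_1$).
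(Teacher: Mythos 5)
Your proof is correct and follows essentially the same route as the paper's: induction on the odd index $j$, specialization of (\ref{eq:kequals1}) to the consecutive pair $(j,j+2)$ at positions following a letter $j$, use of the already-established $f_{j-2}\geq f_j$ (equivalently $f_{j-2}^{-1}\leq f_j^{-1}$) to reduce the right-hand side to $k_1$, and palindromicity to transfer the first-occurrence ordering to the last-occurrence ordering and control the tail. The only difference is packaging: the paper runs the propagation step as a minimal-counterexample contradiction, whereas you propagate the bound $f_{j+2}(k_1)\leq f_j(k_1)-1$ forward directly; the ingredients are identical.
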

\begin{proof}[Proof of Lemma \ref{lemma:f1geqf3}]
  Let us first recall that the edgeword $u$ is a palindrome. This result would not hold otherwise.
  Take any odd $j<n-2$ and assume that $f_1\geq f_3\geq \dots \geq f_j$. Let us prove that $f_j\geq f_{j+2}$. By contradiction, suppose that there exists $k_2$ such that $f_{j+2}(k_2) \geq 1+f_j(k_2)$. We take $k_2$ to be the smallest index at which this holds. We have $u_{k_2-1}=j+2$. Take $k_1$ to be the smallest index greater than $k_2$ at which $u_{k_1-1}=j$. Such an index exists because by (\ref{eq:kequals1}) the first occurrence of $j$ is before the first occurrence of $j+2$, and so the last occurrence of $j$ is after the last occurrence of $j+2$.
  By definition, $f_j(k_1)=f_j(k_2)+1$ and $f_{j+2}(k_1)\geq f_{j+2}(k_2)$, so $f_j(k_1)\leq f_{j+2}(k_1)$.
  But by $f_{|j-2|} \geq f_j$ we have $f_{|j-2|}^{-1}(f_j(k_1))\leq k_1$.
  This is impossible because when we apply (\ref{eq:kequals1}) with $j_1=j$ and $j_2=j+2$ on $k_1$, we get
  $$ f_{j}^{-1}\circ j_{j+2}(k_1) < f_{j-2}^{-1}\circ f_j(k_1) \leq k_1,$$ which implies $f_{j+2}(k_1) < f_j(k_1)$. \\
\end{proof}
From this lemma we can obtain this nice result on the counting functions.
\begin{lemma}
  \label{lemma:k1allk}
  If (\ref{eq:counting_adjacent1}) holds for $k=1$ then (\ref{eq:counting_adjacent1}) and (\ref{eq:counting_adjacent2}) hold for any $1\leq k < n$.\\
\end{lemma}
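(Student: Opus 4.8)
The plan is to reduce everything to the single-corner ($k=1$) hypothesis together with the monotonicity of the counting functions. First, observe that \eqref{eq:counting_adjacent2} is literally \eqref{eq:counting_adjacent1} with $k$ replaced by $n-k$; since $k\mapsto n-k$ is a bijection of $\{1,\dots,n-1\}$, it suffices to establish \eqref{eq:counting_adjacent1} for every $1\le k<n$. Second, applying Lemma~\ref{lemma:f1geqf3} to the hypothesis gives the ordering $f_1\ge f_3\ge\dots\ge f_{n-2}$; together with the facts that each $f_j$ and each $f_j^{-1}$ is non-decreasing and that $f_j(f_j^{-1}(y))=y$ whenever $f_j^{-1}(y)$ is finite, this ordering is what will keep all the compositions below in their range of definition.

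The core is an induction on $k$ that tracks the two chains simultaneously. Fix odd $j_1<j_2$ and a position $k_1$ with $u_{k_1-1}=j_1$, and for $t\ge 0$ set $\alpha_t=|j_1-2t|$, $\beta_t=|j_2-2t|$, $A_t=f_{\alpha_t}^{-1}(f_{j_1}(k_1))$ and $B_t=f_{\beta_t}^{-1}(f_{j_2}(k_1))$; these are the types and positions reached after the two chains have been carried across $t$ unit corners, and the goal $(P_k)$ is exactly $B_k<A_k$. The base case $t=0$ holds because $A_0=k_1$ while $B_0=f_{j_2}^{-1}(f_{j_2}(k_1))<k_1$, the letter at position $k_1-1$ being $j_1\ne j_2$. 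For the inductive step in the regime where neither chain has yet reflected (i.e. $j_1-2t\ge 0$, so that $\alpha_{t+1}=|\alpha_t-2|$ and $\beta_{t+1}=|\beta_t-2|$), I apply the $k=1$ inequality \eqref{eq:kequals1} at the position $A_t$, where the letter is $\alpha_t$ and $\alpha_t<\beta_t$, to obtain $f_{|\beta_t-2|}^{-1}(f_{\beta_t}(A_t))<f_{|\alpha_t-2|}^{-1}(f_{\alpha_t}(A_t))=A_{t+1}$. Since $B_t\le A_t$ by the induction hypothesis and $f_{\beta_t}$ is non-decreasing, $f_{\beta_t}(B_t)\le f_{\beta_t}(A_t)$, whence $B_{t+1}=f_{\beta_{t+1}}^{-1}(f_{j_2}(k_1))\le f_{|\beta_t-2|}^{-1}(f_{\beta_t}(A_t))$; chaining the two inequalities yields $B_{t+1}<A_{t+1}$.

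The hard part is the reflection: the identity $|j-2(t+1)|=|\,|j-2t|-2|$ fails as soon as $j-2t<0$, so the naive step above computes the wrong type once a chain has reached the far corner of its side and turned back, and the clean induction only reaches $k\le (j_1+1)/2$. This is exactly where the palindromicity of $u$ must enter. Because $u$ equals its own reverse, the counting functions satisfy $f_j(m)-f_j(x)=f_j(m-x)$, so re-indexing positions from the opposite end, $x\mapsto m-x$, carries $(P_1)$ to its mirror image, which is again $(P_1)$. Thus at each bounce I restart the induction from the far corner with reversed orientation and stitch the reflected segment onto the forward one, the monotonicity $f_1\ge\dots\ge f_{n-2}$ guaranteeing that the relevant occurrence always exists so that no $f_j^{-1}$ evaluates to $+\infty$ and the segments glue consistently. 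Iterating forward and reflected segments covers all $1\le k<n$, which with the initial reduction gives \eqref{eq:counting_adjacent1} and \eqref{eq:counting_adjacent2} for every $k$. I expect the bookkeeping at the bounce, namely aligning the type and position of the incoming and outgoing segments and checking the range conditions there, to be the only genuinely delicate point; everything else is monotonicity of $f$ and $f^{-1}$ plus one application of the $k=1$ hypothesis per unit corner.
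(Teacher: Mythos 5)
Your reduction of (\ref{eq:counting_adjacent2}) to (\ref{eq:counting_adjacent1}) via $k\mapsto n-k$, your use of Lemma~\ref{lemma:f1geqf3}, and your inductive step in the non-reflected regime are all correct; that step is exactly the paper's case $2k_0\leq j_1<j_2$, where one writes $f_{|j_1-2(k_0+1)|}^{-1}\circ f_{j_1}(k_1)=f_{||j_1-2k_0|-2|}^{-1}\circ f_{|j_1-2k_0|}\circ f_{|j_1-2k_0|}^{-1}\circ f_{j_1}(k_1)$ and applies the hypothesis for $k=k_0$ and then for $k=1$ at the transported position (whose last letter is indeed $|j_1-2k_0|$, since a shortest prefix with a given number of occurrences of a letter ends in that letter). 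The genuine gap is precisely the part you defer: the reflected cases. Your proposed mechanism --- palindromic re-indexing $x\mapsto m-x$, ``restarting the induction from the far corner'' and ``stitching'' segments --- is never carried out, and it is not the right tool: the absolute value in $|j-2k|$ comes from the angle geometry at the corner (it is the type of the matched rhombus on the \emph{adjacent} side), not from a chain bouncing off the far end of an edge, so there is no natural far-corner restart to align with your forward segments, and no argument is given that the pieces can be glued.

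What actually closes the gap is much simpler, and it is the content of the paper's three remaining cases. Once both types have reflected, i.e.\ $j_1<j_2\leq 2k_0$, the order of the image types reverses: $|j_1-2(k_0+1)|>|j_2-2(k_0+1)|$, so Lemma~\ref{lemma:f1geqf3} gives $f_{|j_1-2(k_0+1)|}\leq f_{|j_2-2(k_0+1)|}$, hence $f^{-1}_{|j_1-2(k_0+1)|}\geq f^{-1}_{|j_2-2(k_0+1)|}$; combined with $f_{j_2}(k_1)<f_{j_1}(k_1)$ (which holds because $u_{k_1-1}=j_1$ and $f_{j_1}\geq f_{j_2}$) this yields (\ref{eq:counting_adjacent1}) for $k_0+1$ with no induction at all --- contrary to your expectation, the fully reflected case is the easy one, not the hard one. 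In the mixed case $j_1\leq 2k_0<j_2$ one combines the induction hypothesis at $k=k_0$ with the two monotonicity facts $f^{-1}_{|j_1-2(k_0+1)|}\geq f^{-1}_{|j_1-2k_0|}$ and $f^{-1}_{|j_2-2(k_0+1)|}\leq f^{-1}_{|j_2-2k_0|}$, again consequences of Lemma~\ref{lemma:f1geqf3}. So palindromicity enters only through Lemma~\ref{lemma:f1geqf3}, which you already invoked; no further use of it is needed, and as written your argument does not establish the lemma for $k>(j_1+1)/2$.
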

\begin{proof}
  Let us prove that if (\ref{eq:counting_adjacent1}) holds for $1\leq k < k_0$ then it also holds for $k_0+1$.
  In other words, we assume that for any $1\leq k < k_0$, any $j_1 < j_2$ and any $k_1$ such that $u_{k_1-1  }=j_1$ we have $$ f_{|j_2-2k|}^{-1}\circ f_{j_2}(k_1)<f_{|j_1-2k|}^{-1}\circ f_{j_1}(k_1),$$
  and we want to prove that for any $j_1<j_2$ and any $k_1$ such that $u_{k_1-1}=j_1$ we have $$ f_{|j_2-2(k_0+1)|}^{-1}\circ f_{j_2}(k_1)<f_{|j_1-2(k_0+1)|}^{-1}\circ f_{j_1}(k_1).$$
  Let us take $j_1,j_2,k_1$ such that $j_1<j_2$ and $u_{k_1-1}=j_1$.
  Let us consider four cases: $j_1 < j_2 \leq 2k_0$, $j_1\leq 2k_0$ and $j_2 = 2k_0+1$, $j_1\leq 2k_0$ and $2k_0+1 < j_2$, and  $2k_0 \leq j_1 < j_2$.
  \begin{itemize}
  \item Case $j_1 < j_2 \leq 2k_0$. We have $|j_1 - 2(k_0 +1)| = 2(k_0+1)-j_1$ and $|j_2-2(k_0+1)| = 2(k_0+1)-j_2$. From this we get that $|j_1 - 2(k_0 +1)| > |j_2 - 2(k_0 +1)|$ and from Lemma \ref{lemma:f1geqf3} we get that $f_{j_1} \geq f_{j_2}$ and $f_{|j_1 - 2(k_0 +1)|} \leq f_{|j_2 - 2(k_0 +1)|}$. We can then deduce from it that $f^{-1}_{|j_1 - 2(k_0 +1)|} \geq f^{-1}_{|j_2 - 2(k_0 +1)|}$ which in turn implies that $f^{-1}_{|j_2 - 2(k_0 +1)|}\circ f_{j_2} \leq f^{-1}_{|j_1 - 2(k_0 +1)|}\circ f_{j_1} $. In particular, if $u_{k_1-1} = j_1$ then we get $  f^{-1}_{|j_2 - 2(k_0 +1)|}\circ f_{j_2}(k_1) < f^{-1}_{|j_1 - 2(k_0 +1)|}\circ f_{j_1}(k_1)$.

  \item Case $j_1 \leq 2k_0$ and $j_2 = 2k_0+1$. We have $|j_1 - 2(k_0 +1)| = 2(k_0+1)-j_1 > |j_1-2k_0|$ and  $|j_2-2(k_0+1)| = 1 = |j_2-2k_0|$. From this (and Lemma \ref{lemma:f1geqf3}) we get $f^{-1}_{|j_1-2(k_0+1)|} \geq f^{-1}_{|j_1-2k_0|}$ and $f^{-1}_{|j_2-2(k_0+1)|} = f^{-1}_{|j_2-2k_0|}$. From this we get
  $f^{-1}_{|j_2 - 2(k_0 +1)|}\circ f_{j_2}(k_1) < f^{-1}_{|j_1 - 2(k_0 +1)|}\circ f_{j_1}(k_1)$.

  \item Case $j_1\leq 2k_0$ and $2k_0+1 < j_2$. We have $|j_1 - 2(k_0 +1)| = 2(k_0+1)-j_1 > |j_1-2k_0|$ and $|j_2 - 2(k_0 +1)| = j_2 - 2(k_0+1) < |j_2-2k_0|$. From this (and Lemma \ref{lemma:f1geqf3}) we get $f^{-1}_{|j_1-2(k_0+1)|} \geq f^{-1}_{|j_1-2k_0|}$ and $f^{-1}_{|j_2-2(k_0+1)|} \leq f^{-1}_{|j_2-2k_0|}$. From this we get $f^{-1}_{|j_2 - 2(k_0 +1)|}\circ f_{j_2}(k_1) < f^{-1}_{|j_1 - 2(k_0 +1)|}\circ f_{j_1}(k_1)$.

  \item Case $2k_0 \leq j_1 < j_2$. We have  $|j_1 - 2(k_0 +1)| = | |j_1-2k_0| - 2|$ and $|j_2 - 2(k_0 +1)| = | |j_2-2k_0| - 2|$. We then decompose $$f_{|j_1-2(k_0+1)|}^{-1}\circ f_{j_1}(k_1) =f_{||j_1-2k_0|-2|}^{-1}\circ f_{|j_1-2k_0|}\circ f_{|j_1-2k_0|}^{-1}\circ f_{j_1}(k_1).$$ We can then apply our assumption for $k=k_0$ and for $k=1$ to obtain $f^{-1}_{|j_2 - 2(k_0 +1)|}\circ f_{j_2}(k_1) < f^{-1}_{|j_1 - 2(k_0 +1)|}\circ f_{j_1}(k_1)$.
  \end{itemize}

\end{proof}
To also get (\ref{eq:counting_opposite1}) and (\ref{eq:counting_opposite2}) we need the almost-balancedness assumption.
Under the assumption that $u$ is 2-almost-balanced we always have (\ref{eq:counting_opposite1}) and (\ref{eq:counting_opposite2}).

\begin{lemma}
  Assume that $u$ is 2-almost-balanced (with our Definition \ref{def:letters}) and that $f_1(m)>f_3(m)>\dots > f_{n-2}(m) > 0$ with $m=|u|$ the length of the whole word. Then for any $0<k<n$, for any $j_1<j_2$ and for any $k_1$ such that $u_{k_1-1}=j_1$ and $f_{j_1}(k_1)>f_{|j_1-2k|}(m)$ we have
  $$ f_{j_2}^{-1}\left(f_{j_2}(k_1)-f_{|j_2-2k|}(m)\right) < f_{j_1}^{-1}\left(f_{j_1}(k_1)-f_{|j_1-2k|}(m)\right). $$
  \label{lemma:counting_opposite}
\end{lemma}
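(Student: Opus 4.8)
The plan is to unfold the definition of $f^{-1}$ and turn the claimed inequality into a statement about how many letters $j_2$ lie between two explicit positions. Write $y_i = f_{j_i}(k_1) - f_{|j_i-2k|}(m)$ and set $x_1 = f_{j_1}^{-1}(y_1)$, so that $f_{j_1}(x_1) = y_1$ and $u_{x_1-1} = j_1$. Since $u_{x_1-1} = j_1 \neq j_2$, proving $f_{j_2}^{-1}(y_2) < x_1$ is equivalent to proving $f_{j_2}(x_1) \ge y_2$, i.e.\ to controlling the signed count $f_{j_2}(k_1) - f_{j_2}(x_1)$ of letters $j_2$ strictly between $x_1$ and $k_1$. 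The degenerate cases $y_2 \le 0$ and $x_1 = +\infty$ are dispatched immediately by the sign and infinity conventions on $f^{-1}$, so I may assume $x_1$ finite and $y_2 \ge 1$.

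First I would normalise using the hypothesis. Extending $j\mapsto f_j(m)$ to the odd range $\{1,3,\dots,2n-2\}$ via $f_n(m)=0$ and $f_a(m) = -f_{2n-a}(m)$ for $n<a<2n$, this map is strictly decreasing (this is where $f_1(m)>f_3(m)>\dots>f_{n-2}(m)>0$ is used). The hypothesis $f_{|j_1-2k|}(m) < f_{j_1}(k_1) \le f_{j_1}(m)$ then forces $|j_1 - 2k| > j_1$, hence $j_1 < 2k$ and $k > j_1$, so $|j_1 - 2k| = 2k - j_1$. The same value also decides whether $x_1$ sits before or after $k_1$: it lies before exactly when $f_{|j_1-2k|}(m) \ge 0$, i.e.\ when $2k - j_1 \le n$.

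In the main regime $2k - j_1 \le n$ we have $x_1 \le k_1$; put $v = u_{x_1}\cdots u_{k_1-1}$, so $|v|_{j_1} = f_{|j_1-2k|}(m)$ and the goal becomes $|v|_{j_2} \le f_{|j_2-2k|}(m)$. I would prove this with two complementary bounds whose ranges of validity together exhaust the parameters. When $k \le j_2$ one has $|j_2 - 2k| \le j_2$, so monotonicity gives $f_{|j_2-2k|}(m) \ge f_{j_2}(m) \ge |v|_{j_2}$, the last step being the trivial total-count bound. When $k > j_2$ both $j_1,j_2 < 2k$, so $|j_2 - 2k| < |j_1 - 2k|$ and strict monotonicity yields $f_{|j_2-2k|}(m) \ge f_{|j_1-2k|}(m) + 1$; this is matched by the refined almost-balanced bound $|v|_{j_2} \le |v|_{j_1} + 1$, obtained by deleting the terminal letter $u_{k_1-1}=j_1$ of $v$ and applying $2$-almost-balancedness to the remaining factor.

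The remaining regime $2k - j_1 > n$ (so $f_{|j_1-2k|}(m) < 0$ and $x_1 > k_1$) is the main obstacle. Here the target becomes a \emph{lower} bound on the number of $j_2$'s in $w = u_{k_1}\cdots u_{x_1-1}$, for which the one-sided upper bound supplied by almost-balancedness is, as stated, the wrong tool. The plan is to mirror the whole configuration through the palindrome identity $f_j(m) - f_j(p) = f_j(m-p)$: the negative adjacent counts $f_{|j_i-2k|}(m)$ become honest positive counts $f_{j_i + 2(n-k)}(m)$ of genuine rhombus types read from the opposite end of $u$, and the lower bound turns into an upper bound on a complementary factor, to which I re-apply the dichotomy of the previous paragraph. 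The delicate points — where both the palindromicity of $u$ and the exact constant $2$ in $2$-almost-balancedness are spent — are to check that the reversal sends each of the two bounds to a valid bound with its inequality in the correct direction, and that the single-letter endpoint correction producing the crucial ``$+1$'' survives the mirror.
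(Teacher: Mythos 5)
In the regime $2k-j_1\le n$ (equivalently $f_{|j_1-2k|}(m)\ge 0$) your argument is correct and is, up to notation, exactly the paper's proof: your $x_1$ is the paper's $k_1'$, your dichotomy $k\le j_2$ versus $k>j_2$ is the paper's dichotomy $j_2\ge|j_2-2k|$ versus $j_2<|j_2-2k|$, and your refined bound $|v|_{j_2}\le|v|_{j_1}+1$ (strip the terminal letter $j_1$, then apply $2$-almost-balancedness) combined with the strict-monotonicity bound $f_{|j_2-2k|}(m)\ge f_{|j_1-2k|}(m)+1$ is precisely the paper's central step.

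The other regime, $2k-j_1>n$, is a genuine gap in your proposal, and the repair you sketch cannot be carried out. Palindromicity of $u$ is not among the hypotheses of the lemma, so the identity $f_j(m)-f_j(p)=f_j(m-p)$ is not available to you; the deeper problem is that the statement is false in this regime even if palindromicity is added. Take $n=7$ and the palindrome $u=3\,5\,1\,1\,3\,1\,1\,5\,3$, which is $1$-almost-balanced (hence $2$-almost-balanced) and has $f_1(m)=4>f_3(m)=3>f_5(m)=2>0$; take $k=6$, $j_1=1$, $j_2=3$, $k_1=3$. Then $u_{k_1-1}=1$ and $f_1(3)=1>f_{11}(m)=-f_3(m)=-3$, so all hypotheses hold, yet $f_3^{-1}(f_3(3)-f_9(m))=f_3^{-1}(3)=9$ while $f_1^{-1}(f_1(3)-f_{11}(m))=f_1^{-1}(4)=7$: both sides are finite and $9<7$ fails. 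The example also shows why your mirror is structurally doomed: after reflecting, you do get (as you say) an upper-bound problem, namely bounding the $j_2$'s of the complementary suffix $s$ by its $j_1$'s, and $2$-almost-balancedness gives $|s|_{j_2}\le|s|_{j_1}+2$; but for this to close the argument you would need $\left(f_{j_1}(m)-f_{j_2}(m)\right)+\left(f_B(m)-f_A(m)\right)+2\le f_{j_1}(k_1)-f_{j_2}(k_1)$, with $A=j_1+2(n-k)$ and $B=j_2+2(n-k)$, and nothing in the hypotheses controls these global count differences (in the example the left side is $2$ and the right side is $0$). For the record, the paper's own proof has the same hole: it dismisses the case $|j_1-2k|>n-2$ as ``very similar'' to the vacuous case $j_1\ge|j_1-2k|$, which is accurate only in the sub-case $f_{|j_2-2k|}(m)\ge 0$, where the conclusion is trivial because the left side is at most $k_1$ while the right side exceeds $k_1$; when $f_{|j_1-2k|}(m)$ and $f_{|j_2-2k|}(m)$ are both negative, the lemma as stated is false, and rescuing it requires genuinely two-sided frequency-balance hypotheses of the kind the actual billiard-word edgewords satisfy, not one-sided almost-balancedness.
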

\begin{proof}
 Let us take such $k,j_1,j_2,k_1$.
  Let us first solve two special cases:
  \begin{itemize}
  \item If $j_1\geq |j_1-2k|$ then we never have $f_{j_1}(k_1)>f_{|j_1-2k|}(m)$. Indeed, we have $f_{j_1}(k_1) \leq f_{j_1}(m) \leq f_{|j_1-2k|}(m)$ by $j_1\geq |j_1-2k|$, so there exist no such $k,j_1,j_2,k_1$.
  \item If $|j_1-2k|> n-2$ then there is no rhombus of angle $|j_1-2k|$ in the edgeword, and either $|j_1-2k|=n$ in which case $f_{|j_1-2k|}(m)=0$, or $|j_1-2k|> n$ in which case  $f_{|j_1-2k|}(m)= - f_{2n-|j_1-2k|}(m) < 0$. This second special case is actually very similar to the the first one because $f_1(m)>f_3(m)>\dots > f_{n-2}(m)> f_n(m)> f_{n+2}(m) > \dots > f_{2n-2}(m)$ with the definition of $f_i(m)=-f_{2n-i}(m)$ for $i>n$.
  \end{itemize}
  If we are not in these cases then  $|j_1-2k|>j_1$. This means that $|j_1-2k|=2k-j_1$ and $j_1<k$.

  If $j_2\geq|j_2-2k|$ then the inequality is simple with
  \begin{align*} & f_{j_2}(k_1)-f_{|j_2-2k|}(m)\leq 0 \Rightarrow  \\ &\qquad f_{j_2}^{-1}\left(f_{j_2}(k_1)-f_{|j_2-2k|}(m)\right) = 0 < f_{j_1}^{-1}\left(f_{j_1}(k_1)-f_{|j_1-2k|}(m)\right). \end{align*}
  Otherwise we have $j_1<j_2<k<|j_2-2k|<|j_1-2k|\leq n-2$. In that case we have $f_{|j_2-2k|}(m)>f_{|j_1-2k|}(m)$.
  Since $f_{j_1}(k_1)-f_{|j_1-2k|}(m)>0$ there exists a $k_1'<k_1$ such that $u_{k_1'}=j_1$ and  $f_{j_1}(k_1')= f_{j_1}(k_1)-f_{|j_1-2k|}(m)$.\\
  Let us now remark that $f_{j_2}(k_1)-f_{|j_2-2k|}(m) \leq f_{j_2}(k_1)-(f_{|j_1-2k|}(m)+1) \leq f_{j_2}(k_1')$
  because $|u_{k_1'}u_{k_1'+1}\dots u_{k_1-1}|_{j_2} \leq 2+|u_{k_1'}u_{k_1'+1}\dots u_{k_1-1}|_{j_1}$ (by Definition \ref{def:letters})
  so $|u_{k_1'}u_{k_1'+1}\dots u_{k_1-1}u_{k_1}|_{j_2} \leq 1+|u_{k_1'}u_{k_1'+1}\dots u_{k_1-1}u_{k_1}|_{j_1}$ and \\
  \begin{align*}
    f_{j_2}(k_1)-f_{j_2}(k_1') &= |u_{k_1'}u_{k_1'+1}\dots u_{k_1-1}u_{k_1}|_{j_2} \\
    &\leq 1+|u_{k_1'}u_{k_1'+1}\dots u_{k_1-1}u_{k_1}|_{j_1} \\
    &\leq 1 + f_{j_1}(k_1)-f_{j_1}(k_1') \\
    &\leq 1 + f_{|j_1-2k|}(m).
    \end{align*}
  So overall
  \begin{align*}f_{j_2}^{-1}\left(f_{j_2}(k_1)-f_{|j_2-2k|}(m)\right) & \leq f_{j_2}^{-1}\circ f_{j_2}(k_1')  \\ &< f_{j_1}^{-1}\circ f_{j_1}(k_1) = f_{j_1}^{-1}\left(f_{j_1}(k_1)-f_{|j_1-2k|}(m)\right).\end{align*}
\end{proof}

\section{Planar Rosa: substitution discrete planes with $2n$-fold rotational symmetry}
\label{sec:planar-rosa}
In this Section we present the construction of the Planar Rosa substitution tiling that is also discrete planes. We then give the proof of Theorem \ref{th:main}.
Let $n$ be an odd integer greater than three.

\paragraph{Lifting to $\mathbb{R}^n$}
As throughout this article, we consider vertex-hierarchic substitutions $\sigma$ such that the image of any edge by the substitution is the same up to rotation and translation.
Just as in Section \ref{sec:subrosa} we lift the tilings to $\R{n}$, which we decompose in $\lfloor \frac{n}{2} \rfloor$ planes $\e{j}$ and a line $\Delta$.
As seen in Section \ref{sec:subrosa}, the expansion $\phi$ admits the planes $\e{j}$ and the line $\Delta$ as eigenspaces, with eigenvalues $\lambda_j$ and $\lambda_\Delta$
given by Lemma~\ref{lemma:planarrosa_eigenvalues}.

The substitutions we study are defined by their edgeword $u = u_0\dots u_l$ with letters in the alphabet of odd numbers $\{1,3,5,7,\dots n-2\}$, where the symbol $k\in 2\mathbb{N}+1$
represents the rhombus with angles $\frac{k\pi}{n}$ and $\frac{(n-k)\pi}{n}$.
As seen in Section \ref{sec:planar}, the abelianized edgeword $[u] = ( |u|_1 , |u|_3 , \dots, |u|_{n-2} ) = ( [u]_0, [u]_1 , \dots , [u]_{n-1} )$ determines $\phi$.

We want $\sigma$ to be planar and this translates to conditions on eigenvalues of $\phi$, as stated in the following proposition. 
For more details and illustrations on the proof we refer to~\cite[\S 3]{lutfalla2021thesis}.

\begin{proposition}[Eigenvalues and planarity]
\label{prop:eigenvalue_planarity}
Let a substitution $\sigma$ have the expansion $\phi$.
If the eigenvalues of $\phi$ are such that $|\lambda_0|>1$,\, $|\lambda_i|<1$ for $1\leq i <\lfloor \tfrac{n}{2}\rfloor$ and $|\lambda_\Delta|<1$ then $\sigma$ is planar of slope $\e{0}$, \ie, tilings legal for $\sigma$ are discrete planes of slope $\e{0}$.
\end{proposition}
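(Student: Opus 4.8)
The plan is to use the $\hat{\phi}$-invariant orthogonal splitting of $\R{n}$ and to run a geometric-series estimate controlling how far the iterated lifted metatiles $\hat{\sigma}^k(t)$ drift away from the slope. First I would fix the decomposition $\R{n} = \e{0}\oplus\e{0}^\bot$, where $\e{0}^\bot = \e{1}\oplus\dots\oplus\e{\floor{\tfrac{n}{2}}-1}\oplus\Delta$, and let $p$ be the orthogonal projection of $\R{n}$ onto $\e{0}^\bot$. Since the eigenspaces are pairwise orthogonal and each is $\hat{\phi}$-invariant, both $\e{0}$ and $\e{0}^\bot$ are $\hat{\phi}$-invariant, so $p$ commutes with $\hat{\phi}$. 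Moreover, because $M_\phi$ is circulant hence normal, and acts on each orthogonal eigenplane $\e{j}$ as a rotation--scaling of modulus $|\lambda_j|$ (and on $\Delta$ by $\lambda_\Delta$), the restriction $\hat{\phi}|_{\e{0}^\bot}$ is a strict contraction: with $\rho := \max(|\lambda_1|,\dots,|\lambda_{\floor{\tfrac{n}{2}}-1}|,|\lambda_\Delta|)<1$ one has $\|\hat{\phi}(y)\|\leq\rho\|y\|$ for all $y\in\e{0}^\bot$. This is exactly where the hypotheses $|\lambda_i|<1$ (for $1\leq i<\floor{\tfrac{n}{2}}$) and $|\lambda_\Delta|<1$ are used; the hypothesis $|\lambda_0|>1$ only ensures that $\sigma$ is a genuine inflation on the tiling plane and plays no role in the contraction, consistent with the earlier remark that the condition is sufficient but not necessary.

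Next I would bound the $\e{0}^\bot$-spread of iterated metatiles. Set $d_k := \max_{i<j}\diam\big(p(V(\hat{\sigma}^k(\intsquare{0}{i}{j})))\big)$ and let $B := \diam\Big(p\big(\bigcup_{i<j} V(\hat{\sigma}(\intsquare{0}{i}{j}))\big)\Big)$ be the projected diameter of the union of all order-one metatiles based at the origin, which is finite since there are finitely many prototiles. Using the defining identity $\hat{\sigma}(\intsquare{\hat{x}}{i}{j}) = \hat{\phi}(\hat{x}) + \hat{\sigma}(\intsquare{0}{i}{j})$, every vertex of $\hat{\sigma}^{k+1}(t)$ is of the form $\hat{\phi}(\hat{x}) + w$ where $\intsquare{\hat{x}}{i}{j}$ is a tile of $\hat{\sigma}^k(t)$, so that $\hat{x}\in V(\hat{\sigma}^k(t))$, and $w\in V(\hat{\sigma}(\intsquare{0}{i}{j}))$. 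Projecting two such vertices and using $p\hat{\phi}=\hat{\phi}p$, the difference of their projections splits as $\hat{\phi}\big(p(\hat{x}_1)-p(\hat{x}_2)\big) + \big(p(w_1)-p(w_2)\big)$, whose norm is at most $\rho\,d_k + B$. This yields the recursion $d_{k+1}\leq\rho\,d_k+B$, hence by induction the \emph{uniform} bound
\[ d_k \;\leq\; d_0 + \frac{B}{1-\rho} \;=:\; \delta' \]
for every $k$ and every starting prototile.

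Finally I would transfer this to an arbitrary tiling $\tiling$ legal for $\sigma$. Any two vertices of $\widehat{\tiling}$ lie in a common finite patch $P$; by legality $P$ appears, up to translation, inside some $\hat{\sigma}^k(t)$, and since translations of the tiling lift to integer translations of $\R{n}$ while projected diameters are translation invariant, $\diam(p(V(P)))\leq d_k\leq\delta'$. Thus the window $p(V(\widehat{\tiling}))$ has diameter at most $\delta'$, so $\widehat{\tiling}$ stays within distance $\delta'$ of the affine plane through any fixed vertex parallel to $\e{0}$; that is, $\tiling$ is a discrete plane of slope $\e{0}$. I expect the main obstacle to be not any single computation but packaging the self-similarity correctly: one must (i) verify the commutation $p\hat{\phi}=\hat{\phi}p$ together with the clean contraction constant coming from orthogonality and normality of the eigenplanes, and (ii) argue uniformity of the bound over the whole infinite tiling using only legality and translation invariance of the projected diameter, rather than any recognizability of the substitution.
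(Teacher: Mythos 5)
Your proof is correct and follows essentially the same route as the paper's: a uniform bound on the $\e{0}^\bot$-diameter of iterated metatiles, obtained from the self-similarity identity $\hat{\sigma}(\intsquare{\hat{x}}{i}{j})=\hat{\phi}(\hat{x})+\hat{\sigma}(\intsquare{0}{i}{j})$ together with the contraction/geometric-series recursion, followed by a transfer to arbitrary legal tilings via legality. The only cosmetic differences are that the paper contracts each eigenspace $\e{j}$ ($j\neq 0$) and $\Delta$ separately and sums the per-subspace bounds, where you use a single operator-norm constant $\rho$ on all of $\e{0}^\bot$ (justified by orthogonality and normality), and the paper phrases the final transfer as a proof by contradiction rather than your direct two-vertices-in-a-common-patch argument.
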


\begin{proof}
  Let us prove that there exists $\delta$ such that for all $n\in\mathbb{N}$ and for any tile $t$, the metatile $\sigma^n(t)$ has $\e{0}^\bot$-diameter less than $\delta$.
  Here, the $S$-diameter of a set $X$, noted $\diam_S(X)$, is the diameter of the orthogonal projection of $X$ on the subspace $S$.
  Since $\R{n}$ is the orthogonal direct sum of $\Delta$ and $\e{j}$ over $0\leq j <\lfloor \tfrac{n}{2}\rfloor$,
  we can prove separately that $\sigma^n(t)$ has a finite $\Delta$-diameter and finite $\e{j}$-diameters for $j\neq 0$.
  These are the subspaces where the corresponding eigenvalue has modulus less than one.

  Let $\ea$ be one of the spaces $\Delta$ and $\e{j}$, $j\neq 0$, and let $\lambda$ be the corresponding complex eigenvalue. We have that $|\lambda|<1$.
  The set  $\mathbf{T}$ of prototiles is finite so we can define the number
  $$\delta_{\ea} := \max\{ \diam_{\ea}(\sigma(t))\ |\ t\in\mathbf{T}\}.$$
  Consider any $t\in\mathbf{T}$ and any $n\in\mathbb{N}$. There exist $x_0,y_0\in\sigma^{n+1}(t)$ such that
  $$\diam_{\ea}(\sigma^{n+1}(t)) = \sup \{\|\Pi_{\ea}(x-y)\ |\ x,y \in \sigma^{n+1}(t)\} = \|\Pi_{\ea}(x_0 - y_0)\|,$$
  where $\Pi_{\ea}$ denotes the orthogonal projection operator onto the space $\ea$.
  Since $x_0$ is in $\sigma^{n+1}(t)$ it is in some metatile $\sigma(t')$ for $t'\in \sigma^{n}(t)$. So we decompose $x_0=x_1+x_2$  with $x_1 \in \phi(\sigma^n(t))$
  and $x_2\in \sigma(t')$ for some $t'\in\mathbf{T}$. Here, $x_1$ is the corner of a metatile of order 1 to which $x_0$ belongs and $x_2$ is the relative position of $x_0$ in this metatile.
  Similarly we decompose  $y_0 = y_1+y_2$ for $y_1 \in \phi(\sigma^n(t))$
  and $y_2\in \sigma(t'')$ for some $t''\in\mathbf{T}$. Now
  \begin{align*}
  \diam_{\ea}(\sigma^{n+1}(t)) &=
    \|\Pi_{\ea}(x_0-y_0)\| \\  &\leq \|\Pi_{\ea}(x_1-y_1)\| + \|\Pi_{\ea}(x_0-x_1)\| + \|\Pi_{\ea}(y_0-y_1)\| \\ &\leq |\lambda|\diam_{\ea}(\sigma^n(t)) + 2\delta_{\ea}.
    \end{align*}
  Iterating this gives that $\diam_{\ea}(\sigma^{n+1}(t)) \leq \frac{2\delta_{\ea}}{1-|\lambda|}$ because $0\leq |\lambda|<1$. Thus we have a desired finite bound on the $\ea$-diameters.

  Overall, we now have that the $\e{0}^\bot$-diameter of $\sigma^n(t)$ is bounded by the sum
  $$
  \delta=\frac{2\delta_{\Delta}}{1-|\lambda_\Delta|}+\sum_{j=1}^{\lfloor\tfrac{n}{2}\rfloor-1} \frac{2\delta_{\e{j}}}{1-|\lambda_j|}
  $$
  of the $\Delta$-diameter and the $\e{j}$-diameters.

 To finish the proof, take a tiling $\tiling$ legal for $\sigma$, and suppose that there exists no $r$ such that $\tiling$ is a discrete plane of slope $\e{0}$ and thickness $r$. This means that there exists a sequence $\delta_n$ such that $\delta_n \to \infty$ and there exist vertices $v_n$ and $v_n'$ of the tiling such that $d_{\e{0}^\bot}(v_n,v_n')=\delta_n$. Let $B_n$ be a patch in $\tiling$ that contains $v_k$ and $v_k'$ for all $k<n$. By the definition of a tiling admissible for $\sigma$, there exists a tile $t_n$ and an integer $m_n$ such that the pattern $B_n$ appears in $\sigma^{m_n}(t_n)$. Therefore there exist a sequence of metatiles with unbounded $\e{0}^\bot$-diameter. This is impossible from the result we just proved.
 So any tiling admissible for $\sigma$ is a discrete plane of slope $\e{0}$.
\end{proof}

\paragraph{Rhombus frequencies}
Let us consider a substitution with word $u=u_0\dots u_m$ and eigenvalues (in modulus) $|\lambda|= \left(|\lambda_i|\right)_{0\leq i < \lfloor\tfrac{n}{2}\rfloor}$.
What constraints on $u$ would ensure that $\sigma$ is admissible for planarity?

\begin{definition}[Optimal frequency vector $\gamma$]
  Let us define the optimal frequency vector $\gamma$ as $$\gamma := \left(\cos(\tfrac{(2i+1)\pi}{n})\right)_{0\leq i < \floor{\frac{n}{2}}}.$$
\end{definition}

\begin{lemma}[Approximating $\langle\gamma\rangle$]
  There exists $\epsilon>0$ such that $\forall x\geq 2,\ \forall u \in \{1,3, \dots ,n-2\}^*$ \[ d([u],\tfrac{x\gamma}{2\|\gamma\|^2}) < \epsilon \implies d( \underbrace{ | \eigenmatrix{n}\cdot[u]\trans |}_{|\lambda|}, (x,0,\dots,0)) < 1  \implies \begin{cases}|\lambda_0|>1 \\ |\lambda_1| < 1 \\ \vdots\\ |\lambda_{\lfloor \frac{n}{2}\rfloor-1}| < 1 \end{cases} \]
      \label{lem:approxgamma}
\end{lemma}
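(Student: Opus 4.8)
The statement chains two implications, which I would prove separately; the rightmost one is elementary, whereas the first rests on a single exact identity for the eigenvalue matrix $\eigenmatrix{n}$.

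The crux is that the point $\frac{x\gamma}{2\|\gamma\|^2}$ is the \emph{exact} $\eigenmatrix{n}$-preimage of $(x,0,\dots,0)$; equivalently,
$$ \eigenmatrix{n}\cdot\gamma\trans = 2\|\gamma\|^2\,(1,0,\dots,0)\trans . $$
To prove this I would compute the $i$-th coordinate $\sum_{j}(\eigenmatrix{n})_{i,j}\,\gamma_j$ and expand each product of two cosines by the product-to-sum formula. After this expansion the coordinate becomes $T(i)+T(i+1)$, where $T(\ell):=\sum_{j=0}^{(n-3)/2}\cos\frac{(2j+1)\ell\pi}{n}$ is a Dirichlet-type trigonometric sum. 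For odd $n$ this sum evaluates to $T(0)=\frac{n-1}{2}$ and $T(\ell)=\frac{(-1)^{\ell+1}}{2}$ for $1\leq\ell\leq n-1$. Consequently the $0$-th coordinate is $T(0)+T(1)=\frac{n}{2}$, while for every $i\geq 1$ the two terms $T(i)$ and $T(i+1)$ are opposite and cancel, giving $0$; since a direct evaluation also yields $2\|\gamma\|^2=\frac{n}{2}$, the two constants agree and the identity follows. This trigonometric evaluation is the step I expect to be the main obstacle, though it is of the same flavour as the telescoping computation already performed in Lemma~\ref{lem:weirdtrigsums}.

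Granting the identity, the first implication is pure linearity. Writing $\|\eigenmatrix{n}\|$ for the Euclidean operator norm of $\eigenmatrix{n}$, for every word $u$ and every $x\geq 2$ we have
$$ \left\| \eigenmatrix{n}[u]\trans-(x,0,\dots,0)\trans \right\| = \left\| \eigenmatrix{n}\Big([u]-\tfrac{x\gamma}{2\|\gamma\|^2}\Big)\trans \right\| \leq \|\eigenmatrix{n}\|\cdot d\Big([u],\tfrac{x\gamma}{2\|\gamma\|^2}\Big) . $$
Choosing $\epsilon:=1/\|\eigenmatrix{n}\|$, which does not depend on $x$, the hypothesis $d([u],\tfrac{x\gamma}{2\|\gamma\|^2})<\epsilon$ forces $\eigenmatrix{n}[u]\trans$ to lie within Euclidean distance $1$ of $(x,0,\dots,0)$. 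Passing to coordinate-wise moduli can only decrease the distance to a non-negative target, because $\big||y_i|-t_i\big|\leq|y_i-t_i|$ whenever $t_i\geq 0$; since $|\lambda|\trans=|\eigenmatrix{n}[u]\trans|$ by Lemma~\ref{lemma:planarrosa_eigenvalues}, this gives $d(|\lambda|,(x,0,\dots,0))<1$, which is the middle assertion.

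The second implication is then read directly off the unit ball. If $d(|\lambda|,(x,0,\dots,0))<1$, then in particular $\big||\lambda_0|-x\big|<1$, so $|\lambda_0|>x-1\geq 1$ since $x\geq 2$; and for each $i\geq 1$ the distance from $|\lambda_i|$ to its target $0$ is at most the total distance, so $|\lambda_i|<1$. These are exactly the claimed eigenvalue inequalities, which completes the argument.
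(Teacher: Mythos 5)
Your proof is correct and shares the paper's skeleton --- everything is reduced to the single identity $\eigenmatrix{n}\cdot\gamma\trans = 2\|\gamma\|^2(1,0,\dots,0)\trans$, then the first implication follows by linearity with an $\epsilon$ independent of $x$, and the second from $x\geq 2$ --- but you establish the key identity by a genuinely different route. The paper deduces it from its appendix Lemma~\ref{lemma:orthogonal} (orthogonality of $\tfrac{1}{\sqrt{n}}\eigenmatrix{n}$, proved there via the DCT-III matrix) combined with the observation that $2\gamma$ is the first row of the symmetric matrix $\eigenmatrix{n}$, so that multiplying by $\gamma\trans$ extracts the inner products of that row with all rows. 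You instead verify the identity coordinate by coordinate: product-to-sum turns coordinate $i$ into $T(i)+T(i+1)$, and the Dirichlet-type evaluation $T(0)=\tfrac{n-1}{2}$ and $T(\ell)=\tfrac{(-1)^{\ell+1}}{2}$ for $1\leq \ell\leq n-1$ (correct, via $\sum_{j=0}^{m-1}\cos((2j+1)\theta)=\tfrac{\sin(2m\theta)}{2\sin\theta}$ with $\theta=\tfrac{\ell\pi}{n}$) gives $\tfrac{n}{2}$ for $i=0$ and exact cancellation for $i\geq 1$, matching $2\|\gamma\|^2=\tfrac{n}{2}$. In effect you prove directly just the special case of the appendix lemma that is actually needed --- orthogonality of the first row of $\eigenmatrix{n}$ against the others --- so your argument is self-contained and avoids the appendix, at the cost of redoing by hand a computation the paper handles structurally. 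Two points where your write-up is more careful than the paper's: you make the $x$-independent choice $\epsilon = 1/\|\eigenmatrix{n}\|$ explicit, and you justify the passage from $\eigenmatrix{n}[u]\trans$ to coordinate-wise moduli via $\bigl||y_i|-t_i\bigr|\leq |y_i-t_i|$ for $t_i\geq 0$, a step the paper's proof silently skips. One caveat common to both arguments: the identity requires $\gamma_i=\cos\tfrac{(2i+1)\pi}{2n}$ (the formula used elsewhere in the paper, e.g.\ in the proof of Lemma~\ref{lem:biglemma2}), not the $\cos\tfrac{(2i+1)\pi}{n}$ appearing in the paper's definition of the optimal frequency vector, which is evidently a typo; your computation implicitly uses the correct version, exactly as the paper's own proof does.
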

\begin{proof}
  Recall that $|\lambda| = \left|  \eigenmatrix{n}\cdot[u]\trans \right|$ by Lemma \ref{lemma:planarrosa_eigenvalues}.

  The first implication is a direct consequence of the uniform continuity of linear functions (here a matrix-vector product) in $\mathbb{R}^{\lfloor \frac{n}{2}\rfloor}$ and $$ \eigenmatrix{n}\cdot\gamma\trans = 2\|\gamma\|^2(1,0,\dots,0)\trans.$$ This is due to the fact that $\eigenmatrix{n}$ is orthogonal (up to a scaling vector, see Lemma \ref{lemma:orthogonal} in the Appendices), and to the fact that $2\gamma$ is equal to the first row of $\eigenmatrix{n}$.

    The second implication is simply a consequence of $x\geq 2$.
\end{proof}

\paragraph{Tracking $\langle\gamma\rangle$}
The idea now is to find a sequence of edgewords $u$ such that $[u]$ approximates or tracks the line $\langle\gamma\rangle$.
It is always possible to track a line with integer points, but let us define
a specific sequence of points that approximates the line $\langle \gamma \rangle$.

\begin{definition}[Billiard line and sequence]
\label{def:seq_p_and_omega}
Let $\Gamma$ be the line $\langle \gamma \rangle$ and let $\Gamma_{\frac{1}{2}}$ be the line $\langle \gamma \rangle + (\tfrac{1}{2},\dots \tfrac{1}{2})$.

Let $\omega$ be the bi-infinite billiard word of line $\Gamma_{\frac{1}{2}}$ centred on $(\tfrac{1}{2},\dots \tfrac{1}{2})$. This means that we build the bi-infinite word $\omega$ by travelling the line and adding a letter $2i+1$ each time it crosses an hyperplane of type $H_{i,k}:= \{ \vec{x}\in \mathbb{R}^{\lfloor \frac{n}{2} \rfloor} \ |\ \vec{x} \cdot \vec{e_i}\trans = k \}$ with $k\in \mathbb{Z}$ and $\vec{e_i}$ a vector of the canonical basis of $\mathbb{R}^{\lfloor \frac{n}{2} \rfloor}$. (Note that when the line crosses an hyperplane of normal $\vec{e_i}$ we add a letter $2i+1$ instead of the classical letter $i$ because the letters of $\omega$ represent rhombuses on the boundary of the substitution's metatiles.)

Let $(p_i)_{i \in \mathbb{Z}}$ be the sequence of points of $\mathbb{Z}^{\lfloor \frac{n}{2} \rfloor}$ associated to word $\omega$ with $p_0 = 0$. This means that for all $j$, $$\omega_j = 2i+1 \qquad \Longrightarrow \qquad p_{j+1}-p_j = \vec{e_i}.$$
\end{definition}

The choice of using the billiard word $\omega$ of the line $\Gamma_{\frac{1}{2}}$ instead of the line $\Gamma$ is motivated by the fact that this exact sequence $\omega$ of rhombuses  appears in the DeBruijn multigrid tiling $P_n(\tfrac{1}{2},\dots \tfrac{1}{2})$, as seen in Theorem \ref{thm:Gn_regular} below. This will give us a tool to prove tileability, as seen in Proposition \ref{prop:cone_tileable}.

Before we prove that the sequence $(p_i)$ approximates the line $\gamma$, let us present the exact position of the intersection points of the line $\Gamma_{\frac{1}{2}}$ with the hyperplanes.

\begin{lemma}
For a vector $\vec{e_i}$ of the canonical basis and an integer $k$, the unique element of $ \Gamma_{\frac{1}{2}}\cap H_{i,k}$ is $(\tfrac{1}{2},\dots \tfrac{1}{2}) + \frac{k-\tfrac{1}{2}}{\cos\left(\frac{(2i+1)\pi}{2n}\right)}\gamma$.
\end{lemma}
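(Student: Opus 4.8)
The plan is to parametrize the line and reduce membership in the hyperplane to a single scalar equation. First I would write a generic point of $\Gamma_{\frac{1}{2}}$ as $\vec{x}(t) = (\tfrac{1}{2},\dots,\tfrac{1}{2}) + t\,\gamma$ with $t\in\mathbb{R}$, which is exactly the definition of the affine line $\langle\gamma\rangle + (\tfrac{1}{2},\dots,\tfrac{1}{2})$. This turns the geometric intersection problem into an algebraic one: I only need to find the values of $t$ for which $\vec{x}(t)$ lies in $H_{i,k}$.

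Next I would impose the defining condition of $H_{i,k}$, namely that the $i$-th coordinate of $\vec{x}(t)$ equals $k$. Writing $\gamma_i = \cos\!\left(\frac{(2i+1)\pi}{2n}\right)$ for the $i$-th entry of $\gamma$, the condition $\vec{x}(t)\cdot\vec{e_i}\trans = k$ becomes the linear equation $\tfrac{1}{2} + t\,\gamma_i = k$ in the single unknown $t$. Solving gives $t = (k-\tfrac{1}{2})/\gamma_i$, and substituting this value back into $\vec{x}(t)$ yields precisely the claimed point $(\tfrac{1}{2},\dots,\tfrac{1}{2}) + \frac{k-\tfrac{1}{2}}{\cos(\frac{(2i+1)\pi}{2n})}\,\gamma$.

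The only point that needs a moment of care — and the closest thing to an obstacle here — is the uniqueness of the intersection, which amounts to checking that the line is transverse to the hyperplane, i.e. that $\gamma_i\neq 0$. For this I would invoke the range of the index: since $0\leq i<\lfloor\tfrac{n}{2}\rfloor$ and $n$ is odd, we have $1\leq 2i+1\leq n-2$, hence $0<\frac{(2i+1)\pi}{2n}<\frac{\pi}{2}$, so $\gamma_i=\cos\!\left(\frac{(2i+1)\pi}{2n}\right)>0$. In particular $\gamma_i\neq 0$, so the scalar equation $\tfrac{1}{2}+t\gamma_i=k$ has exactly one solution. This establishes both existence and uniqueness of the intersection point simultaneously, and completes the proof.
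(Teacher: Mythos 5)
Your proof is correct and follows essentially the same route as the paper: parametrize $\Gamma_{\frac{1}{2}}$ as $(\tfrac{1}{2},\dots,\tfrac{1}{2})+t\gamma$, use $\gamma_i=\cos\bigl(\tfrac{(2i+1)\pi}{2n}\bigr)$, and solve the single coordinate equation $\tfrac{1}{2}+t\gamma_i=k$. The only difference is that you make explicit the transversality check $\gamma_i>0$ (via $0<\tfrac{(2i+1)\pi}{2n}<\tfrac{\pi}{2}$), which the paper's one-line proof leaves implicit; this is a welcome addition rather than a deviation.
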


\begin{proof}
This comes from the fact that the $i^{\text{th}}$ coordinate of vector $\gamma$ is $\gamma_i = \cos\left(\frac{(2i+1)\pi}{2n}\right)$ and that the vector from $(\tfrac{1}{2},\dots \tfrac{1}{2})$ to $H_{i,k}$ is $(k-\tfrac{1}{2})\vec{e_i}$.
\end{proof}

\begin{proposition}
\label{prop:tracking_gamma}
The sequence $(p_i)_{i\in\mathbb{N}}$ approximates the line $\Gamma$, which means that for any positive $\epsilon$ there are infinitely many points $p_i$ that are $\epsilon$-close to the line:
$$\forall \epsilon > 0,\ \exists^\infty i > 0,\ d(p_i,\Gamma)< \epsilon.$$

\end{proposition}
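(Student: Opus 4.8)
The plan is to realize the sequence $(p_j)$ explicitly as the coordinatewise rounding of the uniform motion along $\Gamma_{\frac12}$, and then reduce the statement to a recurrence property of the associated linear flow on a torus. First I would parametrise the line by $L(t) = (\tfrac12,\dots,\tfrac12) + t\gamma$ and note that, since $\gamma_i = \cos\!\left(\frac{(2i+1)\pi}{2n}\right) > 0$ for $0\le i<\floor{\frac n2}$ (the arguments lie in $(0,\tfrac\pi2)$), the motion is strictly increasing in each coordinate. By the preceding Lemma the crossing of $H_{i,k}$ occurs at parameter $t=(k-\tfrac12)/\gamma_i$, so for $t\ge0$ the number of hyperplanes of direction $\vec e_i$ met in $(0,t]$ is $\#\{k\ge1 : (k-\tfrac12)/\gamma_i\le t\}=\floor{t\gamma_i+\tfrac12}$. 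By the definition of $(p_j)$ as the lattice path recording these crossings, the $i$-th coordinate of the point reached at parameter $t$ is exactly this count; writing $j(t)$ for the total number of crossings in $(0,t]$, I obtain the clean formula $p_{j(t)}=\bigl(\floor{t\gamma_i+\tfrac12}\bigr)_{0\le i<\floor{n/2}}$, i.e. $p_{j(t)}$ is the coordinatewise nearest-integer rounding of the point $t\gamma\in\Gamma$.

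Next I would convert closeness to $\Gamma$ into simultaneous Diophantine approximation. Since $t\gamma\in\Gamma$, I can bound
\[ d(p_{j(t)},\Gamma)\ \le\ \|p_{j(t)}-t\gamma\|\ =\ \Bigl(\sum_i \|t\gamma_i\|_{\mathbb Z}^2\Bigr)^{1/2}, \]
where $\|x\|_{\mathbb Z}$ denotes the distance from $x$ to the nearest integer. Hence it suffices to exhibit, for every $\epsilon>0$, unboundedly large $t$ for which all $\|t\gamma_i\|_{\mathbb Z}$ are smaller than $\epsilon/\sqrt{\floor{n/2}}$; the corresponding indices $j(t)$ are then distinct and tend to infinity (because $t\mapsto j(t)$ is nondecreasing and unbounded as $\gamma\neq0$), yielding infinitely many $p_j$ within $\epsilon$ of $\Gamma$.

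The heart of the proof is therefore to show that the orbit $\{\,t\gamma \bmod \mathbb Z^{\floor{n/2}} : t\ge0\,\}$ returns to every neighbourhood of $0$ in the torus $\mathbb T=\mathbb R^{\floor{n/2}}/\mathbb Z^{\floor{n/2}}$ at arbitrarily large times. I would argue this through the translation flow $\phi_s(x)=x+s\gamma \bmod \mathbb Z^{\floor{n/2}}$. Its orbit of $0$ is dense in its closure $T'$, a closed connected subgroup of $\mathbb T$; by homogeneity ($\phi_s(x)=x+\phi_s(0)$) every orbit in $T'$ is then dense, so the flow is minimal on $T'$. For a minimal flow on a compact metric space every orbit is uniformly recurrent, whence the return times of the orbit of $0$ to any $\epsilon$-ball about $0$ form a syndetic, hence unbounded, set. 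Equivalently one can split into the rational case, where $\{t:t\gamma\in\mathbb Z^{\floor{n/2}}\}=\tau\mathbb Z$ gives exact returns $p_{j(k\tau)}=k\tau\gamma\in\Gamma$, and the irrational case, handled by Weyl equidistribution of $t\gamma$ in $T'$.

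I expect the main obstacle to be precisely this last step: upgrading \emph{``the orbit comes close to $0$ once''} to \emph{``the orbit comes close to $0$ at unbounded times.''} Density of a one-parameter orbit only guarantees a single near-return, and $0$ is a distinguished (measure-zero) point, so a naive Poincaré-recurrence argument does not apply to it directly; invoking minimality of the flow on the orbit closure $T'$ (or genuine periodicity in the rational case) is what makes the recurrence uniform and the return times unbounded. Everything else — the rounding identity for $p_{j(t)}$ and the distance bound — is a direct computation from the intersection-point Lemma.
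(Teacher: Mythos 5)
Your proof is correct, and it lands closest to the second of the paper's two arguments, but with a genuinely different execution that is worth comparing. The paper proves the proposition twice: once by viewing $\{p_i\}$ as a cut-and-project set and citing uniform distribution of the projected points in the window (Theorem 7.2 of Baake--Grimm), and once via a ``folk'' billiard theorem, proved by folding the line into the torus, looking at the sequence of intersection points with the boundary, and invoking the Poincar\'e Recurrence Theorem for the translation on the torus. Like the second proof, you reduce everything to recurrence of a linear flow on the torus $\mathbb{R}^{\lfloor n/2\rfloor}/\mathbb{Z}^{\lfloor n/2\rfloor}$, but you do two things differently. First, your rounding identity $p_{j(t)}=\bigl(\lfloor t\gamma_i+\tfrac12\rfloor\bigr)_i$ replaces the folded-line picture by an explicit formula, so the geometric statement becomes literally a statement of simultaneous Diophantine approximation of $\gamma$ at unbounded times; this is cleaner and self-contained (your choice $\gamma_i=\cos\bigl(\tfrac{(2i+1)\pi}{2n}\bigr)>0$ is the one the paper actually uses in its intersection lemma; the displayed definition of $\gamma$ with denominator $n$ is a typo). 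Second, and more substantively, you replace Poincar\'e recurrence by minimality of the flow on the orbit closure $T'$, so that every point --- in particular the distinguished point $0$ --- is uniformly recurrent with syndetic return times. This is not cosmetic: Poincar\'e's theorem only gives recurrence of almost every point, and $0$ is a measure-zero point, so the paper's appeal to it is, as written, incomplete and must be repaired by homogeneity (e.g.\ translations are isometries, or exactly your minimality argument); your proposal identifies this gap and closes it. What the paper's first proof buys with its heavier machinery is a stronger conclusion (equidistribution in the window rather than mere accumulation at $0$); what your route buys is an elementary, quantitative reduction plus a recurrence argument that is actually airtight at the one point where the paper is sketchy.
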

\begin{proof}
  In this proof we can see $(p_i)_{i\in\mathbb{N}}$ either simply as a billiard word or as a cut-and-project line. We will present first a proof using the fact that it is cut-and-project line. However, since the reader might not be familiar with cut-and-project sets we will also present a proof that only uses the definition of billiard words and their links with dynamical systems on the torus.
\begin{enumerate}
  \item We can see $\{p_i\ |\ i \in\mathbb{N}\}$ as a cut-and-project line of slope $\langle\gamma\rangle$ in which case we can use the Theorem 7.2 of \cite{baake2013}. We
      rephrase the theorem in the formalism of \cite{harriss2004}, which is more adapted to our specific setting.
\begin{theorem}[\cite{baake2013}]
Let $\Lambda$ be a canonical cut and project set with the cut-and-project scheme $(\mathcal{V}, \mathcal{W}, \mathcal{R}, \mathbb{Z}^n)$ where $\mathcal{V}\oplus \mathcal{W} \oplus \mathcal{R}= \mathbb{R}^n$.
Let $(x_i)_{i\in \mathbb{N}}$ be an exhaustive sequence of the points of $\Lambda$ ordered by increasing norm, \emph{i.e.}, $\Lambda = \{ x_i, i \in \mathbb{N}\}$ and $\forall i,\ \|x_i+1\| \geq \|x_i\|$.
Then the sequence $\left(\Pi_{\mathcal{V}^\bot}(x_i)\right)_{i\in \mathbb{N}}$ is uniformly distributed in the window $W$.
\end{theorem}
From the fact that $p_0$ projects to $0$ we get that $0\in W$, so that from the  uniform distribution of $\left(\Pi_{\mathcal{V}^\bot}(p_i)\right)_{i\in \mathbb{N}}$ we
can conclude that $(p_i)_{i\in\mathbb{N}}$ approximates $\langle \gamma \rangle$.

\item Now let us consider $(p_i)_{i\in\mathbb{N}}$ simply as a billiard line and use a known result on billiard lines.
  \begin{figure}[!b]
    \center
  \begin{subfigure}{0.35\textwidth}
    \includegraphics[width=\textwidth]{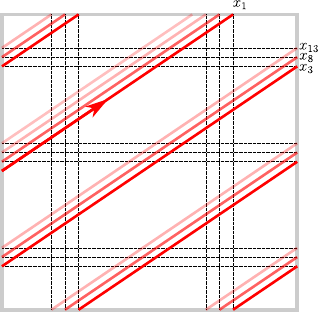}
    \caption{Folded line.}
    \label{subfig:foldedline}
  \end{subfigure}
  \qquad
  \begin{subfigure}{0.55\textwidth}
    \includegraphics[width=\textwidth]{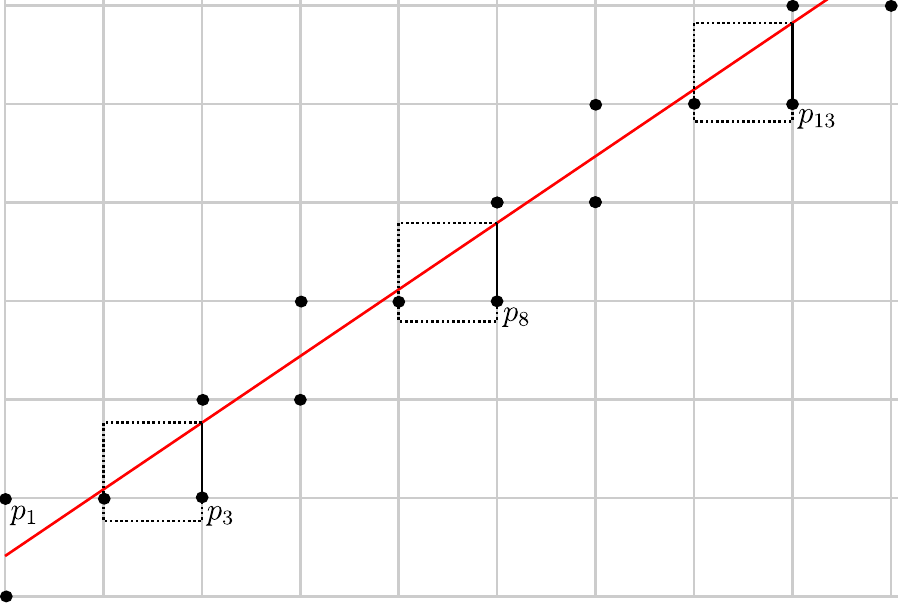}
    \caption{Billiard sequence.}
  \end{subfigure}
  \caption{Ideas for the proof of Theorem \ref{th:folk}.}
  \label{fig:billiard_line}
  \end{figure}

\begin{theorem}[folk.]
  Let $(p_i)_{i\in\mathbb{N}}$ be a billiard sequence of line $\Gamma$ and $\pi$ be the orthogonal projection onto $\Gamma^\bot$.
  Every projected point $\pi(p_i)$ is an accumulation point of the projected sequence $\left( \pi(p_i) \right)_{i \in \mathbb{N}}$.
  \label{th:folk}
\end{theorem}

\begin{proof}[Proof of Theorem \ref{th:folk}]
The idea is to first consider the folded line $\Gamma$ in the torus as seen in Figure \ref{subfig:foldedline}, and consider the sequence $(x_i)_{i\in\mathbb{N}}$ of the intersection points of $\Gamma$ with the boundary of the torus. This sequence has the property that each of its points is an accumulation point of the sequence, which can be proved using the Poincaré Recurrence Theorem applied to the translation on the torus.
Then we only need to remark that the sequence $\left(\pi(p_i)\right)$ and $\left(x_i\right)$ are
very similar: if there are two points $x_{i_1}$ and $x_{i_2}$ that are on the same hyper-facet of the torus and that are $\epsilon$-close, then the corresponding $\pi(p_{i_1})$ and $\pi(p_{i_2})$ are also $\epsilon$-close as is illustrated with points $x_3,x_8,x_{13}$ and $p_3,p_8,p_{13}$ in Figure \ref{fig:billiard_line}. The result follows.
\end{proof}
Now with $\pi(p_0) = 0$ we get that $0$ is an accumulation point of $\left( \pi(p_i) \right)_{i \in \mathbb{N}}$, which means that $(p_i)_{i\in\mathbb{N}}$ approximates $\langle\gamma\rangle$.

\end{enumerate}
\end{proof}

\begin{definition}[Candidate substitutions $\candidate{j}$]
  Now let us define the sequence of words $(u_{(j)})_{j\in \nn}$ with $u_{(j)} := pref_j(\omega)\overline{pref_j(\omega)}$ where $pref_j(\omega)$ is the prefix of length $j$ of $\omega$, \ie, $pref_j(\omega) = \omega_0\omega_1\dots \omega_{j-1}$.

  For $j\in\mathbb{N}$ we define the candidate pseudo-substitution $\candidate{j}$ as having the edgeword $u_{(j)}$.

\end{definition}
By definition the word $u_{(j)}$ is a palindrome so that the pseudo-substitution is well defined on the edges: whenever two tiles are neighbour there will be no conflict when we apply the substitution to them.
\begin{proposition}
For any $j$, the word $u_{(j)}$ is 2-almost-balanced.
\end{proposition}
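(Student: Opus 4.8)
The plan is to verify Definition~\ref{def:letters} directly: for every factor $v$ of $u_{(j)}$ and all letters $j_1<j_2$ I must show $|v|_{j_1}-|v|_{j_2}\geq -2$. Writing $j_1=2i_1+1$ and $j_2=2i_2+1$ with $i_1<i_2$, these two letters record crossings of the coordinate hyperplanes with normals $\vec{e}_{i_1}$ and $\vec{e}_{i_2}$, and the governing geometric fact is that the slope has strictly decreasing positive entries $\gamma_i=\cos\!\big(\tfrac{(2i+1)\pi}{2n}\big)$, so coordinate $i_1$ grows strictly faster than coordinate $i_2$ along $\Gamma_{\frac12}$. First I would record the bookkeeping identity that for a factor $\omega_a\cdots\omega_{b-1}$ of $\omega$ the count of letter $2i+1$ equals $(p_b)_i-(p_a)_i$, since the staircase $(p_x)$ advances by $\vec{e}_i$ exactly when $\omega$ emits $2i+1$.

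The heart of the argument is a \emph{sharp} balancedness estimate for $\omega$ itself: for any factor $w$ of $\omega$ and any $i_1<i_2$,
\[
|w|_{2i_1+1}-|w|_{2i_2+1}\ \geq\ -1 .
\]
To prove this I would pass to the billiard parametrisation. If $w$ spans the parameter interval of length $\tau\geq 0$, then $|w|_{2i+1}$ is the number of integers crossed by the affine coordinate $t\mapsto \tfrac12+t\gamma_i$ on this interval, so $\big|\,|w|_{2i+1}-\tau\gamma_i\,\big|<1$. Subtracting the two estimates gives
\[
|w|_{2i_1+1}-|w|_{2i_2+1}>\tau(\gamma_{i_1}-\gamma_{i_2})-2\ \geq\ -2 ,
\]
using $\tau\geq 0$ and $\gamma_{i_1}>\gamma_{i_2}$; since the left-hand side is an integer it is at least $-1$. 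Here I would rely on the $\tfrac12$-offset of $\Gamma_{\frac12}$ to guarantee that the line meets the hyperplanes one at a time, so that $\omega$ is genuinely a billiard word and the counts are unambiguous.

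It then remains to transfer this to $u_{(j)}=pref_j(\omega)\,\overline{pref_j(\omega)}$. A factor lying inside the first half is a factor of $\omega$, and a factor inside the second half is the mirror of a factor of $\omega$; since letter counts are invariant under mirroring, both cases inherit the bound $\geq -1\geq -2$. The only remaining case is a factor $v=v_1v_2$ straddling the palindromic centre, with $v_1$ a suffix of $pref_j(\omega)$ starting at index $a$ and $v_2$ a prefix of $\overline{pref_j(\omega)}$; using mirror-invariance of counts I would write
\[
|v|_{2i+1}=2(p_j)_i-(p_a)_i-(p_b)_i
\]
for a suitable $b\leq j$, whence
\[
|v|_{j_1}-|v|_{j_2}=\big(|v_1|_{j_1}-|v_1|_{j_2}\big)+\big(|v_2|_{j_1}-|v_2|_{j_2}\big).
\]
Both summands are factor-differences of $\omega$ (the second after mirroring), so each is $\geq -1$ and the total is $\geq -2$, as required.

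I expect the straddling case to be the main obstacle, and it is precisely the reason the sharp constant $-1$ in the balancedness estimate is indispensable: a straddling factor behaves like a concatenation of two ordinary factors, so the crude per-factor bound $>-2$ would only yield $>-4$. Pushing the per-factor estimate down to the integer bound $-1$ is what makes the two contributions add up to exactly $-2$, matching the $2$-almost-balancedness threshold; everything else is routine bookkeeping relating letter counts to the billiard staircase $(p_x)$.
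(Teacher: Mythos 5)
Your proof is correct, and its overall skeleton coincides with the paper's: both arguments first establish that $\omega$ (hence $pref_j(\omega)$ and its mirror) is 1-almost-balanced, and then split an arbitrary factor of $u_{(j)}=pref_j(\omega)\overline{pref_j(\omega)}$ at the palindromic centre into a factor of $pref_j(\omega)$ and a factor of $\overline{pref_j(\omega)}$, each contributing at least $-1$, so the total is at least $-2$. Where you genuinely diverge is in the proof of the key lemma. The paper projects $\omega$ onto each pair of letters $\{j_1,j_2\}$, observes that the projection is a \emph{binary} billiard word with frequencies $\gamma_{i_1}>\gamma_{i_2}$, invokes the classical 1-balancedness of such words, and then rules out a factor with $|v|_{j_1}-|v|_{j_2}\leq -2$ by a frequency contradiction. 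You instead work directly in the billiard parametrisation: the count of letter $2i+1$ in a factor spanning a parameter interval of length $\tau$ differs from $\tau\gamma_i$ by strictly less than $1$, so the difference of two counts is an integer strictly greater than $-2$, hence at least $-1$. Your route is more self-contained (it needs no external fact about Sturmian/billiard balancedness, only the elementary count of integers in an interval plus integrality), at the cost of some care about choosing the half-open parameter interval that isolates exactly the crossings of the factor and about the absence of simultaneous crossings (guaranteed by the $\tfrac12$-offset, as you note, and in the paper by the regularity result for $G_n(\tfrac12,\dots,\tfrac12)$). The paper's route is shorter modulo the classical balancedness fact and stays purely combinatorial on words. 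Your closing observation is also the right diagnosis: the straddling factor is not itself a factor of $\omega$, so one cannot run the interval-counting argument on it globally; the integer-sharpened per-piece bound $\geq -1$ is exactly what makes the two halves sum to the threshold $-2$.
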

\begin{proof}
Let us first remark that the infinite word $\omega$ is 1-almost-balanced, \emph{i.e.},
for any $j_1<j_2$ and any finite factor $v$ of $\omega$ we have $|v|_{j_1}-|v|_{j_2} \geq -1$.
Indeed, for $j_1<j_2$ define the projection $P_{j_1,j_2}$ from the alphabet $\{1,3,\dots n-2\}$ to the alphabet $\{a,b\}$ such that $j_1\mapsto a,\ j_2\mapsto b$ and all other letters are erased.
For any finite factor $v$ of $\omega$ we have that $P_{j_1,j_2}(v)$ is a finite factor of $P_{j_1,j_2}(\omega)$ with
$|v|_{j_1}=|P_{j_1,j_2}(v)|_a$ and $|v|_{j_2} = |P_{j_1,j_2}(v)|_b$.
The infinite word $P_{j_1,j_2}(\omega)$ is a binary billiard word over alphabet $\{a,b\}$ with frequencies $\gamma_a>\gamma_b$. So $P_{j_1,j_2}(\omega)$ is 1-balanced in the usual definition, which is that for any finite factors $v,w$ of $P_{j_1,j_2}(\omega)$ such that $|v|=|w|$ we have $\left| |v|_a - |w|_a \right| \leq 1$ and $\left| |v|_b - |w|_b\right| \leq 1$. So if there existed a finite factor $v$ of $P_{j_1,j_2}(\omega)$ such that $|v|_a - |v|_b \leq -2$, then for all factors $w$ of $P_{j_1,j_2}(\omega)$ of the same length we have
$|w|_a-|w|_b \leq 0$, which is impossible with $\gamma_a > \gamma_b$. So $|u|_{j_1}-|u|_{j_2} = |P_{j_1,j_2}(u)|_a- |P_{j_1,j_2}(u)|_b \geq -1$, and overall $\omega$ is 1-almost-balanced.

Let us now recall that $u_{(j)}=pref_j(\omega))\overline{pref_j(\omega)}$. Any factor $v$ of $u_{(j)}$ can be decomposed as $v=v'v''$ with $v'$ a (possibly empty) factor of $pref_j(\omega)$ and $v''$ a (possibly empty) factor of $\overline{pref_j(\omega)}$. So $|v|_{j_1}-|v|_{j_2} = \left(|v'|_{j_1} - |v'|_{j_2}\right) + \left(|v''|_{j_1} - |v''|_{j_2}\right) \geq -2$ because $pref_j(\omega)$ and $\overline{pref_j(\omega)}$ are 1-almost-balanced. Thus $u_{(j)}$ is 2-almost-balanced.
\end{proof}

\begin{proposition}
  There exist $j \in \mathbb{N}$ such that $\candidate{j}$ is a planar substitution of slope $\e{0}$ and such that there exist a discrete plane of slope $\e{0}$ legal for $\candidate{j}$ which has global $2n$-fold rotational symmetry.
\end{proposition}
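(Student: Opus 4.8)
The plan is to assemble the three ingredients developed above---the eigenvalue/planarity correspondence, the tracking of the line $\langle\gamma\rangle$ by the billiard points, and the tileability criterion---and then to produce the symmetric tiling by the same fixed-point construction used for $\sigma_5$.

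First I would pin down an index $j$ that makes $\candidate{j}$ planar. The key observation is that the abelianized edgeword of a candidate is exactly twice the corresponding billiard point: since $u_{(j)} = pref_j(\omega)\overline{pref_j(\omega)}$ and the mirror image has the same letter counts, $[u_{(j)}] = 2[pref_j(\omega)] = 2p_j$. By Proposition~\ref{prop:tracking_gamma} the points $p_j$ come arbitrarily close to the line $\Gamma = \langle\gamma\rangle$ infinitely often, so for the $\epsilon$ furnished by Lemma~\ref{lem:approxgamma} there are infinitely many $j$ with $d([u_{(j)}],\Gamma) = 2\,d(p_j,\Gamma) < \epsilon$ (using that $\Gamma$ passes through the origin, so orthogonal projection is linear). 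Because every coordinate $\gamma_i = \cos\tfrac{(2i+1)\pi}{2n}$ is positive for $0\le i<\lfloor n/2\rfloor$, the billiard points escape to infinity in the positive direction of $\Gamma$, so the scalar $x$ parametrizing the orthogonal projection of $[u_{(j)}]$ onto $\Gamma$ exceeds $2$ once $j$ is large enough, and Lemma~\ref{lem:approxgamma} applies with this $x$. It yields $|\lambda_0|>1$ and $|\lambda_i|<1$ for $1\le i<\lfloor n/2\rfloor$; combined with $\lambda_\Delta = 0$ from Lemma~\ref{lemma:planarrosa_eigenvalues}, all hypotheses of Proposition~\ref{prop:eigenvalue_planarity} hold. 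I fix one such $j$, to serve for the whole proof.

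Next I would establish that $\candidate{j}$ is a genuine, well-defined substitution, i.e. that its metatiles are tileable, so that Proposition~\ref{prop:eigenvalue_planarity} actually has legal tilings to act on. For this I invoke Proposition~\ref{prop:cs_tileability}: the word $u_{(j)}$ is $2$-almost-balanced (already proved), so it remains only to verify the counting inequality~(\ref{eq:k1counting}). This is the main obstacle of the proof. The natural route is the De Bruijn multigrid connection announced after Definition~\ref{def:seq_p_and_omega}: the billiard sequence $\omega$ is precisely the sequence of rhombuses read along a ray of the multigrid tiling $P_n(\tfrac12,\dots,\tfrac12)$, and the regularity of that tiling (Theorem~\ref{thm:Gn_regular}, exploited in Proposition~\ref{prop:cone_tileable}) forces the prefix counting functions to satisfy the required monotone relation. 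Translating the geometric ``no two chains of the same edge type cross'' condition into the inequality relating $f_{|j_2-2|}^{-1}\circ f_{j_2}$ and $f_{|j_1-2|}^{-1}\circ f_{j_1}$ is where the real work lies; once it is done, every metatile of $\candidate{j}$ is tileable, and therefore $\candidate{j}$ is planar of slope $\e{0}$.

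Finally I would build the globally symmetric discrete plane by mimicking the construction of $\mathcal{T}^\infty$ for $\sigma_5$. Let $S_n$ be the star consisting of a corolla of $2n$ narrow rhombuses around a single vertex placed at the origin; filling the full angle $2n\cdot\tfrac{\pi}{n}=2\pi$, it is invariant under the rotation of angle $\tfrac{\pi}{n}$. Since $\candidate{j}$ is cyclic (the image of every edge is the same up to rotation), it commutes with this rotation, and the narrow corner of each metatile reproduces a sector of the star, so that $S_n$ is legal and appears at the centre of $\candidate{j}(S_n)$; by immediate recursion $\candidate{j}^k(S_n)$ sits at the centre of $\candidate{j}^{k+1}(S_n)$. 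The limit $\mathcal{T}^\infty := \lim_{k\to\infty}\candidate{j}^k(S_n)$ is then a well-defined tiling, legal for $\candidate{j}$ and invariant under the rotation of angle $\tfrac{\pi}{n}$ about the origin, hence it has global $2n$-fold rotational symmetry. As it is legal for the planar substitution $\candidate{j}$, Proposition~\ref{prop:eigenvalue_planarity} shows it is a discrete plane of slope $\e{0}$, which completes the proof. The principal difficulty, as noted, is the counting inequality of the tileability step; a secondary point deserving care is verifying that the star $S_n$ genuinely reappears at the centre of its image for the precise corner geometry produced by $\candidate{j}$.
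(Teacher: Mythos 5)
Your architecture is the same as the paper's (planarity via Lemma~\ref{lem:approxgamma}, Proposition~\ref{prop:tracking_gamma} and Proposition~\ref{prop:eigenvalue_planarity}; tileability via Proposition~\ref{prop:cs_tileability}; the symmetric tiling via the star fixed-point construction), but the tileability step as you sketch it has a genuine gap. The De Bruijn multigrid gives tileability of the infinite cone with edgeword $\omega$ (Proposition~\ref{prop:cone_tileable}), and hence the counting inequality for the \emph{infinite} word $\omega$. However, the inequality (\ref{eq:k1counting}) that Proposition~\ref{prop:cs_tileability} requires must hold for the counting functions of the \emph{finite palindrome} $u_{(j)}=v\bar v$, and these agree with those of $\omega$ only as long as both the position $k_1$ and the value of the inverse counting function $f^{-1}$ stay inside the common prefix $v$. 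This is why the paper's Lemma~\ref{lem:biglemma2} splits the verification in two: the cone argument covers $k_1\leq K$ for any fixed $K$ once $j$ is large, and a separate quantitative argument --- using the approximation $|f_a(k)-k\tilde{\gamma}_a|<\delta$ with $\delta=2\sqrt{n}$ and the strict decrease of the ratios $\tilde{\gamma}_{j}/\tilde{\gamma}_{|j-2|}$ --- produces a uniform threshold $K$ beyond which (\ref{eq:k1counting}) holds for all $k_1>K$. Your plan contains only the first half; executed literally, ``translate cone tileability into the inequality'' does not yield the inequality at positions in the mirrored half of $u_{(j)}$ or where $f^{-1}$ overshoots the prefix, so the conclusion ``every metatile of $\candidate{j}$ is tileable'' does not follow from what you propose.

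Two secondary issues. First, the order of quantifiers: you fix a planar $j$ \emph{before} the tileability step, but tileability (and the star/primitivity argument of Lemma~\ref{lem:biglemma3}) is only established for all sufficiently large $j$, while planarity holds only for infinitely many $j$ (not a tail). The correct conclusion is obtained by first extracting the thresholds $N$ from the tileability and symmetry arguments and then choosing, among the infinitely many planar indices, one with $j>N$; this is harmless to repair but your write-up has it backwards. Second, legality of the star: your justification (a sector of $S_n$ sits in the narrow corner of each metatile, so $S_n$ reappears at the centre of $\candidate{j}(S_n)$) establishes \emph{stability} of $S_n$ under the substitution, not \emph{legality}. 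For the limit tiling to be legal --- which is what Proposition~\ref{prop:eigenvalue_planarity} needs --- $S_n$ must appear in some $\candidate{j}^{\,k}(t)$ for a prototile $t$; the paper proves this by iterating the substitution $n$ times, showing that the corners of meta-rhombi accumulate narrow rhombuses so that $S_n$ occurs inside $\candidate{j}^{\,n}(r_0)$. Some argument of this kind must be supplied; for general odd $n$ it does not follow from the corner observation alone.
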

Theorem \ref{th:main} follows from this proposition. The proof of the proposition is given in three lemmas:
We prove the planarity in Lemma~\ref{lem:biglemma1} and the tileability in Lemma~\ref{lem:biglemma2}.
The third Lemma~\ref{lem:biglemma3} shows the primitivity and the rotational symmetry when the substitution
is known to be well-defined.

\begin{lemma}
  \label{lem:biglemma1}
  There exist infinitely many $j\in\mathbb{N}$ such that the candidate substitution $\candidate{j}$ is planar of slope $\e{0}$, \ie, tilings legal for $\candidate{j}$ are discrete planes of slope $\e{0}$.
\end{lemma}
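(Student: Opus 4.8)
The plan is to reduce the statement to the two workhorse results already at our disposal: the eigenvalue–planarity criterion of Proposition~\ref{prop:eigenvalue_planarity} and the approximation estimate of Lemma~\ref{lem:approxgamma}, the latter fed by the tracking property of Proposition~\ref{prop:tracking_gamma}. The bridge between them is a computation of the abelianized edgeword of the candidate substitution. First I would compute $[u_{(j)}]$: since $u_{(j)} = pref_j(\omega)\overline{pref_j(\omega)}$ and reversing a word does not change its letter counts, $[u_{(j)}] = 2[pref_j(\omega)]$. By the very definition of the sequence $(p_i)$ (Definition~\ref{def:seq_p_and_omega}), the $i$-th coordinate of $p_j$ counts the occurrences of the letter $2i+1$ in $\omega_0\dots\omega_{j-1}$, so $[pref_j(\omega)] = p_j$ and hence $[u_{(j)}] = 2p_j$.

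Next I would set up the reduction. By Lemma~\ref{lemma:planarrosa_eigenvalues} we always have $\lambda_\Delta = 0$, so $|\lambda_\Delta| < 1$ holds for free. Thus, by Proposition~\ref{prop:eigenvalue_planarity}, to prove that $\candidate{j}$ is planar of slope $\e{0}$ it suffices to show $|\lambda_0| > 1$ and $|\lambda_i| < 1$ for $1 \le i < \lfloor n/2 \rfloor$; and by Lemma~\ref{lem:approxgamma} this in turn follows once $[u_{(j)}] = 2p_j$ is $\epsilon$-close to some point $\frac{x\gamma}{2\|\gamma\|^2}$ with $x \ge 2$, where $\epsilon$ is the constant furnished by that lemma. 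So the whole task becomes: produce infinitely many $j$ for which $2p_j$ lies within $\epsilon$ of the ray $\{\frac{x\gamma}{2\|\gamma\|^2} : x \ge 2\}$.

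For this I would invoke Proposition~\ref{prop:tracking_gamma}: there are infinitely many indices $j$ with $d(p_j, \Gamma) < \epsilon/2$, and since $\Gamma = \langle\gamma\rangle$ passes through the origin and scaling is linear, these same indices satisfy $d(2p_j, \Gamma) < \epsilon$. It then remains to check that for large such $j$ the orthogonal projection of $2p_j$ onto $\Gamma$ lands in the admissible part $x \ge 2$ of the line; a direct computation shows the relevant parameter is $x = 4(p_j\cdot\gamma)$. Because the billiard steps $p_{j+1}-p_j = \vec{e}_i$ are all in positive directions and $\gamma$ lies in the open positive orthant, the sequence travels to infinity in the $+\gamma$ direction and $p_j\cdot\gamma \to +\infty$; hence the infinitely many close approaches occur at indices where $p_j\cdot\gamma$ is arbitrarily large, giving $x \ge 2$. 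For each such $j$, $2p_j$ is then $\epsilon$-close to $\frac{x\gamma}{2\|\gamma\|^2}$, so Lemma~\ref{lem:approxgamma} and Proposition~\ref{prop:eigenvalue_planarity} yield planarity of $\candidate{j}$, and there are infinitely many such $j$.

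The main obstacle I anticipate is precisely this ``$x\ge 2$'' bookkeeping: Lemma~\ref{lem:approxgamma} controls the eigenvalues only on the far part of the line, so mere proximity of $p_j$ to $\Gamma$ is insufficient — one must also guarantee that the close-approach indices carry $p_j$ far out along the positive $\gamma$ direction. Establishing $p_j\cdot\gamma \to +\infty$, which hinges on the positivity of $\gamma$ together with the forward monotonicity of the billiard, is the one quantitative point that needs care; everything else is a direct chaining of the cited results.
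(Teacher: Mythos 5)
Your proof is correct and follows essentially the same route as the paper's: compute $[u_{(j)}]=2[pref_j(\omega)]$, combine Lemma~\ref{lem:approxgamma} with Proposition~\ref{prop:eigenvalue_planarity} into an $\epsilon$-criterion for planarity, and invoke Proposition~\ref{prop:tracking_gamma} to produce infinitely many indices within $\epsilon$ of $\langle\gamma\rangle$. You are in fact more explicit than the paper on the one delicate point: the paper dismisses the requirement $x\geq 2$ in Lemma~\ref{lem:approxgamma} with the phrase ``for large enough $j$'', whereas you justify it by showing that $p_j\cdot\gamma\to\infty$ along the close-approach indices.
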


\begin{proof}[Proof of Lemma \ref{lem:biglemma1}]

  By construction we have $$ \forall j,\ [u_{(j)}] = 2[pref_j(w)] = 2\cdot p_{j+1},$$
  and therefore $$d([u_{(j)}],\langle\gamma\rangle)=2d(p_{j+1},\langle\gamma\rangle).$$
  By Lemma \ref{lem:approxgamma} and Proposition \ref{prop:eigenvalue_planarity} there exists $\epsilon>0$ such that the implication
  $$d([u_{(j)}],\langle\gamma\rangle)<\epsilon \implies \candidate{j} \text{ planar}$$
  holds for large enough $j$.
  With Proposition \ref{prop:tracking_gamma} we obtain the existence of infinitely many $j\in\mathbb{N}$ such that $\ d(p_{j+1},\langle\gamma\rangle)<\tfrac{\epsilon}{2}$, so that $d([u_{(j)}],\langle\gamma\rangle)<\epsilon$, \ie, $\candidate{j}$ is planar.
\end{proof}

\begin{lemma}
  \label{lem:biglemma2}
  There exists $N\in\mathbb{N}$ such that for all $j>N$, the candidate pseudo-substitution
  $\candidate{j}$ exteds to a well-defined substitution, meaning that the metatiles defined by $\candidate{j}$ are tileable.
\end{lemma}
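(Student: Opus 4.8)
The plan is to obtain tileability from the combinatorial criterion of Proposition~\ref{prop:cs_tileability}. That proposition asks two things of the edgeword $u_{(j)}$: that it be $2$-almost-balanced, and that the counting inequality~(\ref{eq:k1counting}) hold at every position carrying the smaller of two letters. The first hypothesis has already been verified for all $j$, so the whole of Lemma~\ref{lem:biglemma2} reduces to establishing~(\ref{eq:k1counting}): for every odd $j_1<j_2$ and every $k_1$ with $u_{k_1-1}=j_1$,
\[ f_{|j_2-2|}^{-1}\circ f_{j_2}(k_1) < f_{|j_1-2|}^{-1}\circ f_{j_1}(k_1). \]
Writing $j_1=2a+1$ and $j_2=2b+1$ with $a<b$, the letters $|j_1-2|$ and $|j_2-2|$ have index $a-1$ and $b-1$ (with the convention that for $a=0$ the letter $|j_1-2|=1$ is again letter $a=0$), all legitimate rhombus types since $j_2\leq n-2$.

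First I would treat positions $k_1$ in the first half of the palindrome, where $u_0\dots u_{k_1-1}$ is a genuine prefix of the billiard word $\omega$ of Definition~\ref{def:seq_p_and_omega}; since the inverses below land at positions $\leq k_1$, the whole computation stays within this prefix. Here the crossings are explicit: the $m$-th occurrence of the letter $2i+1$ is the $m$-th crossing of a hyperplane $H_{i,\cdot}$ at billiard parameter $\tfrac{m-1/2}{c_i}$, where $c_i:=\cos\!\big(\tfrac{(2i+1)\pi}{2n}\big)$. As the letters of $\omega$ are read in order of increasing parameter, word-positions and crossing parameters are order-isomorphic, and $f_\ell^{-1}(y)$ is exactly the word-position of the $y$-th crossing of the matching hyperplane. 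Let $t$ be the parameter of the $k_1$-th crossing; since $u_{k_1-1}=2a+1$, this is the $y_1$-th $H_a$-crossing with $y_1=f_{j_1}(k_1)$, so $y_1-\tfrac12=t\,c_a$. The right-hand side is then the word-position of parameter $\tfrac{y_1-1/2}{c_{a-1}}=t\rho_a$, where $\rho_i:=c_i/c_{i-1}$ and $\rho_0:=1$. With $y_2=f_{j_2}(k_1)$, the $y_2$-th $H_b$-crossing occurs at parameter $\leq t$, so $y_2-\tfrac12\leq t\,c_b$, and the left-hand side is the word-position of parameter $\tfrac{y_2-1/2}{c_{b-1}}\leq t\rho_b$. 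Hence it suffices to prove $\rho_b<\rho_a$ for $a<b$, \ie that $(\rho_i)$ is strictly decreasing. This arithmetic core follows from
\[ \rho_i=\frac{\cos\!\big(\tfrac{(2i+1)\pi}{2n}\big)}{\cos\!\big(\tfrac{(2i-1)\pi}{2n}\big)}=\cos\tfrac{\pi}{n}-\sin\tfrac{\pi}{n}\,\tan\!\Big(\tfrac{(2i-1)\pi}{2n}\Big) \]
together with the monotonicity of $\tan$ on $(0,\tfrac{\pi}{2})$; in particular $\rho_i<1$ for $i\geq1$, which also dispatches the case $a=0$.

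The remaining and genuinely delicate point is to extend~(\ref{eq:k1counting}) to positions $k_1$ in the second half of the palindrome $u_{(j)}=pref_j(\omega)\,\overline{pref_j(\omega)}$, where the counting functions accumulate from both halves and the clean billiard parametrisation no longer applies directly. I expect this to be the main obstacle. The plan is to fold such a position back to its mirror in the first half and to control the resulting cross-terms between the two copies using precisely the $2$-almost-balancedness already established, in the spirit of Lemma~\ref{lemma:counting_opposite} and the opposite-edge inequalities of Proposition~\ref{prop:cornersandtileability}. This is also where the threshold $N$ enters: the first-half estimate is uniform, but the symmetric bookkeeping needs each rhombus type to occur with the correct strict ordering $f_1(m)>f_3(m)>\dots>f_{n-2}(m)>0$, which holds only once the prefix is long enough, \ie for $j>N$. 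An alternative, more geometric route, which the choice of the shifted line $\Gamma_{\frac12}$ is designed to enable, is to identify the interior of the metatile with a sector of the regular De~Bruijn multigrid tiling $P_n(\tfrac12,\dots,\tfrac12)$ through Theorem~\ref{thm:Gn_regular} and Proposition~\ref{prop:cone_tileable}, so that tileability follows at once from the existence of that tiling, bypassing the per-position verification altogether.
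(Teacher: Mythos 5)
Your first-half argument is correct and elegant: the exact billiard-parameter computation, together with the strict decrease of the ratios $\rho_i = c_i/c_{i-1}$, settles (\ref{eq:k1counting}) at every position inside $pref_j(\omega)$. This is the same arithmetic core as the paper's (the paper phrases it as the strict decrease of $\tilde{\gamma}_{j}/\tilde{\gamma}_{|j-2|}$), but you use exact crossing times where the paper uses an approximate frequency estimate, and your computation subsumes what the paper extracts from the De Bruijn cone for small positions $k_1\leq K$.

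The gap is exactly where you flag it: positions in the second half, and neither of your two routes closes it. For the folding route, writing $k_1 = 2j-x$ and using palindromicity, inequality (\ref{eq:k1counting}) becomes $f_{|j_2-2|}^{-1}\left(f_{j_2}(x)+1+2D_{j_2}\right) > f_{|j_1-2|}^{-1}\left(f_{j_1}(x)+1+2D_{j_1}\right)$, where $D_\ell := f_{|\ell-2|}(j)-f_\ell(j)$ is the excess of the more frequent letter over the whole prefix. These excesses grow linearly in $j$; they are precisely the cross-terms your plan must control, and $2$-almost-balancedness cannot do it: Definition \ref{def:letters} only bounds $|w|_{\ell_1}-|w|_{\ell_2}$ from below by $-2$ and gives no quantitative comparison between $D_{j_1}$ and $D_{j_2}$. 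What is actually needed is a discrepancy estimate, $|f_\ell(k)-k\tilde{\gamma}_\ell|<\delta$ uniformly in $j$ and $k$ over the whole palindromic edgeword, combined with your monotonicity of the $\rho_i$; that is the paper's argument for $k_1>K$, and it covers the second half because every second-half position satisfies $k_1>j>N$, so a threshold on $k_1$ alone suffices. (Also note the folded inverse can itself land in the second half, namely when $D_{j_2}+f_{j_2}(x)+1>f_{j_2}(j)$, forcing further folding, so the bookkeeping is strictly worse than "in the spirit of" Lemma \ref{lemma:counting_opposite}.) The geometric bypass fails at the outset: the metatile is a finite parallelogram whose sides carry the palindromized word $v\bar{v}$, and no such region sits inside $P_n(\tfrac{1}{2},\dots,\tfrac{1}{2})$; Theorem \ref{thm:Gn_regular} only supplies infinite cones whose edgewords are $\omega$, which match the metatile boundary on the prefix $v$ near each corner and nowhere else. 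That is exactly why the paper uses the De Bruijn tiling only for the corner regions and still needs a separate, uniform frequency argument for all remaining positions — the ingredient missing from your proposal.
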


\begin{proof}[Proof of Lemma \ref{lem:biglemma2}]
  For a $\candidate{j}$ to be well defined we need the metatiles of the substitution to be tileable.
  Let us recall that the edgeword of $\candidate{j}$ is $u_j=v \bar{v}$ with $v=pref_j(\omega)$.
  This means that in every corner of the metatiles there is a portion of an infinite cone with the word $v$ on both sides, as in Figure \ref{fig:metatile_schema}.
  \begin{figure}
  \center \includegraphics[width=8cm]{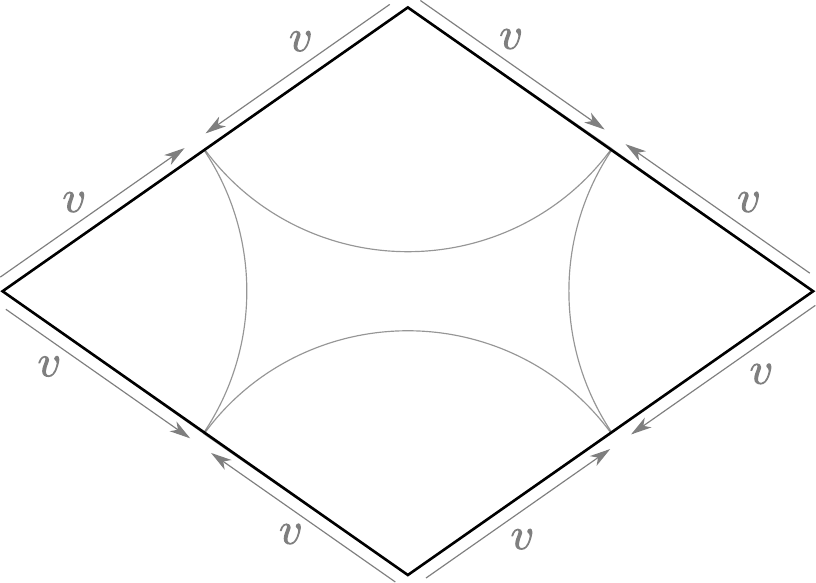}
  \caption{Sketch of the metatile, since the edgeword is a palindrome we can see that around each corner we have the same start of edgeword $v$ on both sides.}
  \label{fig:metatile_schema}
  \end{figure}
  We will first consider tileability around the corners and then we will consider tileability in the middle of the metatile. As
  $v$ is a prefix of $\omega$, let us consider the infinite cone with edgeword $\omega$.
  \begin{proposition}
  \label{prop:cone_tileable}
  For any $0<k\leq n$ the infinite cone of angle $\tfrac{k\pi}{n}$ with the edgeword $\omega$ on both sides is tileable.
  \end{proposition}
  This is actually a corollary to a result on De Bruijn multigrid tilings, let us first define the multigrid construction and then state the result.
  Let $\zeta:=\exp^{\imag\tfrac{2\pi}{n}}$, and for $\gamma_i\in\mathbb{R}$ define
  $H(\zeta^i,\gamma_i) := \left\{ z \in \mathbb{C}\ |\ \text{Re}\left(z\cdot \bar{\zeta^i}\right)+\gamma_i \in \mathbb{Z}\right\}$, called the grid of orientation $\zeta^i$ and offset $\gamma_i$. Set $H(\zeta^i,\gamma_i)$ consists of equidistant parallel lines orthogonal to $\zeta^i$.
  Let the  \emph{multigrid} of order $n$ and offset $(\gamma_0,\gamma_1,\dots , \gamma_{n-1})$ be $$G_n(\gamma_0,\gamma_1,\dots , \gamma_{n-1}) := \bigcup\limits_{i=0}^{n-1} H(\zeta^i,\gamma_i).$$
      The multigrid $G_n(\gamma_0,\gamma_1,\dots , \gamma_{n-1})$ is called \emph{regular} when no more than two lines intersect in any point, \ie, if $H(\zeta^i,\gamma_i) \cap H(\zeta^j,\gamma_j) \cap H(\zeta^k,\gamma_k) = \emptyset$ for any distinct $0\leq i,j,k < n$. Otherwise the multigrid is called \emph{singular}.
      Every multigrid is dual to an edge-to-edge polygonal tiling of the plane by the following dualization process \cite{debruijn1981, baake2013}.
To define this dualization we need the function
$f$ from $\mathbb{C}$ to $\mathbb{C}$ defined by

$$ f(z):= \sum\limits_{i=0}^{n-1}\left\lceil \text{Re}\left(z\cdot \bar{\zeta^i}\right)+\gamma_i \right\rceil \zeta^i. $$
We can remark that $f(z)$ is constant on the interior of every cell of the multigrid, so it associates to each cell a single vertex in $\mathbb{C}$.
   These vertices form the vertex set of the dual tiling where two vertices are linked by an edge when the corresponding cells of the multigrid are adjacent along an edge.
   This dual tiling denoted by $P_n(\gamma_0,\dots \gamma_{n-1})$ is a rhombus tiling whenever the multigrid is regular.\\
  \begin{theorem}[\cite{lutfalla2021}]
  \label{thm:Gn_regular}
   Let $n\geq 3$ be an odd integer.
   Let $\gamma_0,\gamma_1,\dots \gamma_{n-1}$ be non-zero rational numbers.
   The multigrid $G_n(\gamma_0,\gamma_1,\dots \gamma_{n-1})$ is regular.
   In particular, the De Bruijn multigrid $G_n(\tfrac{1}{2},\tfrac{1}{2},\dots,\tfrac{1}{2})$ is regular so the dual tiling $P_n(\tfrac{1}{2},\tfrac{1}{2},\dots,\tfrac{1}{2})$ is a rhombus tiling.
  \end{theorem}

  Let us now use Theorem \ref{thm:Gn_regular} to prove Proposition \ref{prop:cone_tileable} which in turn is used to prove Lemma \ref{lem:biglemma2}.
  Let us remark that the tiling $P_n(\tfrac{1}{2},\tfrac{1}{2},\dots,\tfrac{1}{2})$ contains a cone with edgewords $\omega$. In fact, there are $2n$ such cones around the origin. Indeed,
  the vertical half line in $G_n(\tfrac{1}{2},\tfrac{1}{2},\dots,\tfrac{1}{2})$  starting at 0 is a succession of intersection points of type $H(\zeta^i,\tfrac{1}{2})\cap H(\zeta^{n-i}, \tfrac{1}{2})$ with $0<i<n$. No other type of intersection appears. In the dual tiling $P_n(\tfrac{1}{2},\tfrac{1}{2},\dots,\tfrac{1}{2})$ it is a succession of rhombuses joined by their extremal vertices and with a common diagonal direction. On this vertical half line, crossings of lines corresponding to rhombuses with angles $\pi\tfrac{2i+1}{n}$ appear at positions $\tfrac{2k-1}{2\cos\pi(2i+1/2n)}$, see Figure \ref{fig:debruijn_vertical}. So their ordering is as follows: if until point $X$ there have been $n_i$ rhombuses of type $2i+1$ for each $i$ then the next rhombus is of index $j= \operatorname{argmin} \frac{n_j+\frac{1}{2}}{\cos\pi\frac{2j+1}{2n}}$.

\begin{figure}
\center
\includegraphics[width=6cm]{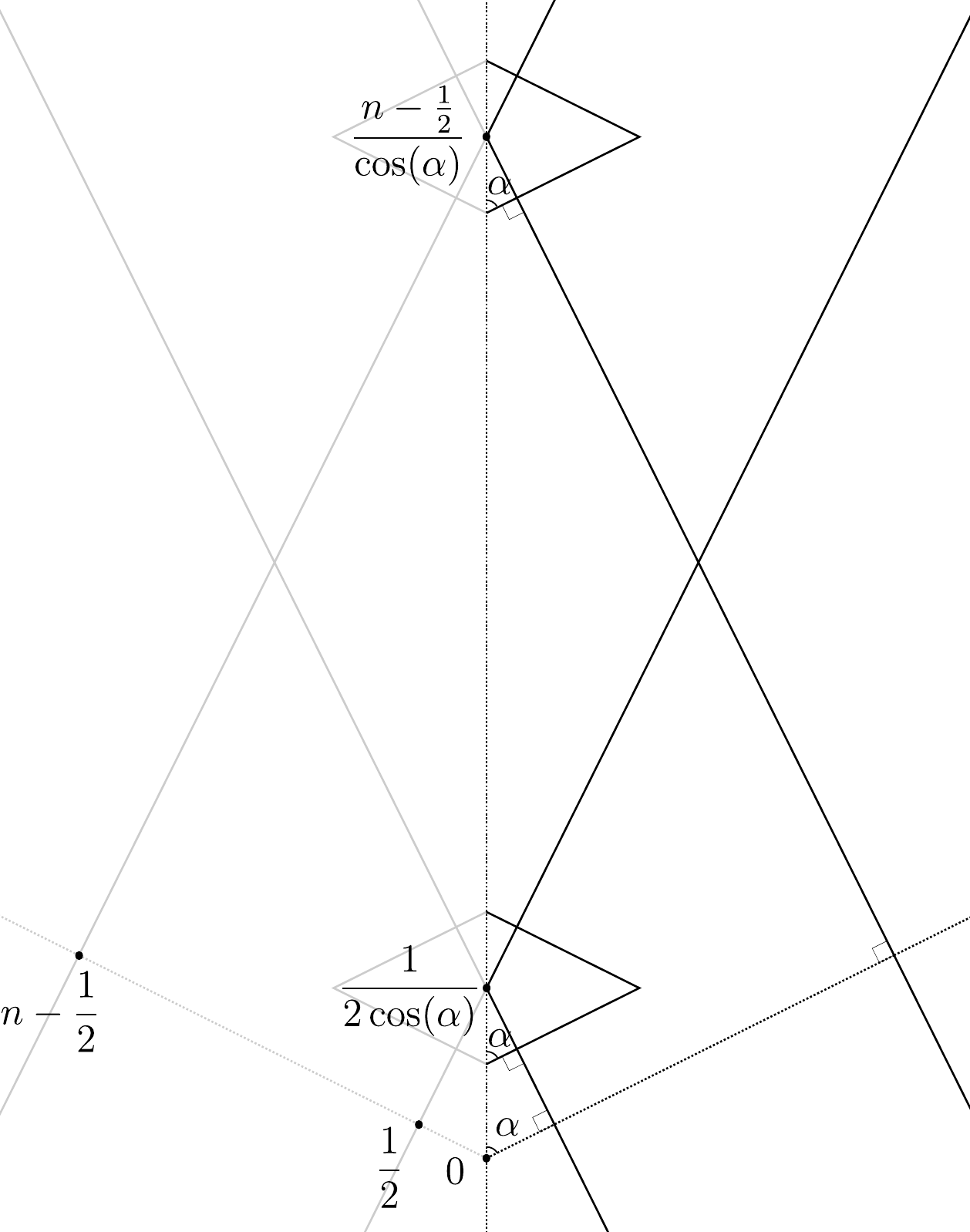}
\caption{Positions of the crossing points of type $\alpha$ on the vertical line in the DeBruijn multigrid $G_n(\frac{1}{2})$.}
\label{fig:debruijn_vertical}
\end{figure}

with the edgeword $u_{(j)}=v\bar{v}$, 

This builds exactly the same sequence as in Definition \ref{def:seq_p_and_omega}. So the sequence $\omega$ of rhombuses is on the vertical half-line. By the $2n$-fold  rotational symmetry of the grid (and the dual tiling) a cone of angle $\tfrac{\pi}{n}$ with the sequence $\omega$ of rhombuses  on both edges repeats $2n$ times around the origin in the dual tiling $P_n(\tfrac{1}{2},\tfrac{1}{2},\dots,\tfrac{1}{2})$.

Let us now use Proposition \ref{prop:cone_tileable} to complete the proof of Lemma \ref{lem:biglemma2}.
The tileability of the infinite cone with the edgeword $\omega$ implies that the 
inequality~(\ref{eq:adjacent_edge}) in Proposition~\ref{prop:tileability01} holds for the cone. (The proposition is stated for tileable 
bounded regions but it holds just as well for the infinite cone.) As (\ref{eq:adjacent_edge}) clearly implies 
the condition (\ref{eq:k1counting}) of Proposition \ref{prop:cs_tileability}, we have that for any fixed $K$ and all sufficiently large $j$ the 
pseudo-substitution $\candidate{j}$ satisfies
inequality (\ref{eq:k1counting})  for $k_1\leq K$. This is because the edgeword of $\candidate{j}$ has a common prefix of length $j$ with the edgeword
$\omega$ of the cone.
To get the tileability of the metatiles we now need to find a constant $K$ such that (\ref{eq:k1counting}) is satisfied also for
$k_1>K$.

Let $0<j_1<j_2<n$ be odd integers, and let us define $$g_{j_1,j_2}(k) := f_{|j_1-2|}^{-1}\circ f_{j_1}(k) -  f_{|j_2-2|}^{-1}\circ f_{j_2}(k).$$
Since the frequency of appearance of $j_1$ in the word $\omega$ (and in the word $u$) is strictly greater than the frequency of appearance of $j_2$, the idea is that $g_{j_1,j_2}(k)$ has a general increasing trend and (if $u$ is long enough) there is $K$ such that for any $k>K$ holds $g_{j_1,j_2}(k)>0$. This means the inequality (\ref{eq:k1counting}) is satisfied.

Let us recall that with $\gamma= \left( \cos(\frac{(2i+1)\pi}{2n})\right)_{0\leq i< \left\lfloor\frac{n}{2}\right\rfloor}$ we have $f_{2i+1}(k) \approx k\cdot \frac{\gamma_i}{\| \gamma \|}$ by the construction of the word $\omega$. For simplicity we define $\tilde{\gamma}_{2i+1} = \frac{\gamma_i}{\| \gamma \|}$, so that we have $f_j(k) \approx k\cdot \tilde{\gamma}_j$ for all odd $j$. This approximation is actually quite good in the sense that there exists $\delta$ dependent only on the dimension $n$ -- for example $\delta=2\sqrt{n}$ works -- such that for each $j$ and $k$ holds $|f_{j}(k)-k\cdot\tilde{\gamma}_j|< \delta$. This means that
$$|f_{|j_1-2|}^{-1}\circ f_{j_1}(k) - k\tfrac{\tilde{\gamma}_{j_1}}{\tilde{\gamma}_{|j_1-2|}}| < \delta + \delta/\tilde{\gamma}_{|j_1-2|}$$ so that $$|g_{j_1,j_2}(k)-k(\tfrac{\tilde{\gamma}_{j_1}}{\tilde{\gamma}_{|j_1-2}|}-\tfrac{\tilde{\gamma}_{j_2}}{\tilde{\gamma}_{|j_2-2|}})|<2\delta + \delta/\tilde{\gamma}_{|j_1-2|} + \delta/\tilde{\gamma}_{|j_2-2|}.$$
An easy calculation shows that the sequence $\gamma_1/\gamma_1, \gamma_3/\gamma_1, \gamma_5/\gamma_3, \dots$ is strictly decreasing so that $\tfrac{\tilde{\gamma}_{j_1}}{\tilde{\gamma}_{|j_1-2}|}-\tfrac{\tilde{\gamma}_{j_2}}{\tilde{\gamma}_{|j_2-2|}}$ is positive when $j_1<j_2$.
 Thus, for $$k> \frac{2\delta + \delta/\tilde{\gamma}_{|j_1-2|} + \delta/\tilde{\gamma}_{|j_2-2|}}{\tfrac{\tilde{\gamma}_{j_1}}{\tilde{\gamma}_{|j_1-2|}}-\tfrac{\tilde{\gamma}_{j_2}}{\tilde{\gamma}_{|j_2-2|}}}$$ we have that $g_{j_1,j_2}(k)>0$.
Now take
$$K:=\max\limits_{j_1<j_2} \frac{2\delta + \delta/\tilde{\gamma}_{|j_1-2|} + \delta/\tilde{\gamma}_{|j_2-2|}}{\tfrac{\tilde{\gamma}_{j_1}}{\tilde{\gamma}_{|j_1-2|}}-\tfrac{\tilde{\gamma}_{j_2}}{\tilde{\gamma}_{|j_2-2|}}}.$$
For any $k>K$ we have $g_{j_1,j_2}(k)>0$, which means that the inequality (\ref{eq:k1counting}) is satisfied.
So $\sigma_{(j)}$ is well-defined for all sufficiently large $j$: its metatiles are tileable.

  \end{proof}

\begin{lemma}
  \label{lem:biglemma3}
  There exists $N \in \mathbb{N}$ such that for all $j>N$, if $\candidate{j}$ is well defined then $\candidate{j}$ is a primitive substitution and $\candidate{j}$ admits a tiling with global $2n$-fold rotational symmetry.
\end{lemma}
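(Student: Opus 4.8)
The plan is to establish the two assertions of the lemma in turn: first the primitivity of $\candidate{j}$, and then the construction of a $2n$-fold symmetric tiling in its tiling space. Throughout I would take $j$ large enough that the prefix $pref_j(\omega)$ — and hence the palindromic edgeword $u_{(j)}$ — already contains at least one occurrence of every rhombus type $1,3,\dots,n-2$; this is possible since each type $2i+1$ occurs in the billiard word $\omega$ with positive frequency $\gamma_i/\|\gamma\|$. I would also record that $\omega_0=1$, because $\gamma_0=\cos\tfrac{\pi}{2n}$ is the largest coordinate of $\gamma$, so the first hyperplane that $\Gamma_{\frac12}$ meets from its base point $(\tfrac12,\dots,\tfrac12)$ is of type $H_{0,\cdot}$. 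Well-definedness (Lemma~\ref{lem:biglemma2}) lets me treat $\candidate{j}$ as a genuine substitution.

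For primitivity I would argue by a \emph{direction--mixing} count built on the elementary--matrix description of Definition~\ref{def:elementary_matrices}. Along an edge of direction $\vec v_a$ the substitution places, for each type $2i+1$ occurring in $u_{(j)}$, a rhombus using the two edge directions $a+\fst{i}$ and $a+\scd{i}$ modulo $n$. Since $\gcd(\lceil \tfrac n2 \rceil,n)=1$ and $\fst{i}+\scd{i'}\equiv 0$ is impossible for $i,i'\in\{0,\dots,\tfrac{n-3}{2}\}$, the residues $\{\fst{i}\}\cup\{\scd{i}\}$ are $n-1$ distinct classes, missing exactly one class of $\mathbb{Z}/n$; but two non-parallel edges of a single rhombus miss \emph{different} classes, so $\candidate{j}(t)$ already uses all $n$ edge directions, for every tile $t$. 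Feeding this back: for every direction $\vec v_c$ there is a rhombus of $\candidate{j}(t)$ carrying a $\vec v_c$-edge, and substituting along it produces, from the leading letter $\omega_0=1$ (for which $\fst{0}=0$), a narrow rhombus of orientation $c$; hence $\candidate{j}^2(t)$ contains a narrow rhombus in every one of the $n$ orientations. One further step spreads all types over all orientations: substituting the $\vec v_c$-edge of a narrow rhombus of orientation $c$ yields, for each $i$, a type-$(2i+1)$ rhombus whose direction pair is the shift of $\{\fst i,\scd i\}$ by $c$, so as $c$ ranges over $\mathbb{Z}/n$ we obtain all $n\lfloor \tfrac n2\rfloor=\binom{n}{2}$ prototiles. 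Thus $\candidate{j}^3(t)\supseteq\tileset$ for every $t$, which is primitivity. I expect this covering argument — in particular the point that the single direction missed along one edge is recovered from the other — to be the main obstacle; everything else rests on the rotational (circulant) structure already used in Lemma~\ref{lem:coef_rhomb_subrosa}.

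For the symmetric tiling I would imitate the $n=5$ construction of Section~\ref{sec:planar}. Let $S_n$ be the corolla of $2n$ narrow rhombi meeting at their $\tfrac{\pi}{n}$-corners around a central vertex $O$. Because $\omega_0=1$, both edges emanating from every corner of every metatile begin with a narrow rhombus, and the tile filling the $\tfrac{\pi}{n}$-tip of a narrow corner is itself a single narrow rhombus with its $\tfrac{\pi}{n}$-angle at the tip. The $2n$ narrow corners that meet at the centre of $\candidate{j}(S_n)$ therefore close up ($2n\cdot\tfrac{\pi}{n}=2\pi$) into a copy of $S_n$, i.e. $S_n\subseteq\candidate{j}(S_n)$. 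The nested patches $\candidate{j}^k(S_n)$ increase to a tiling $\mathcal{T}^\infty:=\lim_k \candidate{j}^k(S_n)$, which covers the plane since $\candidate{j}$ expands by $|\lambda_0|>1$ on $\e{0}$ (Lemma~\ref{lemma:planarrosa_eigenvalues}, Lemma~\ref{lem:biglemma1}), and which satisfies $\candidate{j}(\mathcal{T}^\infty)=\mathcal{T}^\infty$.

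Finally I would collect the conclusions. As $\candidate{j}$ commutes with the rotation by $\tfrac{\pi}{n}$ (its boundary is rotation-invariant by construction) and $S_n$ is $2n$-fold symmetric about $O$, each $\candidate{j}^k(S_n)$ — and hence $\mathcal{T}^\infty$ — is invariant under the rotation of angle $\tfrac{\pi}{n}$ about $O$, giving global $2n$-fold rotational symmetry. Being the fixed point of the primitive substitution $\candidate{j}$, $\mathcal{T}^\infty$ lies in its tiling space, so every patch of $\mathcal{T}^\infty$ is legal and, by the primitivity–recurrence Proposition of Section~\ref{sec:settings}, $\mathcal{T}^\infty$ is uniformly recurrent. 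Combined with Lemma~\ref{lem:biglemma1}, this exhibits for all large $j$ a legal discrete plane of slope $\e{0}$ with global $2n$-fold rotational symmetry, which is exactly the content of the lemma.
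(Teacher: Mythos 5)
Your overall strategy is the paper's: primitivity from the fact that the palindromic edgeword begins with a narrow rhombus and spreads all edge directions, then a fixed-point tiling grown from the star seed $S_n$, with the $2n$-fold symmetry coming from rotation-equivariance of $\candidate{j}$. The primitivity half is essentially sound (you reach primitivity at order $3$ where the paper gets it at order $2$; either suffices). But the second half has a genuine gap, and it is precisely the point the paper spends most of its proof on. Your sentence ``Being the fixed point of the primitive substitution $\candidate{j}$, $\mathcal{T}^\infty$ lies in its tiling space, so every patch of $\mathcal{T}^\infty$ is legal'' is unjustified and false in general. Every patch of $\mathcal{T}^\infty$ lies in some $\candidate{j}^k(S_n)$, but $\candidate{j}^k(S_n)$ is a union of $2n$ metatiles $\candidate{j}^k(r)$, $r\in S_n$, glued around the central vertex; a patch straddling the centre --- starting with $S_n$ itself --- is not thereby contained in any single metatile $\candidate{j}^m(t)$, which is what legality requires. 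Primitivity does not repair this: it puts every \emph{prototile} in every high-order metatile, not the specific corolla pattern $S_n$. The gap matters because ``admits'' here means \emph{legal for} $\candidate{j}$: the enclosing proposition needs a tiling legal for $\candidate{j}$ so that Proposition~\ref{prop:eigenvalue_planarity} (which speaks only of legal tilings) can be invoked to conclude the symmetric tiling is a discrete plane of slope $\e{0}$. The paper closes the gap with a dedicated argument: at a corner of angle $\tfrac{k\pi}{n}$ of any metatile, the two edgeword endpoints contribute narrow rhombuses and the leftover angle $\tfrac{(k-1)\pi}{n}$ strictly decreases under iteration, so after finitely many iterations every corner is surrounded by narrow rhombuses only; hence an interior vertex of $\candidate{j}(r_0)$ becomes the centre of a full corolla and $S_n$ appears in $\candidate{j}^{n}(r_0)$. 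Once $S_n$ is known to be legal, your reduction does work, since every patch of $\mathcal{T}^\infty$ lies in $\candidate{j}^k(S_n)\subseteq\candidate{j}^{k+n}(r_0)$.

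Two smaller points. First, in the direction-counting step the condition you state, ``$\fst{i}+\scd{i'}\equiv 0$ is impossible,'' is not what you need and is in fact false (take $i=i'+1$: then $\fst{i}+\scd{i'}=(i-i'-1)\lceil\tfrac{n}{2}\rceil\equiv 0$); the correct disjointness condition is $\fst{i}\not\equiv\scd{i'}$, i.e.\ $(i+i'+1)\lceil\tfrac{n}{2}\rceil\not\equiv 0 \pmod n$, which holds because $1\leq i+i'+1\leq n-2$. The conclusion (the residues $\{\fst{i}\}\cup\{\scd{i}\}$ form $n-1$ distinct classes) is correct. Second, the tile at the $\tfrac{\pi}{n}$-tip of a narrow metatile corner is not a single interior narrow rhombus: the two edgeword rhombuses straddle the two expanded edges (each half inside, shared with the adjacent metatile), and it is these $2n$ straddling rhombuses around the central vertex that close up into $S_n$. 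This imprecision does not affect the conclusion $S_n\subseteq\candidate{j}(S_n)$, but the correct picture is the one needed for the corner argument you are missing.
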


\begin{proof}[Proof of Lemma \ref{lem:biglemma3}]
  Assume that $j$ is an integer such that $\candidate{j}$ is well-defined and $j> N$ where $N$ is the length of the shortest prefix of $\omega$ that contains at least one occurrence of each letter.
  \begin{itemize}
  \item The substitution is primitive.
    Indeed, take a tile $t$. Along each side of the boundary of $\candidate{j}(t)$ we have at least one rhombus of each type (but in only one orientation).
    On the first metaedge of the boundary we have all directions of edges but one (the one which is perpendicular to the edge of the metatile). On the second metaedge
    of the boundary  we have all directions except the one perpendicular to this second edge of the metatile. So altogether the two sides  contain all directions of edges.
    Now, if we look at the rhombuses on the boundary of $\candidate{j}^2(t)$ we have every rhombus in every orientation. See Figure
    \ref{fig:primitivity} for the illustration for $n=5$.
    So for any rhombus tile $t$, the image $\candidate{j}^2(t)$ contains every rhombus tile in every orientation,
    which means that $\candidate{j}$ is primitive of order 2.
    \begin{figure}
      \includegraphics[width=\textwidth]{}
      \caption{Primitivity for $n=5$, in thick lines one occurrence of each orientation of each rhombus.}
      \label{fig:primitivity}
    \end{figure}
    This implies that any tiling admissible for $\candidate{j}$ is uniformly recurrent.
    Note that this is indifferent to the interior of the metatiles and is only due to the edgeword of the substitution.
  \item Let the $n$-star denoted by $S_n$ be the pattern of a corolla of rhombuses of angle $\tfrac{\pi}{n}$ around a vertex, as in Figure \ref{fig:star_patterns_579}.
    \begin{figure}
    \center  \includegraphics[width=10cm]{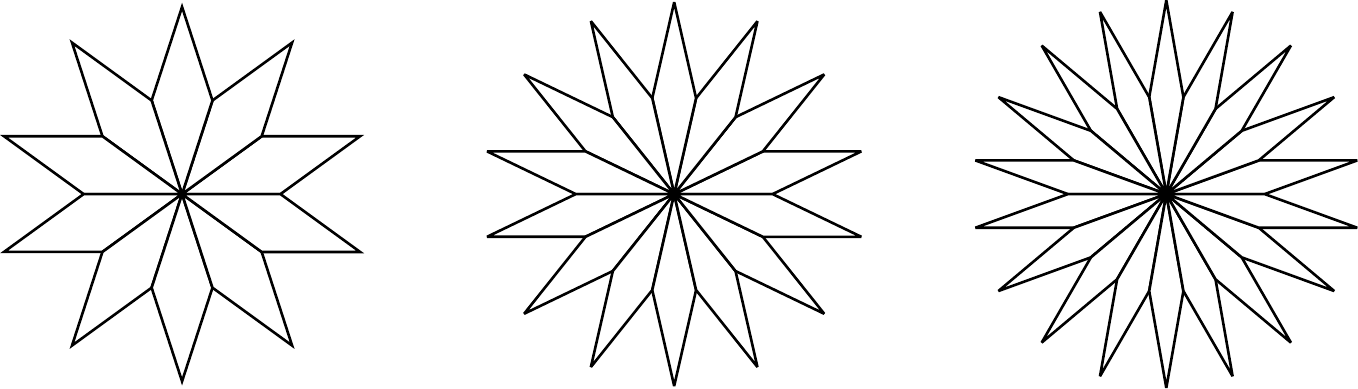}
    \caption{The star patterns $S_5$, $S_7$ and $S_9$.}
    \label{fig:star_patterns_579}
    \end{figure}
    First of all, $S_n$ appears at the centre in $\candidate{j}(S_n)$. Indeed, in the narrow angle of the rhombus of type 1 (\ie, of angle $\tfrac{\pi}{n}$), there is a portion of the star as illustrated in Figure \ref{fig:star_rhombus}.
    By immediate recurrence it follows that $\candidate{j}^k(S_n)$ is at the centre of $\candidate{j}^{k+1}(S_n)$.
    \begin{figure}
      \center  \includegraphics[width=10cm]{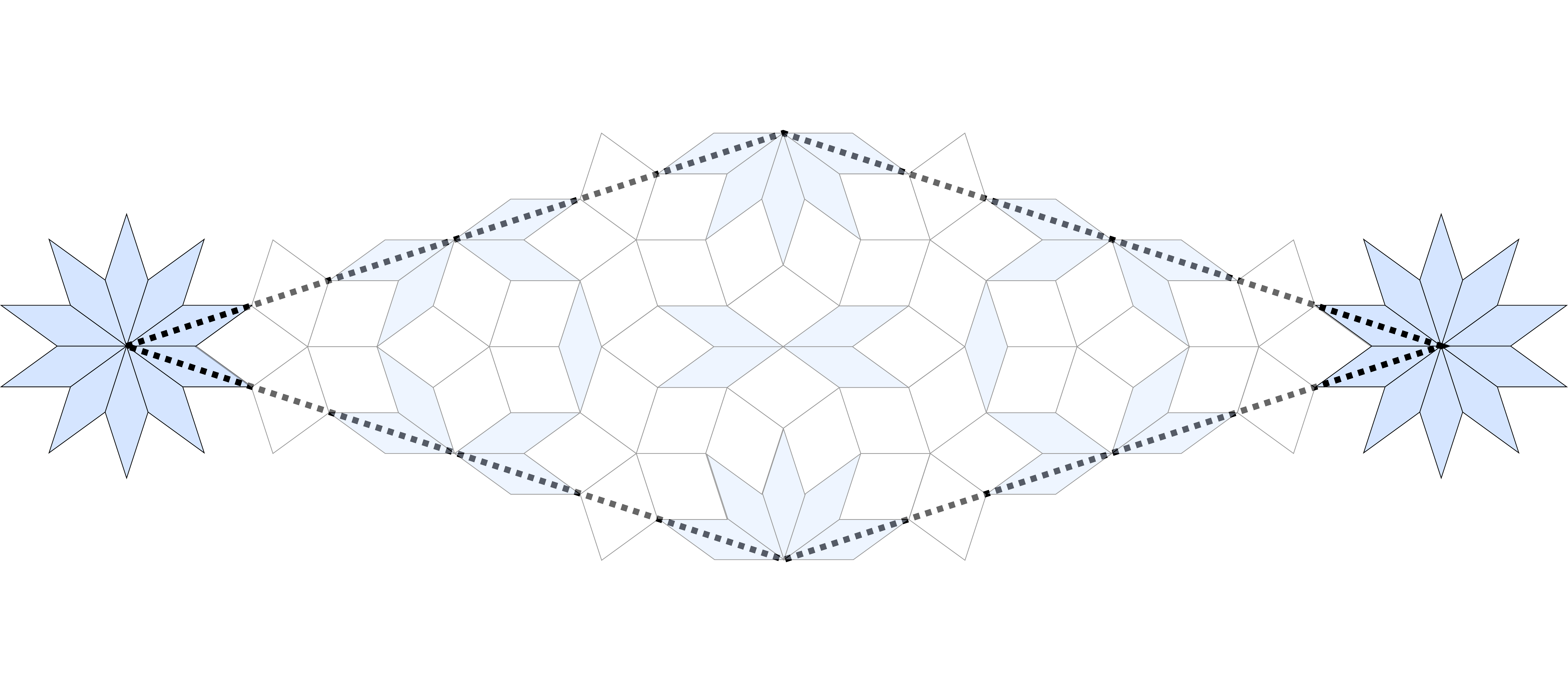}
      \caption{In the narrow corner of the image by $\candidate{j}$ of the rhombus of angle $\tfrac{\pi}{n}$ we always have a portion of the star $S_n$, here for $n=5$.}
      \label{fig:star_rhombus}
    \end{figure}
    But this is not yet enough to prove that $S_n$ is admissible for $\candidate{j}$. We need $S_n$
     to appear in some (high order) metatile.

    Observe that the narrow rhombus $r_0$ is always the first and last rhombus in the edgeword.
    It implies that at a vertex of a meta-rhombus at least two rhombuses $r_0$ meet, corresponding to the two edges of the metatile. In addition, if the angle of the meta-rhombus is $\tfrac{k\pi}{n}$, there are some rhombuses whose angles add up to $\tfrac{(k-1)\pi}{n}$. See Figure \ref{fig:metatiles_corners} for an illustration.
    \begin{figure}
    \center
    \includegraphics[width=\textwidth]{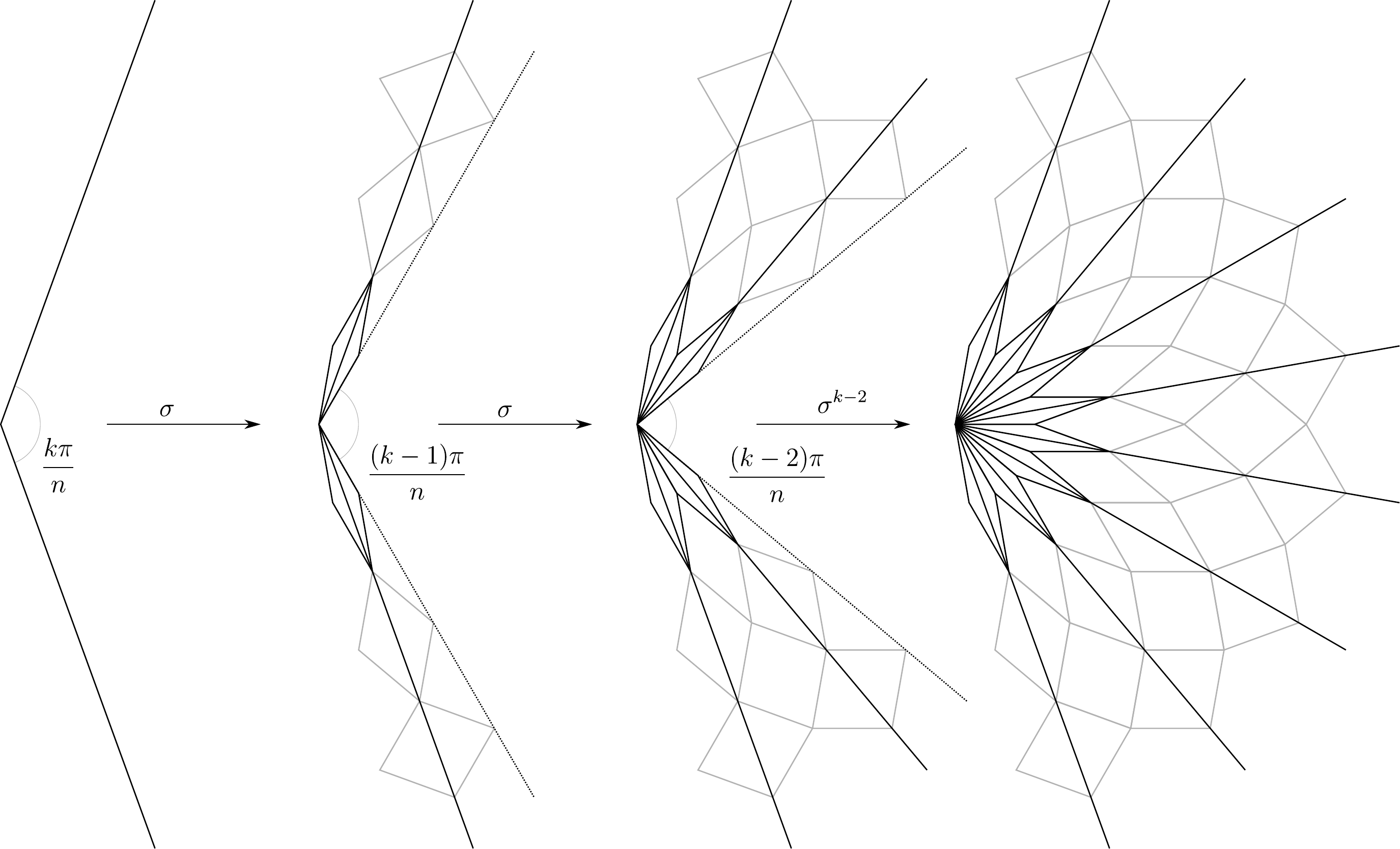}
      \caption{The angles in the corner of metatiles under the iterations of the substitution.}
      \label{fig:metatiles_corners}
    \end{figure}
    If we iterate the substitution, in the next step
    we have four rhombuses $r_0$ and some rhombuses of sum of angles at most $\tfrac{(k-2)\pi}{n}$.
    Eventually, for any rhombus, in any corner of $\candidate{j}^{n-1}$ there are only narrow rhombuses. This means that in $\candidate{j}^n(r_0)$ the star $S_n$ appears. Indeed, take any interior vertex in $\sigma(r_0)$. This vertex is surrounded by rhombuses. Now iterate $n-1$ times the substitution centered on this vertex. The vertex becomes the centre of the corolla $S_n$. So $S_n$ is in $\candidate{j}^n(r_0)$.

    Overall the star $S_n$ is legal for $\candidate{j}$ and stable under $\candidate{j}$.

  \item
    Let us define the limit tiling from seed $S_n$ as
    $$\tiling_{\infty} := \lim\limits_{k\to\infty} \candidate{j}^k(S_n).$$
    The tiling $\tiling_{\infty}$ is well defined because the sequence $\left(\candidate{j}^k(S_n)\right)_{k\in\mathbb{N}}$ is increasing for the inclusion, and it is a tiling such that for any $k, \ \candidate{j}^k(S_n)$ is the central patch of the tiling.
    And since every $\candidate{j}^k(S_n)$ has $2n$-fold rotationnal symmetry around its center, $\tiling_\infty$ also has global $2n$-fold rotationnal symmetry.

  \end{itemize}

\end{proof}

  For a step-by-step description of the Planar Rosa construction for small $n$ see \cite[\S 6]{lutfalla2021thesis}.


\appendix
\section*{Appendices}

\begin{lemma}[Eigenvalues of the elementary matrices]
  \label{lemma:appendices_eigen_elem_mat}
  Let $n$ be an odd integer and $i\in\{0,\dots, \lfloor\frac{n}{2}\rfloor -1\}$.
  The elementary matrix $M_i(n)$ defined in Definition \ref{def:elementary_matrices} has eigenspaces $\Delta$ and $\e{j}$ for $0\leq j < \lfloor\frac{n}{2}\rfloor$ with eigenvalue $\lambda_\Delta=0$ and
  $$\lambda_{i,j}(n)= 2\cos\left(\frac{(2j+1)(2i+1)\pi}{2n}\right)e^{- \imag \frac{(2j+1)\pi}{2n}} $$
\end{lemma}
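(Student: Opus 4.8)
The plan is to read off the two nonzero circulant coefficients of $M_i(n)$, feed them into Proposition \ref{prop:circulant}, and simplify the resulting two-term exponential sum; the only real work is a trigonometric/exponential collapse, and the only real danger is getting the sign of the argument right.

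First I would translate the geometric description of $M_i(n)$ into an explicit coefficient list. Writing $M_i(n)=\circulant{(m_0,\dots,m_{n-1})}$ and recalling that the first-column entry in row $r$ of a circulant matrix equals $m_{(n-r)\bmod n}$, the definition produces exactly two nonzero coefficients. Putting $h:=\lceil n/2\rceil=\tfrac{n+1}{2}$, the row $\fst{i}=ih\bmod n$ gives $m_{(-ih)\bmod n}=(-1)^i$ and the row $\scd{i}=-(i+1)h\bmod n$ gives $m_{((i+1)h)\bmod n}=(-1)^{i+1}$, with all other $m_k=0$. A quick check against $M_0(5)$ and $M_1(5)$ from Section \ref{sec:planar} confirms these indices. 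I would also note that these two positions are genuinely distinct: $h$ is invertible modulo the odd $n$ (as $2h\equiv 1$), and $-ih\equiv(i+1)h$ would force $2i+1\equiv 0\pmod n$, impossible for $0\le i<\lfloor n/2\rfloor$.

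Next I would apply Proposition \ref{prop:circulant}. For the line $\Delta$ I take $\zeta=1$, so $\lambda_\Delta=(-1)^i+(-1)^{i+1}=0$. For the plane $\e{j}$ I use the root of unity $\zeta=e^{-\imag 2(2j+1)\pi/n}$, whose real and imaginary parts span $\e{j}$ and whose sign convention reproduces the argument claimed in the lemma (equivalently, one may invoke the left-eigenvector formula of Proposition \ref{prop:circulant}). This yields
\[
\lambda_{i,j}(n)=(-1)^i\zeta^{-ih}+(-1)^{i+1}\zeta^{(i+1)h}.
\]

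The core step, and the one I expect to be the main obstacle because of the bookkeeping of signs, is the simplification of this expression. Setting $\xi:=e^{\imag(2j+1)\pi/n}$ and using that $n$ is odd, I would first establish $\zeta^h=-\bar{\xi}$ (the factor $-1$ coming from $e^{-\imag(2j+1)\pi}=-1$). Substituting and absorbing the powers of $-1$ makes the prefactors $(-1)^i$ cancel, leaving the clean form $\lambda_{i,j}(n)=\xi^{i}+\bar{\xi}^{\,i+1}$. Factoring out $e^{-\imag(2j+1)\pi/(2n)}$ then converts the bracket into $e^{\imag(2i+1)(2j+1)\pi/(2n)}+e^{-\imag(2i+1)(2j+1)\pi/(2n)}=2\cos\frac{(2i+1)(2j+1)\pi}{2n}$, which is exactly the asserted eigenvalue. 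I would close by remarking that the modulus $\bigl|2\cos\frac{(2i+1)(2j+1)\pi}{2n}\bigr|$ is insensitive to the remaining root-of-unity sign ambiguity, which is all that the downstream eigenvalue-vector computations in Lemma \ref{lemma:planarrosa_eigenvalues} actually require.
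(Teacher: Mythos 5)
Your proposal is correct and takes essentially the same route as the paper: both proofs feed the two nonzero circulant coefficients of $M_i(n)$ into Proposition~\ref{prop:circulant} (taking $\zeta=1$ for $\Delta$), cancel the $(-1)^i$ prefactors using $\lceil n/2\rceil=\tfrac{n+1}{2}$ together with the oddness of $n$ and $2j+1$, and factor out the half-angle phase $e^{-\imag\frac{(2j+1)\pi}{2n}}$ to produce the cosine. The only cosmetic difference is that you extract the coefficient list explicitly and use the right-eigenvector formula with the conjugate root $e^{-\imag\frac{2(2j+1)\pi}{n}}$, whereas the paper applies its left-eigenvector formula to $z=e^{\imag\frac{2(2j+1)\pi}{n}}$; as you correctly note, these conventions give the same eigenvalue (and in any case the same modulus, which is all that is used downstream).
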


\begin{proof}
  Let $z$ be a complex number such that $z^n=1$ and $Z$ be the vector $(z^k)_{0\leq k < n}$.
  We have \[ Z\cdot M_i(n) = (-1)^i(z^{\fst{i}} - z^{\scd{i}})\cdot Z\]
  With $z=(e^{\imag\frac{2(2j+1)\pi}{n}})$ we get that $\e{j}$ is eigenspace of $M_i(n)$ with eigenvalue $\lambda_{i,j}(n)$
  \begin{align*}\lambda_{i,j}(n) &= (-1)^i\left(e^{\imag\frac{2(2j+1)\fst{i}\pi}{n}}-e^{\imag\frac{2(2j+1)\scd{i}\pi}{n}}\right) \\
    &= (-1)^ie^{\imag\frac{2(2j+1)i\frac{n+1}{2}\pi}{n}} + (-1)^{i+1}e^{-\imag\frac{2(2j+1)(i+1)\frac{n+1}{2}\pi}{n}} \\
    &= (-1)^ie^{\imag\frac{(2j+1)i(n+1)\pi}{n}} + (-1)^{i+1}e^{-\imag\frac{(2j+1)(i+1)(n+1)\pi}{n}} \\
    &= (-1)^ie^{\imag(2j+1)i\pi}e^{\imag\frac{(2j+1)i\pi}{n}} +(-1)^{i+1}e^{-\imag(2j+1)(i+1)\pi}e^{-\imag\frac{(2j+1)(i+1)\pi}{n}}\\
    &= (-1)^i(-1)^ie^{\imag\frac{(2j+1)i\pi}{n}} + (-1)^{i+1}(-1)^{i+1}e^{-\imag\frac{(2j+1)(i+1)\pi}{n}}\\
    &= e^{\imag\frac{(2j+1)i\pi}{n}} + e^{-\imag\frac{(2j+1)(i+1)\pi}{n}}\\
    &= \left(2\cos\left(\frac{(2j+1)(2i+1)\pi}{2n}\right) \right)e^{- \imag \frac{(2j+1)\pi}{2n}} \end{align*}
  With $z=1$ we get that $\Delta$ is an eigenspace with eigenvalue $0$.
\end{proof}

\begin{lemma}[Trigonometric manipulations]
  \label{lemma:appendices_weird_trig_sum}
  Let $k$ be an integer and $i\in\{0,\dots k-1\}$.
  Let us define $C_{j,k}$ by
  $$C_{j,k}:=\sum\limits_{i=0}^{k-1}4(k-i)\cos\left((2i+1)\theta_{j,k}\right),$$
  with $\theta_{j,k}:=\frac{(2j+1)\pi}{2(2k+1)}$.
  We have
  $$C_{j,k}\cdot \sin^2(\theta_{j,k}) = \cos(\theta_{j,k}).$$

\end{lemma}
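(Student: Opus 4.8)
The plan is to follow the two-step reduction indicated in the proof of Lemma~\ref{lem:weirdtrigsums}, but to organise the cancellation as a single summation by parts. Write $\theta$ for $\theta_{j,k}$ throughout. First I would rewrite each summand of $C_{j,k}\sin^2\theta$: using $2\sin^2\theta = 1-\cos(2\theta)$ together with the product-to-sum identity $2\cos\alpha\cos\beta = \cos(\alpha+\beta)+\cos(\alpha-\beta)$, one obtains
$$ 4\cos\bigl((2i+1)\theta\bigr)\sin^2\theta = 2\cos\bigl((2i+1)\theta\bigr) - \cos\bigl((2i+3)\theta\bigr) - \cos\bigl((2i-1)\theta\bigr). $$
Setting $d_i := 2\cos((2i+1)\theta) - \cos((2i+3)\theta) - \cos((2i-1)\theta)$, this gives $C_{j,k}\sin^2\theta = \sum_{i=0}^{k-1}(k-i)\,d_i$. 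This is exactly the passage to the intermediate expression in the main text, and it is the routine trigonometric part of the argument.

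The key observation I would then exploit is that $d_i$ is itself a difference. Introducing $g_i := \cos((2i+1)\theta) - \cos((2i-1)\theta)$, a one-line check gives $d_i = g_i - g_{i+1}$; equivalently, $d_i$ is (minus) the discrete second difference of the sequence $i\mapsto \cos((2i+1)\theta)$. Applying Abel summation (summation by parts) to $\sum_{i=0}^{k-1}(k-i)(g_i-g_{i+1})$, and using that the weight $k-i$ has constant increment $-1$, the intermediate boundary terms collapse and one is left with the compact expression
$$ C_{j,k}\sin^2\theta = k\,g_0 - \sum_{i=1}^{k-1} g_i - g_k. $$

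Finally I would evaluate these three pieces. One has $g_0 = \cos\theta - \cos(-\theta) = 0$, so the $k\,g_0$ term disappears regardless of $k$. The middle sum telescopes, since each $g_i$ is a difference of consecutive odd-multiple cosines, giving $\sum_{i=1}^{k-1} g_i = \cos((2k-1)\theta) - \cos\theta$. For $g_k = \cos((2k+1)\theta) - \cos((2k-1)\theta)$ the decisive input is that $(2k+1)\theta = \tfrac{(2j+1)\pi}{2}$ is an odd multiple of $\tfrac{\pi}{2}$, whence $\cos((2k+1)\theta)=0$ and $g_k = -\cos((2k-1)\theta)$. Substituting these back, the two $\cos((2k-1)\theta)$ contributions cancel and what remains is exactly $\cos\theta$, as claimed.

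I expect the recognition of the second-difference structure $d_i = g_i - g_{i+1}$ to be the only nontrivial idea: it is what converts the $(k-i)$-weighted sum into a telescoping one after a single summation by parts, avoiding any messy bookkeeping of interior terms. The one arithmetic fact that the whole identity genuinely depends on is the vanishing $\cos((2k+1)\theta)=0$ coming from $(2k+1)\theta=\tfrac{(2j+1)\pi}{2}$; this is precisely what makes the surviving boundary term $g_k$ cooperate, and I anticipate it to be the single indispensable use of the specific value of $\theta$ in the argument.
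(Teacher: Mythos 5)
Your proof is correct, and every step checks out: the product-to-sum reduction to $d_i = 2\cos((2i+1)\theta)-\cos((2i+3)\theta)-\cos((2i-1)\theta)$, the identity $d_i = g_i - g_{i+1}$ with $g_i = \cos((2i+1)\theta)-\cos((2i-1)\theta)$, the Abel summation yielding $kg_0 - \sum_{i=1}^{k-1}g_i - g_k$, and the boundary evaluation using $g_0=0$, the telescoping of the middle sum, and $\cos((2k+1)\theta)=\cos\tfrac{(2j+1)\pi}{2}=0$. The paper reaches the same intermediate expression (via complex exponentials rather than real product-to-sum identities, a cosmetic difference) and then performs the cancellation by brute force: it splits the weighted sum into three sums, reindexes each so that all terms read $\cos((2i+1)\theta)$, observes that the interior coefficients $2(k-i)-(k+1-i)-(k-1-i)$ vanish, and collects the leftover boundary terms, implicitly using the same fact $\cos\tfrac{(2j+1)\pi}{2}=0$. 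So the two arguments are the same computation organized differently: your version buys cleaner bookkeeping, since recognizing the second-difference structure turns the triple reindexing into a single summation by parts followed by a telescope, and it isolates exactly where the specific value of $\theta$ enters (only in killing $\cos((2k+1)\theta)$ inside $g_k$), whereas the paper's version requires tracking three shifted index ranges and their stray endpoint terms but needs no structural observation beyond reindexing.
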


\begin{proof}
    Let us write $\theta$ for $\theta_{j,k}$ for the sake of simplicity.
    \begingroup
    \allowdisplaybreaks
  \begin{align*}
    &C_{j,k} \sin^2\left(\theta\right) \\
    &= \sum\limits_{i=0}^{k-1}4(k-i)\cos\left(\frac{(2j+1)(2i+1)\pi}{2(2k+1)}\right)\sin^2\left(\theta\right)\\
    &= \sum\limits_{i=0}^{k-1}4(k-i)\frac{e^{\imag(2i+1)\theta} + e^{-\imag(2i+1)\theta}}{2}\left( \frac{e^{\imag\theta} - e^{-\imag\theta}}{2i}\right)^2\\
    &= \sum\limits_{i=0}^{k-1}\tfrac{k-i}{2}\left(e^{\imag(2i+1)\theta} + e^{-\imag(2i+1)\theta}\right)\left( 2 - e^{\imag2\theta} - e^{-\imag2\theta}\right)\\
    &= \sum\limits_{i=0}^{k-1}\tfrac{k-i}{2} \Big( 2e^{\imag(2i+1)\theta} + 2e^{-\imag(2i+1)\theta} -e^{\imag(2i+3)\theta} \\
    & \qquad \qquad \qquad -e^{-\imag(2i-1)\theta}  -e^{\imag(2i-1)\theta}  -e^{-\imag(2i+3)\theta}\Big)\\
    &= \sum\limits_{i=0}^{k-1}(k-i)\left( 2\cos((2i+1)\theta) - \cos((2i+3)\theta) - \cos((2i-1)\theta)  \right)\\
    &= \sum\limits_{i=0}^{k-1}(k-i)2\cos((2i+1)\theta) - \sum\limits_{i=0}^{k-1}(k-i)\cos((2i+3)\theta) \\
    & \qquad -\sum\limits_{i=0}^{k-1}(k-i)\cos((2i-1)\theta)\\
    &= \sum\limits_{i=0}^{k-1}(k-i)2\cos((2i+1)\theta) - \sum\limits_{i=1}^{k}(k+1-i)\cos((2i+1)\theta) \\
    & \qquad  -\sum\limits_{i=-1}^{k-2}(k-1-i)\cos((2i+1)\theta)\\
    &= \sum\limits_{i=0}^{k-1}(k-i)2\cos((2i+1)\theta) \\
    &\qquad - \sum\limits_{i=0}^{k-1}(k+1-i)\cos((2i+1)\theta) +(k+1)\cos\theta - \cos\tfrac{(2j+1)\pi}{2} \\
    &\qquad -\sum\limits_{i=0}^{k-1}(k-1-i)\cos((2i+1)\theta) + 0\cos((2k-1)\theta) - k\cos\theta\\
    &= \sum\limits_{i=0}^{k-1}\left(2(k-i)-(k+1-i)-(k-1-i)\right)\cos((2i+1)\theta) \\
    &\qquad +(k+1-k)\cos\theta\\
    &= \cos\theta
  \end{align*}
  \endgroup

\end{proof}

\begin{lemma}[The eigenvalue matrix $\eigenmatrix{n}$ is orthogonal up to a scalar]
  \label{lemma:orthogonal}
  ~\\
  Let $n$ be an odd integer, the  matrix $\eigenmatrix{n}$ from Definition \ref{def:eigenmatrix} page \pageref{def:eigenmatrix} is orthogonal up to a scalar.
More precisely,  $\tfrac{1}{\sqrt{n}}\eigenmatrix{n}$ is an orthogonal matrix, \ie, \[ \tfrac{1}{\sqrt{n}}\eigenmatrix{n}\cdot \tfrac{1}{\sqrt{n}}\eigenmatrix{n}^\top = \mathrm{Id}_{\floor{\frac{n}{2}}}.\]
\end{lemma}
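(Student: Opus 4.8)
The plan is to exploit the symmetry $\eigenmatrix{n}\trans=\eigenmatrix{n}$ noted right after Definition~\ref{def:eigenmatrix}: the assertion that $\tfrac{1}{\sqrt n}\eigenmatrix{n}$ is orthogonal is equivalent to $\eigenmatrix{n}^2=n\,\mathrm{Id}_{\floor{\frac{n}{2}}}$. Writing out the $(i,k)$ entry of $\eigenmatrix{n}^2$ and applying the product-to-sum identity $2\cos A\cos B=\cos(A-B)+\cos(A+B)$ with $A=\tfrac{(2i+1)(2j+1)\pi}{2n}$ and $B=\tfrac{(2k+1)(2j+1)\pi}{2n}$, everything reduces to the single trigonometric sum
$$ S(m):=\sum_{j=0}^{(n-3)/2}\cos\!\left(\frac{m(2j+1)\pi}{n}\right), $$
since the $(i,k)$ entry becomes exactly $2\bigl(S(i-k)+S(i+k+1)\bigr)$ (here $A-B$ contributes the index $i-k$ and $A+B$ the index $i+k+1$).

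The central step is to evaluate $S(m)$ for all relevant integers $m$. First I would use the involution $j\mapsto n-1-j$, under which $2j+1\mapsto 2n-(2j+1)$, so that the summand is unchanged; since $n$ is odd this pairs the indices $0,\dots,n-1$ about the fixed point $j=\tfrac{n-1}{2}$, where $2j+1=n$ contributes $\cos(m\pi)=(-1)^m$. This yields $\sum_{j=0}^{n-1}\cos\!\bigl(\tfrac{m(2j+1)\pi}{n}\bigr)=2S(m)+(-1)^m$. The full sum on the left is geometric: factoring out $e^{\imag m\pi/n}$ leaves $\sum_{j=0}^{n-1}\bigl(e^{\imag 2m\pi/n}\bigr)^{j}$, which equals $0$ when $n\nmid m$ and $n$ when $n\mid m$. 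Solving for $S(m)$ gives $S(m)=\tfrac12(-1)^{m+1}$ whenever $n\nmid m$, and $S(0)=\tfrac{n-1}{2}$.

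Finally I would substitute these values, using that the index ranges force the only possible multiple of $n$ to be $0$. On the diagonal $i=k$ one has $i-k=0$ and $1\le i+k+1=2i+1\le n-2$, so $n\nmid(i+k+1)$, and the entry is $2\bigl(\tfrac{n-1}{2}+\tfrac12\bigr)=n$. Off the diagonal, $0<|i-k|<n$ and $0<i+k+1<n$ give $n\nmid(i-k)$ and $n\nmid(i+k+1)$, so the entry is $2\bigl(\tfrac12(-1)^{i-k+1}+\tfrac12(-1)^{i+k}\bigr)=(-1)^{i-k+1}+(-1)^{i-k}=0$, since $(-1)^{i+k}=(-1)^{i-k}$. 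Hence $\eigenmatrix{n}^2=n\,\mathrm{Id}$, which is the claim.

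The main obstacle I anticipate is bookkeeping rather than any conceptual difficulty: keeping the divisibility cases for $S(m)$ straight, and checking that the ranges $1\le i+k+1\le n-2$ and $0<|i-k|<n$ genuinely exclude $n\mid m$ off the diagonal, so that the clean formula $S(m)=\tfrac12(-1)^{m+1}$ always applies there. Everything else is the standard reduction of a cosine-matrix orthogonality relation to a root-of-unity character sum.
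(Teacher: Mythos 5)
Your proof is correct, but it takes a genuinely different route from the paper's. The paper never computes the entries of $\eigenmatrix{n}\eigenmatrix{n}^\top$ directly: it introduces the full $n\times n$ DCT-III matrix $A$, cites its orthogonality as known, extracts the $\floor{\frac{n}{2}}\times n$ matrix $B$ with entries $\sqrt{2/n}\cos\frac{(2i+1)(2j+1)\pi}{2n}$ (the odd-indexed rows of $A$, so that $BB^\top=\mathrm{Id}$ is read off from $AA^\top=\mathrm{Id}$), and then folds the column range $\{0,\dots,n-1\}$ in half using the identities $B_{i,\floor{\frac{n}{2}}}=0$ and $B_{i,n-1-j}=-B_{i,j}$ to conclude $CC^\top=BB^\top$ for $C=\tfrac{1}{\sqrt n}\eigenmatrix{n}$. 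You instead prove everything from scratch: the product-to-sum identity reduces each entry of $\eigenmatrix{n}^2$ to $2\bigl(S(i-k)+S(i+k+1)\bigr)$, and you evaluate $S(m)$ by your own folding argument (the involution $j\mapsto n-1-j$) combined with the root-of-unity geometric sum. The two arguments share the same folding idea and ultimately rest on the same character-sum orthogonality, but yours is self-contained and elementary, whereas the paper's is shorter at the cost of invoking DCT-III orthogonality as a black box. Your bookkeeping is also sound: the range checks that $i-k$ and $i+k+1$ are never nonzero multiples of $n$ are exactly the ones needed, and the final cancellation $(-1)^{i-k+1}+(-1)^{i+k}=0$, using $(-1)^{i+k}=(-1)^{i-k}$, is correct.
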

\begin{proof}
  Let us first recall that $n$ is an odd integer.
  Let us define three matrices $A$, $B$ and $C$ by:
  \begin{align*}
    A&:= \left(a_i \cos\frac{i(2j+1)\pi}{2n}\right)_{0\leq i,j < n} \qquad \text{with: } a_i = \begin{cases} \sqrt{\frac{1}{n}} \text{ if } i=0\\ \sqrt{\frac{2}{n}} \text{ otherwise}\end{cases}\\
    B&:= \left(\sqrt{\frac{2}{n}}\cos\frac{(2i+1)(2j+1)\pi}{2n}\right)_{0\leq i < \floor{\frac{n}{2}}, 0\leq j < n } \\
    C&:=  \tfrac{1}{\sqrt{n}}\eigenmatrix{n} = \left(\frac{2}{\sqrt{n}}\cos\frac{(2i+1)(2j+1)\pi}{2n}\right)_{0\leq i,j < \floor{\frac{n}{2}}}
  \end{align*}

  Let us first remark that $A$ is a Discrete Cosine Transform matrix, sometimes called DCT-III, which is known to be orthogonal.
  From this we will prove that $B\cdot B^\top = \mathrm{Id}_{\floor{\frac{n}{2}}}$ and then that $C$ is orthogonal.

  Let us prove $B\cdot B^\top = \mathrm{Id}_{\floor{\frac{n}{2}}}$.
  Let us look at the $j,k$ coefficient of $B\cdot B^\top$ for $0\leq j,k < \tfrac{n}{2}$.
  \begin{align*}
    \left(B\cdot B^\top\right)_{j,k} &= \sum\limits_{i=0}^{n-1} \sqrt{\frac{2}{n}}\cos\left(\frac{(2j+1)(2i+1)\pi}{2n}\right) \sqrt{\frac{2}{n}}\cos\left(\frac{(2k+1)(2i+1)\pi}{2n}\right)\\
    &= \left(A\cdot A^\top\right)_{2j+1,2k+1}\\
    &= \begin{cases} 1 \text{ if } j = k \\ 0 \text{ otherwise} \end{cases}
  \end{align*}

  Now from the fact that $B\cdot B^\top = \mathrm{Id}_{\floor{\frac{n}{2}}}$ let us prove that $C$ is orthogonal.
  Let us first remark that for $0\leq i < \floor{\tfrac{n}{2}}$ we have
  $$B_{i,\floor{\frac{n}{2}}} = \sqrt{\frac{2}{n}}\cos\left((2i+1)\frac{\pi}{2}\right) = 0.$$
  This is due to the fact that $n$ is odd, which implies that $n = 2\floor{\tfrac{n}{2}}+1$.\\
  Let us also remark that for $0\leq i,j <\floor{\tfrac{n}{2}}$ we have
  $$\sqrt{2} B_{i,j} = C_{i,j} \qquad \text{ and } \qquad B_{i,n-1-j} = -B_{i, j}.$$
  The first equality is just a reformulation of the definition, and for the second equality let us develop $B_{i,n-1-j}$ as follows:
  \begin{align*}
    B_{i,n-1-j} &= \sqrt{\frac{2}{n}} \cos \frac{(2i+1)(2(n-1-j)+1)\pi}{2n}\\
    &= \sqrt{\frac{2}{n}} \cos \frac{(2i+1)(2n - 2 - 2j + 1)\pi}{2n}\\
    &= \sqrt{\frac{2}{n}} \cos \frac{(2i+1)(2n - (2j + 1))\pi}{2n}\\
    &= \sqrt{\frac{2}{n}} \cos\left( (2i+1)\pi - \frac{(2i+1)(2j + 1)\pi}{2n}\right)\\
    &= \sqrt{\frac{2}{n}} \cos\left( \pi - \frac{(2i+1)(2j + 1)\pi}{2n}\right)\\
    &= -\sqrt{\frac{2}{n}} \cos\left(\frac{(2i+1)(2j + 1)\pi}{2n}\right)\\
    &= -B_{i,j}.
  \end{align*}

  Now let us prove that $\left(C\cdot C^\top\right)_{j,k} = \left(B\cdot B^\top\right)_{j,k}$ for $0\leq j,k < \floor{\tfrac{n}{2}}$.
  \begingroup
    \allowdisplaybreaks
  \begin{align*}
    \left(B\cdot B^\top\right)_{j,k} &= \sum\limits_{i=0}^{n-1} B_{j,i}B_{k,i} \\
    &= \left(\sum\limits_{i=0}^{\floor{\frac{n}{2}}-1} B_{j,i}B_{k,i} \right) + B_{j,\floor{\frac{n}{2}}}B_{k,\floor{\frac{n}{2}}}  + \sum\limits_{i=\floor{\frac{n}{2}}+1}^{n-1} B_{j,i}B_{k,i} \\
    &= \left(\sum\limits_{i=0}^{\floor{\frac{n}{2}}-1} B_{j,i}B_{k,i} \right) + 0  + \sum\limits_{i=\floor{\frac{n}{2}}+1}^{n-1} B_{j,i}B_{k,i} \\
    &= \left(\sum\limits_{i=0}^{\floor{\frac{n}{2}}-1} B_{j,i}B_{k,i} \right) + \sum\limits_{i=0}^{\floor{\frac{n}{2}}-1} B_{j,n-1-i}B_{k,n-1-i} \\
    &= \sum\limits_{i=0}^{\floor{\frac{n}{2}}-1} \left( B_{j,i}B_{k,i}  + (-B_{j,i})\cdot(-B_{k,i}) \right)\\
    &= \sum\limits_{i=0}^{\floor{\frac{n}{2}}-1} 2B_{j,i}B_{k,i}\\
    &= \sum\limits_{i=0}^{\floor{\frac{n}{2}}-1} \sqrt{2}B_{j,i}\sqrt{2}B_{k,i}\\
    &= \sum\limits_{i=0}^{\floor{\frac{n}{2}}-1} C_{j,i}C_{k,i}\\
    &= \left( C\cdot C^\top\right)_{j,k}
  \end{align*}
  \endgroup

  Hence $C$ is orthogonal and $\eigenmatrix{n}$ is orthogonal up to a scalar.

\end{proof}


\bibliographystyle{alpha}
\bibliography{nfold}

\end{document}